\newcommand{\E}{{\mathbb{E}} }
\newcommand{\Prob}{\mathbb{P}\,}
\begin{document}

\title{Detecting multiple generalized change-points by isolating single ones
}


\author{Andreas Anastasiou\footnote{On behalf of all authors, the corresponding author states that there is no conflict of interest.}         \and
        Piotr Fryzlewicz\footnote{Piotr Fryzlewicz's work was supported by the Engineering and Physical Sciences Research Council grant No. EP/L014246/1.} 
}


\institute{A. Anastasiou \at
              Department of Mathematics and Statistics, University of Cyprus, P.O. Box: 20537, 1678, Nicosia, Cyprus. \\
              \email{anastasiou.andreas@ucy.ac.cy}           
           \and
           P. Fryzlewicz \at
              Department of Statistics, The London School of Economics and Political Science, Columbia House, Houghton Street, London, WC2A 2AE.
}

\date{Received: date / Accepted: date}

\maketitle

\begin{abstract}
\indent We introduce a new approach, called Isolate-Detect (ID), for the consistent estimation of the number and location of multiple generalized change-points in noisy data sequences. Examples of signal changes that ID can deal with are changes in the mean of a piecewise-constant signal and changes, continuous or not, in the linear trend. The number of change-points can increase with the sample size. Our method is based on an isolation technique, which prevents the consideration of intervals that contain more than one change-point. This isolation enhances ID's accuracy as it allows for detection in the presence of frequent changes of possibly small magnitudes. In ID, model selection is carried out via thresholding, or an information criterion, or SDLL, or a hybrid involving the former two. The hybrid model selection leads to a general method with very good practical performance and minimal parameter choice. In the scenarios tested, ID is at least as accurate as the state-of-the-art methods; most of the times it outperforms them. ID is implemented in the R packages \textbf{IDetect} and \textbf{breakfast}, available from CRAN.
\keywords{Segmentation \and symmetric interval expansion \and threshold criterion \and Schwarz information criterion \and SDLL}
\end{abstract}

\section{Introduction}
\label{sec:intro}
Change-point detection is an active area of statistical research that has attracted a lot of interest in recent years.
Our work's focus is on \textit{a posteriori} change-point detection, where the aim is to estimate the number and locations of certain changes in the behaviour of the data. 
We work in the model
\begin{equation}
\label{our_model}
X_t = f_t + \sigma\epsilon_t, \quad t=1,2,\ldots,T,
\end{equation}
where $X_t$ are the observed data and $f_t$ is a one-dimensional, deterministic signal with structural changes at certain points. Two examples are: change-points in the level when $f_t$ is seen as piecewise-constant, and change-points in the first derivative when $f_t$ is piecewise-linear. We highlight, however, that our methodology and analysis apply to more general scenarios, for instance the detection of knots in a piecewise polynomial signal of order $k$, where $k$ is not necessarily equal to zero (piecewise-constant mean) or one (piecewise-linear mean). The number $N$ of change-points as well as their locations $r_1, r_2, \ldots, r_N$ are unknown and our aim is to estimate them. In addition, $N$ can grow with $T$. The random variables $\epsilon_t$ in \eqref{our_model} have mean zero and variance one; further assumptions will be given in Section \ref{subsec:theory}.

When $f_t$ is assumed to be piecewise-constant, the existing change-point detection techniques are mainly split into two categories based on whether the change-points are detected all at once or one at a time. The former category mainly includes optimization-based methods, in which the estimated signal is chosen based on its least squares or log-likelihood fit to the data, penalized by a complexity rule in order to avoid overfitting. The most common example of a penalty function is the Schwarz Information Criterion (SIC); see \cite{Yao} for details. To solve the implied penalization problem, dynamic programming approaches, such as the Segment Neighborhood (SN) and Optimal Partitioning (OP) methods of \cite{Auger_Lawrence} and \cite{Jackson}, have been developed. In an attempt to improve on OP's computational cost, \cite{Killick_PELT} introduce the PELT method, based on a pruning step applied to OP's dynamic programming approach. A non-parametric adaptation of PELT is given in \cite{Haynes_npPelt}. \cite{Rigaill} introduces an improvement over classical SN algorithms, through a pruning approach called PDPa, while \cite{Maidstone_et_al} give two algorithms by combining ideas from PELT and PDPa. \cite{SMUCE} propose the simultaneous multiscale change-point estimator (SMUCE) for the change-point problem in the case of exponential family regression; solving an optimization problem is also required. The FDRSeg method of \cite{LiMunk} is a combination of False Discovery Rate (FDR) control and global segmentation methods in a multiscale way; the change-points are again detected all at once.

In the latter category, in which change-points are detected one at a time, a popular method is binary segmentation, which performs an iterative binary splitting of the data on intervals determined by the previously obtained splits. \cite{Vostrikova} introduces and proves the validity of binary segmentation in the setting of change-point detection for piecewise-constant signals. 
The main advantages of binary segmentation are its conceptual simplicity and low computational cost. However, at each step of the algorithm, binary segmentation looks for a single change-point, which leads to its suboptimality in terms of accuracy, especially for signals with frequent change-points. Some variants of binary segmentation that work towards solving this issue are the Circular Binary Segmentation (CBS) of \cite{Olshen_Venkatraman}, the Wild Binary Segmentation (WBS) of \cite{Fryzlewicz_WBS} as well as its second version (WBS2) of \cite{Fryzlewicz_WBS2}, the Narrowest-Over-Threshold (NOT) method of \cite{NOT_paper}, and the Seeded Binary Segmentation (SeedBS) of \cite{Seeded_2020}. CBS searches for at most two change-points at each step. Instead of initially calculating the contrast value for the whole data sequence, WBS and NOT are based on a random draw of subintervals of the domain of the data, on which an appropriate statistic is tested against a threshold. The draw of all the subintervals takes place at the beginning of the algorithm. In contrast, WBS2 draws first only a small number, $\tilde{M}$, of data subsamples. It then uses the first change-point candidate to split the data into two parts, and again recursively draws the same number $\tilde{M}$ of subsamples to the left and to the right of this change-point candidate, and so on. A major difference between WBS and WBS2 is that the latter adaptively decides where to recursively draw the next subsamples, based on the change-point candidates detected so far; this adds to the detection power of the method. SeedBS is an approach, similar to WBS and NOT, that relies instead on a deterministic construction of background intervals in which single change points are searched.
 Apart from binary-segmentation-related approaches, the category in which the change-points are detected one at a time also includes methods that control the False Discovery Rate. For instance, the ``pseudo-sequential'' (PS) procedure of \cite{Venkatraman}, as well as the CPM method of \cite{Ross_CPM} are based on an adaptation of online detection algorithms to a posteriori situations and work by bounding the Type I error rate of falsely detecting change-points. Some methods do not fall in either category. For example, the tail-greedy algorithm in \cite{Fryzlewicz_tail-greedy} achieves a multiscale decomposition of the data using Unbalanced Haar wavelets in an agglomerative way. In addition, \cite{Kirch} use moving sum (MOSUM) statistics in order to detect multiple change-points. For a more thorough review of the literature on the detection of multiple change-points in the mean of univariate data sequences, see \cite{Cho_Kirch} and \cite{YiYu}. \cite{Truong_Oudre_Vayatis} also present a survey of various a posteriori change-point detection algorithms; the focus is, however, on multivariate time series.


Beyond the piecewise-constant signal model, existing methods mainly minimize the residual sum of squares taking into account a penalty, with the most common being the SIC. This is used in \cite{BaiPerron}, in the trend filtering (TF) approach \citep{KimTF,TibshiraniTF}, and in the dynamic programming algorithm CPOP \citep{FearnheadCPOP}. 
\cite{Friedman} introduces the Multivariate Adaptive Regression Splines (MARS) method for regression analysis based on splines with the number and the location of the knots being determined by the data. \cite{Spiriti} propose two methods for optimizing knot locations in spline smoothing, where either the number of knots is fixed or an upper bound for it needs to be given. The NOT approach \citep{NOT_paper} detects change-points one at a time in various scenarios including piecewise-linear mean signals.

In general, change-point detection becomes easier in situations where there is at most one change-point to be detected in a given interval; in such cases the detection power of the contrast function (more details are in Section \ref{subsec:theory}) is maximised. Therefore, it makes sense to decouple the multiple change-point detection problem into many single change-point detections. To achieve this, we propose a generic technique, Isolate-Detect (ID), for generalized change-point detection in various different structures, such as piecewise-constant or piecewise-linear signals. The concept behind ID is simple and is split into two stages; firstly, the isolation of each of the true change-points within subintervals of the domain $[1,2,\ldots,T]$, and secondly their detection. From now on, the terms \textit{subinterval} and \textit{interval} will be used interchangeably. Although a detailed explanation of our methodology is provided in Section \ref{subsec:methodology}, the basic idea is that for an observed data sequence of length $T$ and with $\lambda_T$ a positive constant, ID first creates two ordered sets of $K = \left\lceil T/\lambda_T \right\rceil$ right- and left-expanding intervals as follows. The $j^{th}$ right-expanding interval is $R_j = [1,\min\left\lbrace j\lambda_T, T\right\rbrace]$, while the $j^{th}$ left-expanding interval is $L_{j} = [\max\left\lbrace 1,T - j\lambda_T + 1\right\rbrace,T]$. We collect these intervals in the ordered set $S_{RL} = \left\lbrace R_1, L_1, R_2, L_2, \ldots,R_K,L_K\right\rbrace$. For a suitably chosen contrast function (more details are in Section \ref{subsec:theory}), ID identifies the point with the maximum contrast value in $R_1$. If its value exceeds a threshold, denoted by $\zeta_T$, then it is taken as a change-point. If not, then the next interval in $S_{RL}$ is tested. Upon detection, ID makes a new start from the end-point (or start-point) of the right- (or left-) expanding interval where the detection occurred. Upon correct choice of $\zeta_T$, ID ensures that we work on intervals with at most one change-point, which was our aim.

We would like to highlight the importance of the change-point isolation aspect present in our method as explained in the previous paragraph. There are various advantages. First, it enables detection in higher-order polynomial signals. Second, it is carried out in a fixed and systematic way, which eliminates any randomness in the selection of the intervals and, by extension, in the final results. Third, the way the isolation is carried out in ID makes it quicker than other localisation-focused algorithms, such as NOT, due to the fact that it needs to work on fewer intervals; more details on this advantage of our proposed methodology are in Section \ref{subsec:Compcost}. We note here that, even though the default methodology described in \cite{Fryzlewicz_WBS} and \cite{NOT_paper} is based on the construction of random intervals, the same approaches can be applied to a fixed grid of intervals.
However, as noted in \cite{Seeded_2020}, the latter implementation can be quite slow. Fourth, the pseudo-sequential nature of the attempted isolation, makes our proposed methodology suitable for online change-point detection. This is one of the various different ways that ID is different from existing techniques in the literature which also attempt change-point isolation; a more thorough comparison with seemingly similar, but still different, methods is given in the next section.

The paper is organized as follows. Section \ref{sec:motivation_illustration} is a motivating illustration of our proposed method through examples. Section \ref{sec:methodology_theory} gives a formal explanation of the ID methodology along with two different scenarios of use and the associated theory. In Section \ref{sec:variants}, we first discuss the computational aspects of ID and the choice of parameter values. ID variants which lead to improved practical performance are also explained. In Section \ref{sec:simulation}, we provide a thorough simulation study to compare ID with state-of-the-art methods. Real-life data examples are provided in Section \ref{sec:real_data}. The paper is concluded with reflections on the proposed method. The theoretical, as well as practical, merits and weaknesses of ID when compared against state-of-the-art methods are discussed throughout the paper. However, for the sake of clarity these are also brought together in Section \ref{sec:Discussion}. ID is implemented in the R packages \textbf{IDetect} and \textbf{breakfast}, available from CRAN.

\section{Motivating illustration of Isolate-Detect}
\label{sec:motivation_illustration}
The fact that each change-point is sequentially detected using an interval that contains no other change-points leads to high detection power, especially in difficult structures, such as limited spacings between consecutive change-points and/or higher-order piecewise-polynomial signals. Two examples follow in order to make clear the importance of the isolation step and to illustrate the power of ID compared to other change-point detection methods (some of those also attempt localisation) in capturing even small movements in the data that are close to each other. Table \ref{table_first_comparison} provides results on 100 replications of the continuous piecewise-linear signal (S1) and the piecewise-constant signal (S2), where
\begin{itemize}[leftmargin = 0.32in]
\item[(S1)] $T = 5200$, with 21 change-points in the slope at locations $100, 140, \cdots, 900$. The standard deviation is $\sigma=0.25$;
\item[(S2)] $T = 5200$, with 21 change-points in the mean at locations $100, 105, \cdots, 200$. The standard deviation is $\sigma=0.1$.
\end{itemize}
As a measure of the accuracy of the estimated number we give $\hat{N} - N$, while as a measure of the accuracy of the detected locations, we give Monte-Carlo estimates of the mean square error, ${\rm MSE} = T^{-1}\sum_{t=1}^{T}\E\left(\hat{f}_t - f_t\right)^2$. The methods compared are ID, NOT, and MARS for (S1) and ID, WBS, NOT, and PELT for (S2). For the ID related results in Table \ref{table_first_comparison}, we used the hybrid version of ID explained in Section \ref{subsec:hybrid}. The choice of the parameters is described in Section \ref{subsec:parameter_choice}. As already mentioned, WBS and NOT also work on subintervals of the data, chosen though in a completely different manner than in ID. More comparative simulation and real-life studies will be given in Sections \ref{sec:simulation} and \ref{sec:real_data}, respectively.
\begin{table}[h]
\centering
\caption{Distribution of $\hat{N} - N$ over 100 simulated data sequences from (S1). The average MSE is also given.}
{\small{
\begin{tabular}{|l|l|l|l|l|l|l|l|}
\cline{1-8}
 & & \multicolumn{5}{|c|}{•} & \\
 & & \multicolumn{5}{|c|}{$\hat{N} - N$} & \\
Signal & Method  & $\leq - 15$ & $(-15, -5] $ & $[-4,4]$ & $[5, 15)$ & $\geq 15$ &  MSE\\
\hline
& \textbf{ID}  & 0 & 0 & {\textbf{100}} & 0 & 0 & $13 \times 10^{-5}$\\
(S1) & NOT & 5 & 86 & 9 & 0 & 0 &  $141 \times 10^{-5}$\\
 & MARS  & 100 & 0 & 0 & 0 & 0 & $284 \times 10^{-5}$\\
\hline
\hline
& \textbf{ID}  & 0 & 1 & {\textbf{97}} & 2 & 0 & $94 \times 10^{-5}$\\
(S2) & NOT & 100 & 0 & 0 & 0 & 0 &  $485 \times 10^{-5}$\\
& PELT  & 78 & 22 & 0 & 0 & 0 & $437 \times 10^{-5}$\\
& WBS & 27 & 71 & 2 & 0 & 0 &  $413 \times 10^{-5}$\\
\hline
\end{tabular}}}
\label{table_first_comparison}
\end{table}
We notice from Table \ref{table_first_comparison} that ID offers an important increase in the change-point detection power, especially under limited spacings between consecutive change-points. Figures \ref{comparison_small_cpt_spacing_linear} and \ref{comparison_small_cpt_spacing} give a graphical representation of the results for the first out of the 100 repetitions for signals (S1) and (S2), respectively. For better presentation of the results, in (S1) the signals are presented up to $t=1000$, since after $t = 900$ there is no change-point and in all methods the estimated signal continues linearly beyond that point. For the same reason, in Figure \ref{comparison_small_cpt_spacing} which is related to (S2), the results are presented up to $t = 250$.
\begin{figure}[h]
\centering\includegraphics[width=1\textwidth,height=0.32\textheight]{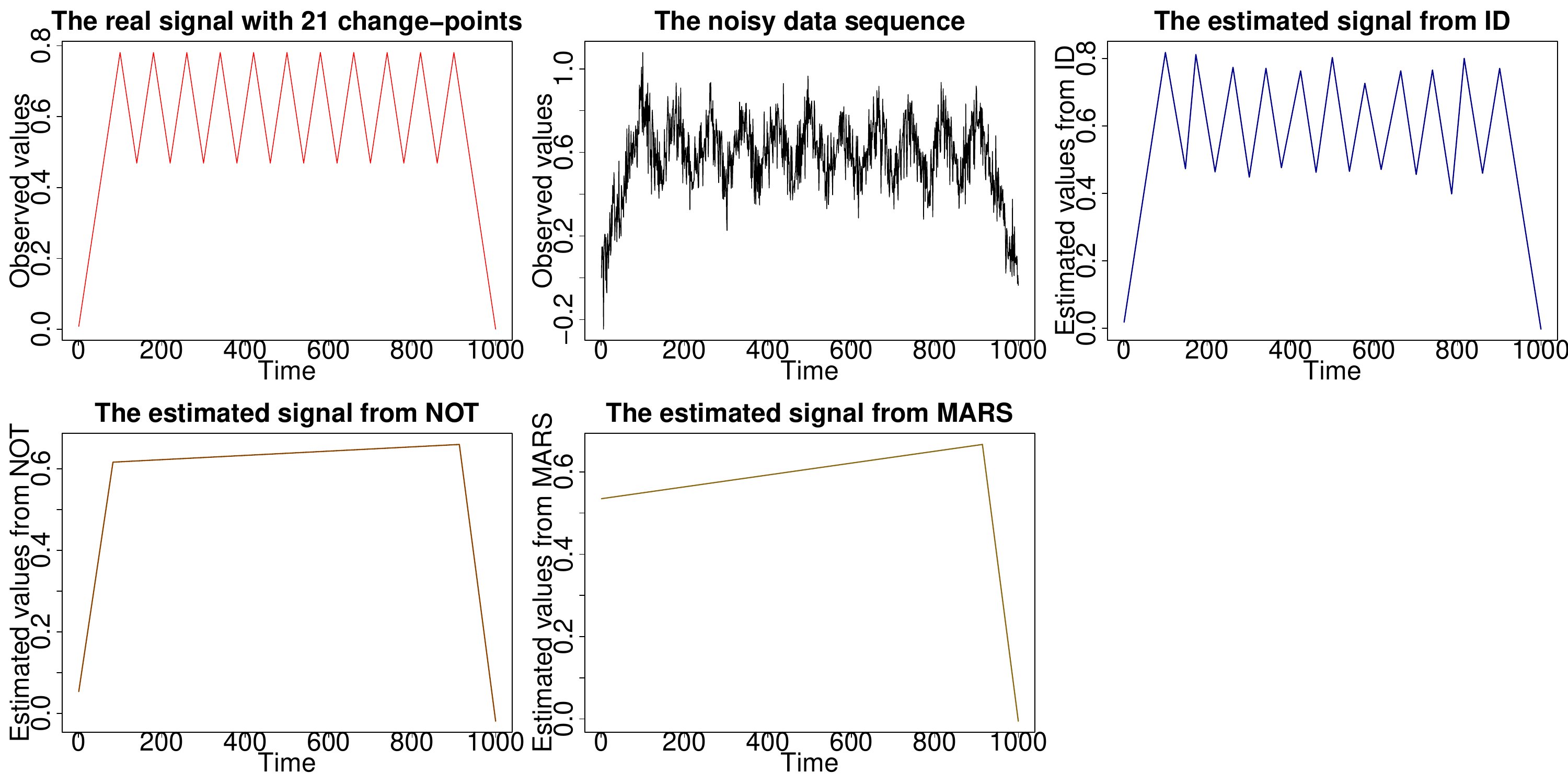}
\caption{Results (up to $t = 1000$) on estimated signals obtained by different change-point detection methods. Top row: the true signal (S1) and the data sequence, and the estimated signal using ID. Bottom row: The estimated signals from NOT, and MARS.} 
\label{comparison_small_cpt_spacing_linear}
\end{figure}
\begin{figure}[h]
\centering\includegraphics[width=1\textwidth,height=0.32\textheight]{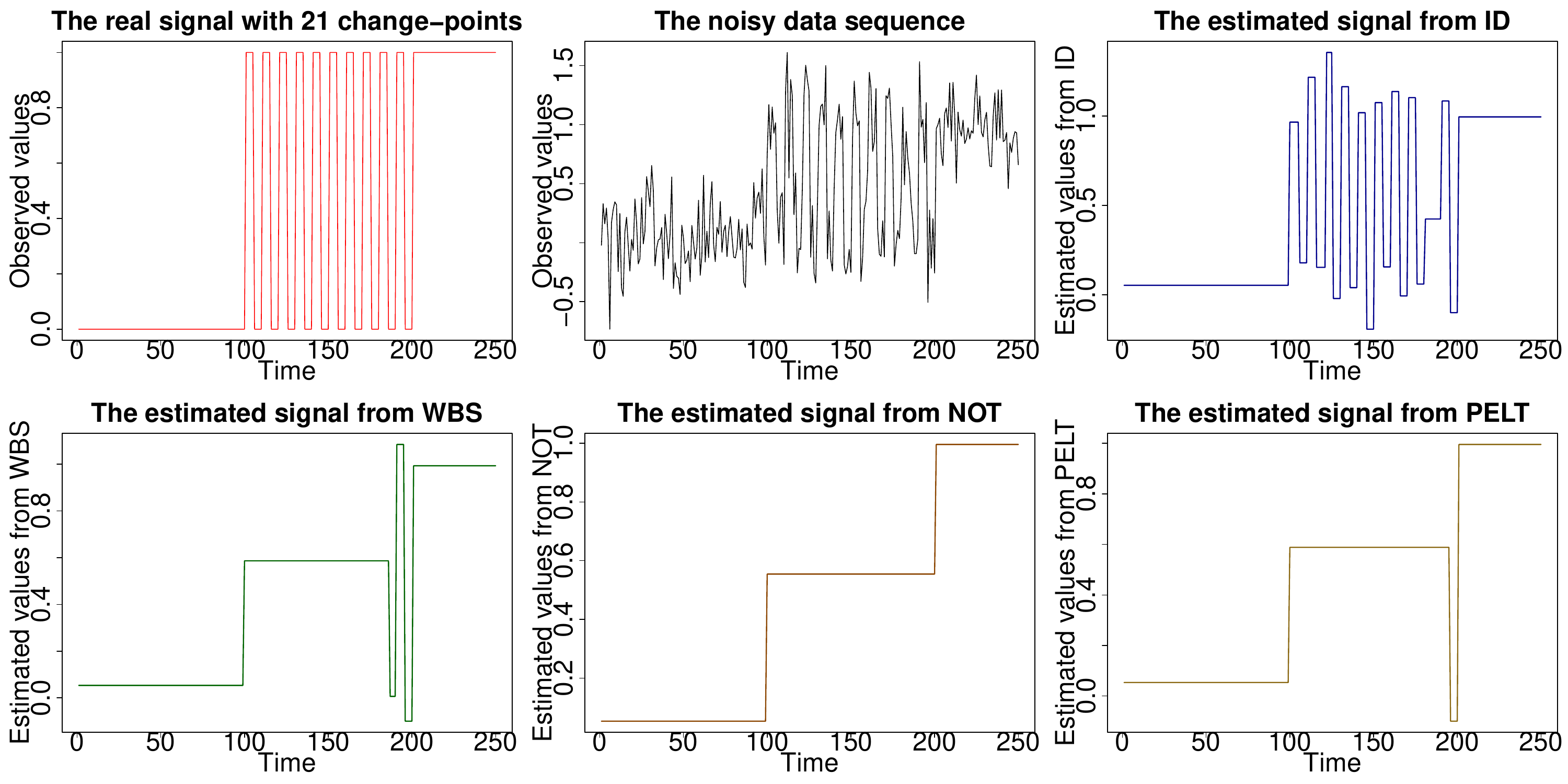}
\caption{Results (up to $t=250$) on estimated signals obtained by different change-point detection methods. Top row: the true signal (S2), the data sequence, and the estimated signal using ID. Bottom row: The estimated signals from WBS, NOT, and PELT.} 
\label{comparison_small_cpt_spacing}
\end{figure}
The NOT and WBS methods also operate on sub-intervals of the data. However, the nature of the fixed, certain (we can expand one data point at each time), localization in ID means that it is of an order of magnitude faster than the aforementioned methods, which have high computational cost that increases linearly with the number of the randomly drawn intervals. This is an issue of fundamental importance, especially in signals with a large number of change-points, in which NOT and WBS need to increase the number $M$ of intervals drawn. However, doing this also increases the computational cost. More specifically, one could try and draw all possible combinations of start- and end-points of the intervals; however, the computational complexity turns out to be cubic in $T$. In contrast, due to the explained interval expansion approach, in ID no choice of $M$ is required, which leads to better practical performance with more predictable execution times, while at the same time ID examines all possible change-point locations. We recall that unlike ID and NOT, the principle of WBS does not extend to models other than piecewise-constant. To be more precise, this generality of Isolate-Detect with respect to its applicability in many different signal structures is a main distinction between our method and recently published competing methods which, with the exception of NOT, have been developed to cover only the detection of level-changes.
\section{Methodology and Theory}
\label{sec:methodology_theory}
\subsection{Methodology}
\label{subsec:methodology}
The model is given in \eqref{our_model} and the unknown number, $N$, of change-points $r_j$ can possibly grow with $T$. Let $r_{0} = 0$ and $r_{N+1} = T$ and let $\delta_T = \min_{j=1,2,\ldots,N+1} \left|r_{j} - r_{j-1}\right|$. For clarity of exposition, we start with a simple example before providing a more thorough explanation of how ID works. Figure \ref{idea_intro} covers a specific case of two change-points, $r_1=38$ and $r_2=77$. We will be referring to Phases 1 and 2 involving six and four intervals, respectively. These are clearly indicated in the figure and they are only related to this specific example, as for cases with more change-points we would have more such phases. At the beginning, $s=1$, $e=T=100$, and we take $\lambda_T = 10$ (how to choose $\lambda_T$ will be described in Section \ref{subsec:parameter_choice}). Suppose the threshold $\zeta_T$ has been chosen well enough (more details in Section \ref{subsec:parameter_choice}) so that $r_2$ gets detected in $\left\lbrace X_{s^*}, X_{s^*+1},\ldots,X_{e}\right\rbrace$, where $s^*=71$. After the detection, $e$ is updated as the start-point of the interval where the detection occurred; therefore, $e=71$. In Phase 2 indicated in the figure, ID is applied in $[s,e]=[1,71]$. Intervals 1, 3 and 5 of Phase 1 will not be re-examined in Phase 2 and $r_1$ gets, upon a good choice of $\zeta_T$, detected in $\left\lbrace X_{s}, X_{s+1},\ldots,X_{e^*}\right\rbrace$, where $e^*=40$. After the detection, $s$ is updated as the end-point of the interval where the detection occurred; therefore, $s=40$. Our method is then applied in $[s,e] = [40,71]$; supposing there is no interval $[s^*,e^*]\subseteq [40,71]$ on which the contrast function value exceeds $\zeta_T$, the process will terminate.
\begin{figure}[h]
\centering\includegraphics[width=0.9\textwidth,height=0.25\textheight]{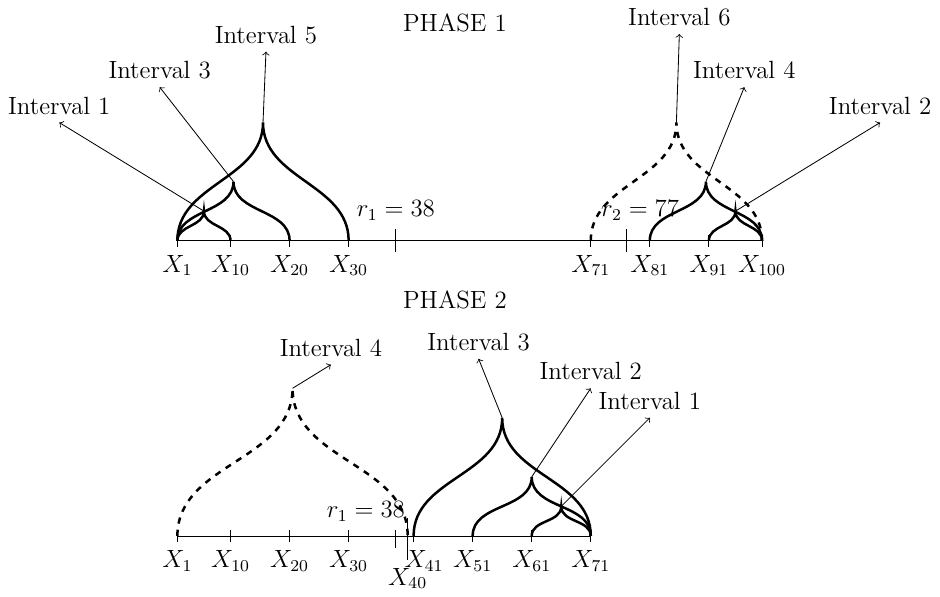}
\caption{An example with two change-points; $r_1=38$ and $r_2 = 77$. The dashed line is the interval in which the detection took place in each phase.} 
\label{idea_intro}
\end{figure}

We now describe ID more generically. For each change-point, $r_j$, ID works in two stages: Firstly, we isolate $r_j$ in an interval that contains no other change-point. To ensure this, the expansion parameter $\lambda_T$ can be taken to be as small as equal to 1. If $\lambda_T > 1$, then isolation is guaranteed with high probability. Theoretically for large $T$, the chosen value for $\lambda_T$ (this typically will be small; see Section \ref{subsec:parameter_choice} for more details) is guaranteed to be smaller than the minimum distance $\delta_T$ (which has to grow with $T$) between two consecutive change-points and isolation will be guaranteed. For an explanation on the rate of $\delta_T$ with respect to the sample size $T$, see the discussion that follows Theorem \ref{consistency_theorem}. (Of course when asymptotics is put aside, in finite samples anything can happen, and in some configurations no method can be guaranteed to detect change-points if they are arbitrarily close.) The second stage is to detect $r_j$ through the use of an appropriate contrast function. This function is, from now on, denoted by $C_{s,e}^b(\boldsymbol{X})$, and it is defined for any integer triple $(s,e,b)$, with $1\leq s \leq b < e \leq T$. Heuristically, the value of $C_{s,e}^b(\boldsymbol{X})$ is small if $b$ is not a change-point and large otherwise. In piecewise-constant signals, the contrast function reduces to the absolute value of the CUSUM statistic defined in \eqref{CUSUM}, while for continuous, piecewise-linear signals, the contrast function is given in Section \ref{subsec:theory}. For the better understanding of the method, we provide its step-by-step simple outline through pseudocode, followed by a succinct narrative of the purpose of each step.
The threshold to be used, in order to decide if a change has occurred at a specific data point, is denoted by $\zeta_T$. Practical choices for $\lambda_T$ and $\zeta_T$ are given in Section \ref{subsec:parameter_choice}. For $K=\lceil T/\lambda_T\rceil$, let $c_{j}^r = j\lambda_T$ and $c_{j}^l = T - j\lambda_T + 1$ for $j=1,2,\ldots, K-1$, while $c_{K}^r = T$ and $c_{K}^l = 1$. For a generic interval $[s,e]$, define the sequences
\begin{equation}
\label{expanding_points_s_e}
{\rm R}_{s,e} = \left[c_{k_1}^r, c_{k_1+1}^r, \ldots,e \right], \quad {\rm L}_{s,e} = \left[c_{k_2}^l, c_{k_2+1}^l, \ldots, s\right],
\end{equation}
where $k_1 := {\rm argmin}_{j \in \left\lbrace 1,2\ldots,K \right\rbrace}\left\lbrace j\lambda_T > s \right\rbrace$ and $k_2 := {\rm argmin}_{j\in \left\lbrace 1,2\ldots,K \right\rbrace}\left\lbrace T-j\lambda_T+1 < e \right\rbrace$. Denoting by $|A|$, the cardinality of any sequence $A$, and by $A(j)$ its $j^{th}$ element, the pseudocode of the main function is as below:\\
{\textbf{{\textit{Pseudocode explaining the proposed ID algorithm}}}}\\
{\textbf{function}} {\textbf{I}}{\small{SOLATE}}{\textbf{D}}{\small{ETECT}}($s, e, \lambda_T, \zeta_T$)
\begin{algorithmic}
\IF{$e - s < 1$}\STATE{STOP} \ELSE \STATE{For $j \in \left\lbrace 1,2,\ldots,\left|R\right|\right\rbrace,$ denote $\left[s_{2j-1}, e_{2j-1}\right] := \left[s, {\rm R}_{s,e}(j)\right]$\\
For $j \in \left\lbrace 1,2,\ldots,\left|{\rm L}_{s,e}\right|\right\rbrace,$ denote $\left[s_{2j}, e_{2j}\right] := \left[{\rm L}_{s,e}(j), e\right]$\\
$i = 1$\\
\textbf{(Main part)}\\
$b_{2i-1} := \underset{t \in [s_{2i-1}, e_{2i-1})}{\rm argmax}C_{s_{2i-1},e_{2i-1}}^{t}(\boldsymbol{X})$
\IF{$C_{s_{2i-1},e_{2i-1}}^{b_{2i-1}} > \zeta_T$}\STATE{add $b_{2i-1}$ to the set of estimated change-points.\\
{\textbf{I}}{\small{SOLATE}}{\textbf{D}}{\small{ETECT}}($e_{2i-1}, e, \lambda_T, \zeta_T$)}\ELSE \STATE{$b_{2i} := {\rm argmax}_{t \in [s_{2i}, e_{2i})}C_{s_{2i},e_{2i}}^{t}(\boldsymbol{X})$
\IF{$C_{s_{2i},e_{2i}}^{b_{2i}} > \zeta_T$} \STATE{add $b_{2i}$ to the set of estimated change-points.\\
{\textbf{I}}{\small{SOLATE}}{\textbf{D}}{\small{ETECT}}($s, s_{2i}, \lambda_T, \zeta_T$)}\\
 \ELSE \STATE{$i = i+1$\\
\IF {$i \leq \max\left\lbrace|{\rm L}_{s,e}|,|{\rm R}_{s,e}|\right\rbrace$}\STATE{Go back to {\textbf{(Main part)}} and repeat}\ELSE \STATE{STOP}\ENDIF}
\ENDIF}\ENDIF
}\ENDIF\\
\end{algorithmic}
{\textbf{end function}}
\vspace{0.1in}
\\
A brief explanation of the pseudocode follows. With $K$ already defined above, the intervals $[s_1,e_1], [s_2,e_2], \ldots, [s_{2K},e_{2K}]$ are those used for the isolation step. Notice that in the odd intervals $[s_1, e_1], [s_3, e_3], \ldots, [s_{2K - 1}, e_{2K - 1}]$ the start-point is fixed, unchanged, and equal to $s$, meaning that $s_1 = s_3 = \ldots = s_{2K-1} = s$. In the even intervals $[s_2, e_2], [s_4, e_4], \ldots, [s_{2K}, e_{2K}]$, it is the end-point that is kept fixed and equal to $e$, meaning that $e_2 = e_4 = \ldots = e_{2K} = e$. The process will follow until there are intervals to check. The term ``expanding intervals'' that is used throughout the paper is due to this one-sided expansion (of magnitude $\lambda_T$) of the intervals. The pseudocode makes it also clear that ID is looking for change-points interchangeably in {\textit{right}}- and {\textit{left-expanding}} intervals which, with high probability, contain at most one change-point. The Isolate-Detect procedure is launched by the call {\textbf{I}}{\small{SOLATE}}{\textbf{D}}{\small{ETECT}}($1, T, \lambda_T, \zeta_T$). 

The idea of a-posteriori change-point detection, in which change-points are detected sequentially, has appeared previously in the literature. The PS method of \cite{Venkatraman} studies the multiple change-point detection problem for the case of piecewise-constant mean signals, as well as for changes in the rate of an exponential process. The CPM method of \cite{Ross_CPM} treats change-point detection in the mean or variance of a sequence of random variables when their distribution is known. In addition, CPM can be used for distributional changes. \cite{Fang_Siegmund2020}, in a work completed after the first version of the current paper appeared on arXiv, search for significant change-points in settings such as piecewise-linear, and one of their algorithms, labelled Seq, bears some resemblance to ID; we note, however, that in addition to some algorithmic differences our aim is different as we focus on consistent estimation while \cite{Fang_Siegmund2020} on testing.

ID is conceptually and in practice different from these methods in a number of ways related to the threshold choice, the construction of the estimated change-point locations as well as the way PS, CPM, and Seq restart upon detection. Furthermore, ID's isolation technique does not appear in CPM. By contrast, we use this isolation property of ID as a device enabling its use in piecewise-(higher-order-) polynomial models. Indeed, as shown in \cite{NOT_paper}, fast segmentation of signals of the latter type is difficult to achieve unless any change-point present can be isolated away from neighbouring change-points before detection is performed, which is exactly what ID sets out to do. In particular, this paper demonstrates the use of ID in continuous piecewise-linear models. A comparison between the performance of ID and that of state-of-the-art methods is given in Section \ref{sec:simulation}. 
\subsection{Theoretical behavior of ID}
\label{subsec:theory}
The assumption of the random sequence $\left\lbrace\epsilon_t\right\rbrace_{t=1,2,\ldots,T}$ being independent and identically distributed (i.i.d.) from the Gaussian distribution is widely used in the literature. In this paper, the Gaussianity assumption is only made for technical convenience with respect to the proofs of Theorems \ref{consistency_theorem} and \ref{consistency_theorem_trend}. Relaxing both the Gaussianity and the independence assumptions in order to have time-dependent errors is a more complicated issue in terms of theory development. Recently, \cite{DepSMUCE} have attempted to treat this issue, specifically for the SMUCE approach of \cite{SMUCE}, using a reliable estimate for the long run variance, $\sigma_*^2 := \sum_{k \in \mathbb{Z}}{\rm Cov}(\epsilon_0, \epsilon_k)$, of the error distribution, which is not necessarily Gaussian.

Apart from the well-studied i.i.d. Gaussian noise structure, Isolate-Detect is explored under a variety of settings including i.i.d. non-Gaussian (see Section \ref{subsec:ht_variant}), and auto-correlated noise structures; see \cite{Fearnhead2020} who conclude that ``IDetect has very strong performance for many scenarios when either we have auto-correlated or heavy-tailed noise''.

If the standard deviation, $\sigma$, of $\epsilon_t$ is unknown, then we need to estimate it and in the cases of independent errors with the signal being piecewise-constant or piecewise-linear, $\sigma$ can be estimated via the Median Absolute Deviation (MAD) method proposed in \cite{Hampel}. For $\boldsymbol{x} = (x_1, x_2, \ldots, x_T)$, the proposed estimator, denoted by $\hat{\sigma} := C\times {\rm median}\left|\boldsymbol{x} - {\rm median}(\boldsymbol{x})\right|$, has been shown to be, for $C = 1.4826$, a consistent estimator of the population standard deviation $\sigma$ in the case of Gaussian data \citep{Rousseeuw1993}. It is very robust as evidenced by its bounded influence function and its 50\% breakdown point. 
For simplicity, let $\sigma = 1$, and \eqref{our_model} becomes
\begin{equation}
\label{new_model}
X_t = f_t + \epsilon_t, \quad t=1,2,\ldots,T.
\end{equation}
With $r_0 = 0$ and $r_{N+1} = T$, and for $j=1,2,\ldots,N+1$, we examine the theoretical behaviour of ID in the following two illustration cases:\\
\textbf{Piecewise-constant signals:} $f_t = \mu_j$ for $t = r_{j-1}+1,\ldots,r_j$, and $f_{r_j}\neq f_{r_j+1}$.\\
{\textbf{Continuous, piecewise-linear signals:}} $f_{t} = \mu_{j,1} + \mu_{j,2}t$, for $t = r_{j-1} + 1, \ldots, r_{j}$ with the additional constraint of $\mu_{k,1} + \mu_{k,2}r_{k} = \mu_{k+1,1} + \mu_{k+1,2}r_{k}$ for $k=1,2,\ldots,N$. The change-points, $r_k$, satisfy $f_{r_k-1} + f_{r_k+1}\neq 2f_{r_k}$.\\
The above scenarios are only examples of settings in which the ID methodology can be applied. The isolation aspect of the method allows its application to various different cases, such as the estimation of the number and the position of knots in piecewise polynomial signals (with or without the continuity constraint).
\vspace{0.1in}
\\
\textbf{Piecewise-constant signals.}
Under piecewise-constancy, the contrast function used is the absolute value of the CUSUM statistic, the latter being
\begin{equation}
\label{CUSUM}
\tilde{X}_{s,e}^b = \sqrt{\frac{e-b}{n(b-s+1)}}\sum_{t=s}^{b}X_t - \sqrt{\frac{b-s+1}{n(e-b)}}\sum_{t=b+1}^{e}X_t,
\end{equation}
where $1\leq s \leq b < e\leq T$ and $n=e-s+1$. Under the i.i.d. Gaussian framework used for the theoretical results presented in this paper, it can be shown that ${\rm argmax}_b\left|\tilde{X}_{s,e}^b\right| = {\rm argmax}_b \mathcal{R}_{s,e}^b(\boldsymbol{X})$, where $\mathcal{R}_{s,e}^b(\boldsymbol{X})$ is the generalized log-likelihood ratio statistic for all potential single change-points within $[s,e]$. For the main result of Theorem \ref{consistency_theorem}, we also make the following assumption.
\begin{itemize}[leftmargin = 0.32in]
\item[(A1)] The minimum distance, $\delta_T$, between two change-points and the minimum magnitude of jumps, $\underline{f}_T$, are connected by $\sqrt{\delta_T}\underline{f}_T \geq \underline{C}\sqrt{\log T}$, for a large enough constant $\underline{C}$.
\end{itemize}
The number of change-points, $N$, is assumed to be neither known nor fixed. It can grow with $T$ and the only indirect assumption on $N$ is due to the minimum distance, $\delta_T$, between two change-points in the sense that $N + 1\leq T/\delta_T$. Below, we give the theoretical result for the consistency of the number and location of the estimated change-points. The proof is in Section \ref{sec:supp_proofs} of the supplementary material.
\begin{theorem}
\label{consistency_theorem}
Let $\left\lbrace X_t \right\rbrace_{t=1,2,\ldots,T}$ follow model \eqref{new_model}, with $f_t$ being a piecewise-constant signal and assume that the random sequence $\left\lbrace\epsilon_t\right\rbrace_{t=1,2,\ldots,T}$ is independent and identically distributed (i.i.d.) from the normal distribution with mean zero and variance one and also that (A1) holds. Let $N$ and $r_j, j=1,2,\ldots,N$ be the number and locations of the change-points, while $\hat{N}$ and $\hat{r}_j, j=1,2,\ldots,\hat{N}$ are their estimates sorted in increasing order. In addition, $\Delta_j^f = \left|f_{r_j+1} - f_{r_j}\right|$, $j=1,2,\ldots, N$. Then, there exist positive constants $C_1, C_2, C_3, C_4$, which do not depend on $T$, such that for $C_1\sqrt{\log T}\leq \zeta_T < C_2\sqrt{\delta_T}\underline{f}_T$ and for a sufficiently large $T$, we obtain
\begin{equation}
\label{mainresult_theorem}
\mathbb{P}\,\left(\hat{N} = N, \max_{j=1,2,\ldots,N}\left(\left|\hat{r}_j - r_j\right|\left(\Delta_j^f\right)^2\right) \leq C_3\log T\right) \geq 1 - \frac{C_4}{T}.
\end{equation}
\end{theorem}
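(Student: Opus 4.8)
The plan is to condition on a single ``good'' event on which the stochastic fluctuations of the CUSUM statistics are uniformly controlled, and then to argue deterministically that, on this event, ID reproduces the true change-point configuration. Writing $\tilde{X}_{s,e}^b = \tilde{f}_{s,e}^b + \tilde{\epsilon}_{s,e}^b$ for the decomposition of \eqref{CUSUM} into its signal and noise parts (both linear in their arguments), I would set
\[
A_T = \Bigl\{ \max_{1\le s\le b<e\le T} \bigl|\tilde{\epsilon}_{s,e}^b\bigr| \le c_1\sqrt{\log T}\Bigr\}
\]
and additionally include in $A_T$ the companion partial-sum bounds $\bigl|\sum_{t=a}^{b}\epsilon_t\bigr|\le c_1\sqrt{(b-a+1)\log T}$ for all $a\le b$, which I will need for the localisation step. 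Since each $\tilde{\epsilon}_{s,e}^b$ is standard normal under (A1), a Gaussian tail bound together with a union bound over the $O(T^3)$ triples gives $\Prob(A_T)\ge 1 - C_4/T$ once $c_1$ is large enough; this supplies the $1-C_4/T$ in \eqref{mainresult_theorem}, and all remaining claims are made on $A_T$.

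On $A_T$ the argument proceeds by strong induction on the number of true change-points contained in the interval $[s,e]$ currently processed. Because each expansion step has length $\lambda_T<\delta_T$, the first expanding interval to contain a true change-point contains exactly one of them, say $r$ with jump $\Delta\ge\underline{f}_T$. Two facts must be checked. First, \emph{no false alarm}: on any expanding interval carrying no change-point the signal CUSUM vanishes, so $\max_b|\tilde{X}_{s,e}^b|=\max_b|\tilde{\epsilon}_{s,e}^b|\le c_1\sqrt{\log T}\le\zeta_T$ provided $C_1\ge c_1$, and ID neither stops early nor declares a spurious point. Secondly, \emph{timely detection}: the identity $\tilde{f}_{s,e}^{\,r}=\Delta\sqrt{(r-s+1)(e-r)/(e-s+1)}$ shows that, as the interval expands past $r$, once the shorter of its two segments reaches a fixed fraction of $\delta_T$ one has $|\tilde{f}_{s,e}^{\,r}|\ge c_2\sqrt{\delta_T}\,\underline{f}_T$, whence $\max_b|\tilde{X}_{s,e}^b|\ge|\tilde{X}_{s,e}^{r}|\ge c_2\sqrt{\delta_T}\,\underline{f}_T-c_1\sqrt{\log T}>\zeta_T$ by (A2) and the upper bound $\zeta_T<C_2\sqrt{\delta_T}\,\underline{f}_T$. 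The decisive point is that this firing occurs while the moving endpoint is still strictly between $r$ and its neighbour: calibrating the constants so that the detection segment length plus one step $\lambda_T$ stays below $\delta_T$ keeps the detecting interval free of a second change-point and, after the restart, leaves the remaining change-points separated from the new endpoint by a positive fraction of $\delta_T$, so the inductive hypothesis applies and yields $\hat{N}=N$.

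It remains to bound $|\hat{r}_j-r_j|$, and here ID's isolation pays off: each $\hat{r}_j=b_1=\operatorname{argmax}_b|\tilde{X}_{s,e}^b|$ is produced on an interval containing the \emph{single} change-point $r_j$, so $b_1$ can be compared directly with $r_j$. Starting from $|\tilde{X}_{s,e}^{b_1}|\ge|\tilde{X}_{s,e}^{r_j}|$ and expanding the squares, I would isolate (suppressing the fixed subscripts $s,e$)
\[
(\tilde{f}^{\,r_j})^2-(\tilde{f}^{\,b_1})^2 \le 2\tilde{f}^{\,r_j}\bigl(\tilde{\epsilon}^{\,b_1}-\tilde{\epsilon}^{\,r_j}\bigr) + 2\bigl(\tilde{f}^{\,b_1}-\tilde{f}^{\,r_j}\bigr)\tilde{\epsilon}^{\,b_1} + \bigl(\tilde{\epsilon}^{\,b_1}\bigr)^2 .
\]
For a single change-point the left-hand side admits a curvature lower bound of the form $c\,|b_1-r_j|\,(\Delta_j^f)^2$, the constant being bounded away from zero because detection forces the two segments to be non-degenerate. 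On $A_T$ the key term $\tilde{\epsilon}^{\,b_1}-\tilde{\epsilon}^{\,r_j}$ is a normalised partial sum of noise over the short gap $[\min(b_1,r_j),\max(b_1,r_j)]$, hence of order $\sqrt{|b_1-r_j|\,\log T}\,/\sqrt{e-s+1}$; combined with $\tilde{f}^{\,r_j}\le\Delta_j^f\sqrt{\delta_T}$ this reduces the inequality to $c\,|b_1-r_j|(\Delta_j^f)^2 \le C\sqrt{|b_1-r_j|\,\log T}\,\Delta_j^f + C\log T$, and solving the resulting quadratic in $\sqrt{|b_1-r_j|}$ gives $|b_1-r_j|(\Delta_j^f)^2\le C_3\log T$. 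A union over the $N$ detections, all controlled on $A_T$, then delivers \eqref{mainresult_theorem}.

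I expect this localisation step to be the main obstacle, specifically the sharp $\log T$ (rather than a larger power) rate. It hinges on \emph{not} bounding $\tilde{\epsilon}^{\,b_1}$ and $\tilde{\epsilon}^{\,r_j}$ separately but on controlling their \emph{difference} by the partial-sum bound over the short gap, so that the noise contribution carries the factor $\sqrt{|b_1-r_j|}$ that can be absorbed into the quadratic signal deficit. A secondary delicate point, threaded through the induction, is the simultaneous calibration of constants: the separation between $C_1\sqrt{\log T}$ and $C_2\sqrt{\delta_T}\,\underline{f}_T$ afforded by (A2) must be wide enough that detection fires after the interval has grown sufficiently but before it swallows a second change-point, and this calibration must hold uniformly across all $N$ successive detections on the single event $A_T$.
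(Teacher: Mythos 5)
Your proposal has the same global architecture as the paper's proof: a good event built from Gaussian tails plus a union bound over $O(T^3)$ CUSUM triples, isolation of each change-point by the $\lambda_T$-expansion (the paper formalizes this with the intervals $I_j^R,I_j^L$ of width $\delta_T/3$ in \eqref{isolating_intervals}), detection because the signal CUSUM reaches $c_2\sqrt{\delta_T}\,\underline{f}_T$ while the interval still contains a single change-point, a no-false-alarm/termination argument on change-point-free intervals, and an iteration over successive detections with the check that the restart point neither skips a change-point nor crowds the next one (the paper's (S.1)--(S.2)). The genuine difference is the localization step. The paper introduces a \emph{second} event $B_T$ in \eqref{B_T} and invokes the identity of Lemma \ref{lemma_from_NOT}, namely that $\|\boldsymbol{\psi_{s,e}^b}\langle\boldsymbol{f},\boldsymbol{\psi_{s,e}^b}\rangle - \boldsymbol{\psi_{s,e}^{r_j}}\langle\boldsymbol{f},\boldsymbol{\psi_{s,e}^{r_j}}\rangle\|_2^2$ equals the signal deficit $\Lambda=(\tilde{f}_{s,e}^{r_j})^2-(\tilde{f}_{s,e}^{b})^2$; Cauchy--Schwarz on $B_T$ then bounds the cross term by $2\sqrt{\Lambda}\sqrt{8\log T}$ and the whole step collapses to a quadratic in $\sqrt{\Lambda}$, yielding explicit constants. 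Your cross term $2\tilde{f}^{r_j}(\tilde{\epsilon}^{b_1}-\tilde{\epsilon}^{r_j})+2(\tilde{f}^{b_1}-\tilde{f}^{r_j})\tilde{\epsilon}^{b_1}$ is algebraically the very same quantity, but you propose to bound it by hand through partial-sum bounds, in the style of the original WBS analysis; this avoids the NOT projection machinery at the price of messier computations and looser constants, and it can be made to work.

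However, as written, the crux contains two real gaps. First, $\tilde{\epsilon}^{b_1}-\tilde{\epsilon}^{r_j}$ is \emph{not} a normalized partial sum over the gap: decomposing \eqref{CUSUM}, the difference equals the gap sum times a weight, \emph{plus} two weight-perturbation terms multiplying the long flanking sums $\sum_{t=s}^{r_j \wedge b_1}\epsilon_t$ and $\sum_{t=(r_j\vee b_1)+1}^{e}\epsilon_t$. These extra terms can dominate your claimed order $\sqrt{|b_1-r_j|\log T}/\sqrt{e-s+1}$ (e.g.\ when $r_j-s\gg e-r_j$, which does occur in ID since the left endpoint stays fixed at $s$); they are absorbable because (A2) gives $n(\Delta_j^f)^2\geq\delta_T\underline{f}_T^2\gtrsim\log T$, but that computation is exactly where your sharp $\log T$ rate must be earned and it is missing. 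Second, the term $2(\tilde{f}^{b_1}-\tilde{f}^{r_j})\tilde{\epsilon}^{b_1}$ appears in your display and is never bounded; one needs $|\tilde{f}^{b_1}-\tilde{f}^{r_j}|\leq\Lambda/\tilde{f}^{r_j}$ together with a lower bound on $\tilde{f}^{r_j}$ of order $\sqrt{\log T}$ (which detection supplies) to absorb it into the left-hand side. Relatedly, the curvature bound $c\,|b_1-r_j|(\Delta_j^f)^2$ is not automatic: the exact identity gives $\frac{\rho\eta}{\rho+\eta}(\Delta_j^f)^2\geq\frac{1}{2}\min(\rho,\eta)(\Delta_j^f)^2$ with $\eta$ the distance from $r_j$ to the fixed endpoint, and converting $\min(\rho,\eta)$ into $\rho$ requires the paper's separate contradiction argument showing $\eta>C_3\log T/(\Delta_j^f)^2$ whenever the threshold has been exceeded; your ``non-degenerate segments'' remark gestures at this two-stage argument but it must be made explicit for the conclusion $|b_1-r_j|(\Delta_j^f)^2\leq C_3\log T$ to follow.
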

The isolation aspect of Isolate-Detect helps us to prove consistency under the conditions used in Theorem \ref{consistency_theorem} (and later in Theorem \ref{consistency_theorem_trend}). From \eqref{mainresult_theorem}, we notice that in order to be able to match the estimated change-point locations with the true ones, $\delta_T$ should be larger than $\underset{j=1,2,\ldots,N}{\max}\left|\hat{r}_j - r_j\right|$, meaning that $\delta_T$ must be at least $\mathcal{O}(\log T)$. For this order of $\delta_T$, \cite{Chan_Walther} argue that the smallest possible $\delta_T\underline{f}_T^2$ that allows change-point detection is $\mathcal{O}\left(\log T - \log(\log T)\right)$. In our case, assumption (A1) ensures that the $\mathcal{O}(\log T)$ rate for $\delta_T\underline{f}_T^2$ is attained, which is nearly optimal (up to the double logarithmic term). This provides evidence that ID allows for detection in complex scenarios, such as limited spacings between change-points. We mention that if $\delta_T$ is of higher order than $\mathcal{O}\left(\log T\right)$, then Assumption (A1) implies that $\underline{f}_T$ could decrease with $T$.

The quantity on the right-hand side of \eqref{mainresult_theorem} is $1-\mathcal{O}(1/T)$; the same order as in WBS and NOT. However, ID gives a provably lower constant $C_4$ for the bound. To understand this consistency advantage of our method over, for example, NOT see our proof in Section \ref{sec:supp_proofs} of the supplement and compare \eqref{mainresult_theorem2} with the result in Equation (19), p.28 in the online supplementary material of \cite{NOT_paper}. The rate of the lower bound for the threshold $\zeta_T$ is $\mathcal{O}\left(\sqrt{\log T}\right)$ and this is what will be used in practice as the default rate: we use
\begin{equation}
\label{threshold}
\zeta_T = C\sqrt{2\log T}.
\end{equation}
and the choice of the constant $C$ will be explained in Section \ref{sec:variants}. Furthermore, \eqref{mainresult_theorem} indicates that $\delta_T$ does not affect the rate of convergence of the estimated change-point locations; these only depend on $\Delta_j^f$.
\vspace{0.1in}
\\
{\raggedright{\textbf{Continuous, piecewise-linear signals.}}}
Under Gaussianity and with $R_{s,e}^b(\boldsymbol{X})$ being the generalized log-likelihood ratio for all possible single change-points within $[s,e)$, the idea is to find a contrast function $C_{s,e}^b(\boldsymbol{X})$, which is maximized at the same point as $R_{s,e}^b(\boldsymbol{X})$. The contrast function is constructed by taking inner products of the data with a contrast vector. In the case of continuous piecewise-linear signals, \cite{NOT_paper} show that the contrast vector to be used is $\boldsymbol{\phi_{s,e}^b} = \left(\phi_{s,e}^b(1),\ldots,\phi_{s,e}^b(T)\right)$, where
\begin{equation}
\label{contrast_vectorCPLM}
\phi_{s,e}^b(t) = \begin{cases}
    \alpha_{s,e}^b\beta_{s,e}^b\left[(e+2b-3s+2)t - (be +bs - 2s^2+2s)\right], & \hspace{-0.06in} t=s,\ldots,b,\\
    -\frac{\alpha_{s,e}^b}{\beta_{s,e}^b}\left[(3e-2b-s+2)t - (2e^2+2e-be-bs)\right], &\hspace{-0.06in} t=b+1,\ldots,e,\\
    0, & \hspace{-0.04in}\text{otherwise},
  \end{cases}
\end{equation}
where $n=e-s+1$, $\alpha_{s,e}^b = (6/[n(n^2-1)(1+(e-b+1)(b-s+1)+(e-b)(b-s))])^{\frac{1}{2}}$ and $\beta_{s,e}^b = ([(e-b+1)(e-b)]/[(b-s+1)(b-s)])^{\frac{1}{2}}$. The contrast function is $C_{s,e}^b(\boldsymbol{X}) = \left|\left\langle\boldsymbol{X},\boldsymbol{\phi_{s,e}^b}\right\rangle\right|$. To explain the reasoning behind the choice of the triangular function $\phi_{s,e}^b(\cdot)$, we define, for the interval $[s,e]$, the linear vector 
$\gamma_{s,e}(t) = \left(\frac{1}{12}(e-s+1)\left(e^2 - 2es +2e +s^2-2s\right)\right)^{-\frac{1}{2}}\left(t-\frac{e+s}{2}\right)$, $t=s,\ldots,e$ (and 0 otherwise)
as well as the constant vector
$1_{s,e}(t) = \left(e-s+1\right)^{-\frac{1}{2}}$, $t=s,\ldots,e$ (and 0 otherwise).
On the vector
$\tilde{\phi}_{s,e}^b(t) = t - b$, $t=b+1,\ldots,e$ (and 0 otherwise), 
which is linear with a kink at $b+1$, we apply the Gram-Schmidt orthogonalization with respect to $\gamma_{s,e}(t)$ and $1_{s,e}(t)$. Normalizing the obtained vector such that $\|.\|_2 = 1$ returns the contrast vector $\phi_{s,e}^b(t)$ defined in \eqref{contrast_vectorCPLM}. The best approximation, in terms of the Euclidean distance, of $X_t$ in $[s,e]$ is a linear combination of $\gamma_{s,e}(t)$, $1_{s,e}(t)$, and $\phi_{s,e}(t)$, which are mutually orthonormal \citep{NOT_paper}. This orthonormality leads to $R_{s,e}^b(\boldsymbol{X}) = \left|\left\langle\boldsymbol{X},\boldsymbol{\phi_{s,e}^b}\right\rangle\right| = C_{s,e}^b(\boldsymbol{X})$. For the consistency of ID in continuous piecewise-linear signals, we make the following assumption.
\begin{itemize}[leftmargin = 0.32in]
\item[(A2)] The minimum distance, $\delta_T$, between two change-points and the minimum magnitude of jumps, $\underline{f}_T = \min_{j=1,2,\ldots,N}\left|2f_{r_j} - f_{r_{j}+1} - f_{r_{j}-1}\right|$, are connected by the requirement $\delta_T^{3/2}\underline{f}_T \geq C^*\sqrt{\log T}$, for a large enough constant $C^*$.
\end{itemize}
The term $\delta_T^{3/2}\underline{f}_T$ characterizes the difficulty level of the detection problem and is analogous to $\sqrt{\delta_T}\underline{f}_T$ in the scenario of piecewise-constant signals. Theorem \ref{consistency_theorem_trend} gives the consistency result for the case of continuous piecewise-linear signals. The proof is in Section \ref{sec:supp_proofs} of the supplement.
\begin{theorem}
\label{consistency_theorem_trend}
Let $\left\lbrace X_t \right\rbrace_{t=1,2,\ldots,T}$ follow model \eqref{new_model} with $f_t$ being a continuous, piecewise-linear signal and assume that the random sequence $\left\lbrace\epsilon_t\right\rbrace_{t=1,2,\ldots,T}$ is independent and identically distributed (i.i.d.) from the normal distribution with mean zero and variance one and that (A2) holds. We denote by $N$ and $r_j, j=1,2,\ldots,N$ the number and locations of the change-points, while $\hat{N}$ and $\hat{r}_j, j=1,2,\ldots,\hat{N}$ are their estimates sorted in increasing order. Also, we denote $\Delta_j^f = \left|2f_{r_j} - f_{r_{j}+1} - f_{r_{j}-1}\right|$. Then, there exist positive constants $C_1, C_2, C_3, C_4$, which do not depend on $T$, such that for $C_1\sqrt{\log T}\leq \zeta_T < C_2\delta_T^{3/2}\underline{f}_T$ and for sufficiently large $T$,
\begin{equation}
\label{mainresult_theorem_trend}
\mathbb{P}\,\left(\hat{N} = N, \max_{j=1,2,\ldots,N}\left(\left|\hat{r}_j - r_j\right|\left(\Delta_j^f\right)^{2/3}\right) \leq  C_3(\log T)^{1/3}\right) \geq 1 - \frac{C_4}{T}.
\end{equation}
\end{theorem}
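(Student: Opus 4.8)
My plan is to mirror the architecture of the proof of Theorem~\ref{consistency_theorem}: reduce the probabilistic claim \eqref{mainresult_theorem_trend} to a purely deterministic analysis carried out on a single high-probability event, and then exploit two properties of the contrast vector $\boldsymbol{\phi_{s,e}^b}$ from \eqref{contrast_vectorCPLM} — its orthogonality to every affine function on $[s,e]$, and sharp bounds on the signal contrast $\langle\boldsymbol{f},\boldsymbol{\phi_{s,e}^b}\rangle$, where $\boldsymbol f=(f_1,\dots,f_T)$, near an isolated change-point. The orthogonality, already recorded in the construction following \eqref{contrast_vectorCPLM}, is exactly what lets the linear-trend problem behave like the piecewise-constant one after a change of exponents.

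First I would fix the good event. Since each $\boldsymbol{\phi_{s,e}^b}$ has unit norm, under (A1) the variable $\langle\boldsymbol{\epsilon},\boldsymbol{\phi_{s,e}^b}\rangle$ is standard normal, so a Gaussian tail bound and a union bound over the $\mathcal{O}(T^3)$ admissible triples $(s,e,b)$ give
\[
\mathbb{P}\!\left(\max_{1\le s\le b<e\le T}\bigl|\langle\boldsymbol{\epsilon},\boldsymbol{\phi_{s,e}^b}\rangle\bigr|\le c_1\sqrt{\log T}\right)\ge 1-\frac{C_4}{T},
\]
for suitable constants $c_1,C_4$; this produces the constant $C_4$ in \eqref{mainresult_theorem_trend}. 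For the localisation step I would enlarge this event so that the increments $\langle\boldsymbol{\epsilon},\boldsymbol{\phi_{s,e}^{b}}-\boldsymbol{\phi_{s,e}^{b'}}\rangle$ are simultaneously bounded by $c_1\|\boldsymbol{\phi_{s,e}^{b}}-\boldsymbol{\phi_{s,e}^{b'}}\|\sqrt{\log T}$, again by a union bound over the polynomially many normalised difference vectors. Everything below is deterministic on this event.

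Next come detection and the absence of false alarms. On an interval $[s,e]$ free of change-points $\boldsymbol f$ is affine there, so orthogonality gives $\langle\boldsymbol f,\boldsymbol{\phi_{s,e}^b}\rangle=0$ and hence $C_{s,e}^b(\boldsymbol X)=|\langle\boldsymbol\epsilon,\boldsymbol{\phi_{s,e}^b}\rangle|\le c_1\sqrt{\log T}<\zeta_T$: no spurious change-point is declared. Conversely, once the expansion scheme has isolated a single change-point $r_j$ — which $\lambda_T<\delta_T$ guarantees happens before a second change-point can enter the interval, exactly as in the worked example of Figure~\ref{idea_intro} — the decomposition of $\boldsymbol f$ into its affine part plus $\Delta_j^f$ times the ramp with a kink at $r_j$ yields the identity $\langle\boldsymbol f,\boldsymbol{\phi_{s,e}^{r_j}}\rangle=\Delta_j^f\,\|\boldsymbol u_{r_j}\|$, where $\boldsymbol u_{r_j}$ is that ramp after orthogonalisation against $1_{s,e}$ and $\gamma_{s,e}$. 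Since $\|\boldsymbol u_{r_j}\|$ is of order $\ell^{3/2}$, with $\ell$ the smaller of the two distances from $r_j$ to the interval endpoints, and in the largest isolating interval $\ell$ is of order $\delta_T$, assumption (A3) makes this signal contrast at least of order $\underline f_T\delta_T^{3/2}\ge C^*\sqrt{\log T}$. Choosing $C^*$ large enough that this exceeds $\zeta_T+c_1\sqrt{\log T}$ forces a detection no later than this interval. An induction over the isolation phases, identical in spirit to the piecewise-constant proof, then gives $\hat N=N$ together with the correct pairing of each $\hat r_j$ to an isolated $r_j$.

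It remains to control $|\hat r_j-r_j|$. Writing $g(b)=\langle\boldsymbol f,\boldsymbol{\phi_{s,e}^b}\rangle$ on the detection interval, orthogonality and the same decomposition give the exact identity $g(b)=g(r_j)\,\langle\boldsymbol{\phi_{s,e}^{r_j}},\boldsymbol{\phi_{s,e}^{b}}\rangle$, whence $g(r_j)^2-g(b)^2=g(r_j)^2\bigl(1-\langle\boldsymbol{\phi_{s,e}^{r_j}},\boldsymbol{\phi_{s,e}^{b}}\rangle^2\bigr)$. Because $\hat r_j$ maximises the full contrast, expanding $C_{s,e}^{\hat r_j}(\boldsymbol X)^2\ge C_{s,e}^{r_j}(\boldsymbol X)^2$ and using the increment bound on the good event gives, after absorbing lower-order terms, $g(r_j)^2-g(\hat r_j)^2\lesssim g(r_j)\,\|\boldsymbol{\phi_{s,e}^{\hat r_j}}-\boldsymbol{\phi_{s,e}^{r_j}}\|\sqrt{\log T}$. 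A direct computation with \eqref{contrast_vectorCPLM} then supplies the two geometric estimates $1-\langle\boldsymbol{\phi_{s,e}^{r_j}},\boldsymbol{\phi_{s,e}^{b}}\rangle^2\asymp(|b-r_j|/\ell)^2$ and $\|\boldsymbol{\phi_{s,e}^{b}}-\boldsymbol{\phi_{s,e}^{r_j}}\|\asymp|b-r_j|/\ell$. Combining the three relations collapses the $\ell$-dependence to $\Delta_j^f\,\ell^{1/2}\,|\hat r_j-r_j|\lesssim\sqrt{\log T}$; since detection already forced $\Delta_j^f\,\ell^{3/2}\gtrsim\sqrt{\log T}$, eliminating $\ell$ between these two inequalities produces precisely $(\Delta_j^f)^{2/3}|\hat r_j-r_j|\lesssim(\log T)^{1/3}$, which is \eqref{mainresult_theorem_trend}. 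The main obstacle is not probabilistic but the deterministic, \emph{uniform} verification of the two geometric estimates and of $\|\boldsymbol u_{r_j}\|\asymp\ell^{3/2}$ directly from the unwieldy rational quantities $\alpha_{s,e}^b,\beta_{s,e}^b$ in \eqref{contrast_vectorCPLM}: the implied constants must be controlled uniformly over all interval shapes, and in particular in the delicate regime where $r_j$ lies close to an endpoint so that the two arms of the triangular vector $\boldsymbol{\phi_{s,e}^b}$ are strongly unbalanced.
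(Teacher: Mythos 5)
Your proposal is correct in substance and, at the level of architecture, retraces the paper's proof: good events obtained from union bounds over polynomially many standard Gaussian functionals (the paper's $A_T^*$ and $B_T^*$), orthogonality of $\boldsymbol{\phi_{s,e}^b}$ to affine functions to rule out false detections on change-point-free intervals (the paper's Step 4), the lower bound $C_{s,e}^{r_j}(\boldsymbol{f})\gtrsim \ell^{3/2}\Delta_j^f$ combined with (A3) to force detection once $r_j$ is isolated (Step 3.1), an induction over restart phases, and the same elimination of the unknown arm-length $\ell$ between the detection inequality $\Delta_j^f\ell^{3/2}\gtrsim\sqrt{\log T}$ and the localization inequality — the paper performs exactly this elimination, phrased as a proof by contradiction. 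Where you genuinely differ is the deterministic core of the localization step. The paper imports Lemmas 5 and 7 of \cite{NOT_paper} (Lemmas \ref{lemma_from_NOT2} and \ref{lemma_from_NOT3} of the supplement), in particular the bound $\left(C_{s,e}^{r_j}(\boldsymbol{f})\right)^2-\left(C_{s,e}^{b}(\boldsymbol{f})\right)^2\ge \frac{1}{63}\min(\rho,\eta)^3\left(\Delta_j^f\right)^2$, and pairs it with the event $B_T^*$ built from normalized differences of \emph{projections}, yielding the quadratic inequality $\Lambda>8\log T+2\sqrt{8\Lambda\log T}$. You instead derive everything from the exact identity $g(b)=g(r_j)\langle\boldsymbol{\phi_{s,e}^{r_j}},\boldsymbol{\phi_{s,e}^{b}}\rangle$ (which is also what underlies the identity in Lemma \ref{lemma_from_NOT3}) together with the two-sided geometric estimates $\|\boldsymbol{\phi_{s,e}^{b}}-\boldsymbol{\phi_{s,e}^{r_j}}\|\asymp\rho/\ell$, handling the noise cross-term through increments of contrast vectors. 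This buys self-containedness — and your implied bound $\Lambda\asymp(\Delta_j^f)^2\ell\rho^2$ is in fact sharper than the imported $\min(\rho,\ell)^3(\Delta_j^f)^2$ when $\rho<\ell$ — at the price of having to verify the geometric estimates uniformly, which you rightly single out as the main burden and which is precisely the content the paper delegates to \cite{NOT_paper}.

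Three points need tightening. First, the estimates $1-\langle\boldsymbol{\phi_{s,e}^{r_j}},\boldsymbol{\phi_{s,e}^{b}}\rangle^2\asymp(\rho/\ell)^2$ and $\|\boldsymbol{\phi_{s,e}^{b}}-\boldsymbol{\phi_{s,e}^{r_j}}\|\asymp\rho/\ell$ cannot hold for \emph{all} $b$ in the detection interval: the left-hand sides are bounded while $\rho/\ell$ is not, so they are valid only in the regime $\rho\lesssim\ell$, and the regime $\rho\gg\ell$ must be dispatched separately (there $1-\langle\boldsymbol{\phi_{s,e}^{r_j}},\boldsymbol{\phi_{s,e}^{b}}\rangle^2\gtrsim 1$, so $\Lambda\gtrsim(\Delta_j^f)^2\ell^3$, which contradicts the noise bound once the threshold constants are fixed — the same manoeuvre the paper uses to lower-bound $\min\{r_1-1,c_{k^*}^r-r_1\}$). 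Second, the term $\langle\boldsymbol\epsilon,\boldsymbol{\phi_{s,e}^{\hat r_j}}\rangle^2-\langle\boldsymbol\epsilon,\boldsymbol{\phi_{s,e}^{r_j}}\rangle^2$ you propose to absorb is $O(\log T)$, not lower order; retaining it still gives the $(\log T)^{1/3}(\Delta_j^f)^{-2/3}$ rate (exactly as in the paper's quadratic inequality), so this is harmless but should not be labelled negligible. Third, the induction you invoke compresses the restart bookkeeping that the paper proves as (S.1)--(S.2): ID restarts from the end-point $c_{k^*}^r$ of the detecting interval, not from $\hat r_j$, so one must check that no change-point lies in the bypassed stretch $[\hat r_j+1,c_{k^*}^r)$ and that the new start-point still permits isolation of $r_{j+1}$; this relies on $\lambda_T\le\delta_T/3$ (or $\delta_T/m$, $m>1$, per Remark \ref{remark_lambda_T}) and the intervals in \eqref{isolating_intervals}, not merely on $\lambda_T<\delta_T$ as your sketch suggests.
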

The quantity on the right-hand side of \eqref{mainresult_theorem_trend} is $1-\mathcal{O}\left(1/T\right)$. In addition, in the case of $\underline{f}_T \sim T^{-1}$, ID's change-point detection accuracy is $\mathcal{O}\left(T^{2/3}\left(\log T\right)^{1/3}\right)$, as can be seen from \eqref{mainresult_theorem_trend}. This differs from the $\mathcal{O}\left(T^{2/3}\right)$ rate derived in \cite{Raimondo} only by the logarithmic factor. The lower bound of the threshold is $\mathcal{O}\left(\sqrt{\log T}\right)$. Therefore,
\begin{equation}
\label{threshold_linear}
\zeta_T = \tilde{C}\sqrt{2\log T},
\end{equation}
where $\tilde{C}$ is a constant and we will comment on its choice in Section \ref{subsec:parameter_choice}.

ID is flexible because it does not depend on the structure of the signal; what changes is the choice of an appropriate contrast function. Adopting a similar approach as the one for the case of continuous piecewise-linear signals, one can construct contrast functions for the detection of other types of features.
\subsection{Information Criterion approach}
\label{subsec:sSIC}
Misspecification of the threshold in the ID algorithm can lead to the misestimation of the number of change-points. To remedy this, we develop an approach which starts by possibly overestimating the number of change-points and then creates a solution path, with the estimates ordered according to a certain predefined criterion. The best fit is then chosen, based on the optimization of a model selection criterion.
\vspace{0.05in}
\\
{\textbf{The solution path algorithm:}} The estimated number of change-points depends on $\zeta_T$ and this allows us to denote $\hat{N} = \hat{N}(\zeta_T)$. For given data, we employ ID using first $\zeta_T$ and then $\tilde{\zeta}_T$, where $\tilde{\zeta}_T < \zeta_T$. Let $C_{\tilde{\zeta}_T}$ and $\tilde{C}_{\tilde{\zeta}_T}$ be the $\tilde{\zeta}_T$-associated constants in \eqref{threshold} and \eqref{threshold_linear}, respectively. With $J \geq \hat{N}(\zeta_T)$, we estimate $\tilde{r}_j, j=1,2,\ldots, J$, which are sorted in increasing order in $\tilde{S} = \left[\tilde{r}_1, \tilde{r}_2, \ldots, \tilde{r}_{J}\right]$. Our aim is to prune the estimates through an iterative procedure, where at each iteration the estimation most likely to be spurious is removed. The algorithm is split into four parts, with their descriptions being fairly technical. We note however that the different parts are very similar and are based on the idea of removing change-points according to their contrast function values as well as their distance to neighbouring estimates. Even though the full explanation of each part is in Section \ref{supp:solution_path_algorithm_parts} of the supplement, we now provide a brief summary for the framework of the solution path algorithm. With $\tilde{r}_0 = 1$ and $\tilde{r}_{J+1} = T$, we first collect triplets $(\tilde{r}_{j-1},\tilde{r}_j,\tilde{r}_{j+1})$, $\forall\left\lbrace 1,2,\ldots, J\right\rbrace$ and we calculate ${\rm CS}(\tilde{r}_j) := C_{\tilde{r}_{j-1},\tilde{r}_{j+1}}^{\tilde{r}_j}(\boldsymbol{X}),$ with $C_{s,e}^b(\boldsymbol{X})$ being the relevant contrast function. For $m = {\rm argmin}_{j}\left\lbrace {\rm CS}(\tilde{r}_j) \right\rbrace$ we check whether $CS(\tilde{r}_m) \leq \tilde{\tilde{C}}\sqrt{\log T}$, for $\tilde{\tilde{C}} > 0$; in the proofs of Theorems \ref{consistency_theorem2} and \ref{consistency_theorem_trend2}, $\tilde{\tilde{C}} = 2\sqrt{2}$ but smaller values could be sufficient; see for example Corollary \ref{cor:bound}. If $CS(\tilde{r}_m) \leq \tilde{\tilde{C}}\sqrt{\log T}$, we remove $\tilde{r}_m$ from $\tilde{S}$, reduce $J$ by 1, relabel the remaining estimates (in increasing order) in $\tilde{S}$, and repeat this estimate removal process, which is carried out in a way such that once the set $\tilde{S}$ contains $N$ estimates, then for $j=1,2,\ldots,N$, each $\tilde{r}_j$ is within a distance of $C^*\left(\log T\right)^{\alpha}$ from the true change-point $r_j$. We keep removing estimates until $\tilde{S} = \emptyset$. 

At the end of this change-point removal approach, we collect the estimates in 
\begin{equation}
\label{orderedcpts}
\boldsymbol{b} = \left(b_1,b_2,\ldots,b_J\right),
\end{equation}
where $b_J$ is the estimate that was removed first, $b_{J-1}$ is the one that was removed second, and so on. From now on, the vector $\boldsymbol{b}$ is called the solution path and is used to give a range of different fits. We define the collection $\left\lbrace\mathcal{M}_j\right\rbrace_{j = 0,1,\ldots,J}$ where $\mathcal{M}_{0} = \emptyset$ and $\mathcal{M}_j = \left\lbrace b_1,b_2,\ldots,b_j\right\rbrace$. For $j = 2,\ldots,J$, let $\tilde{b}_1 < \ldots < \tilde{b}_j$ be the sorted elements of $\mathcal{M}_j$. Among the collection of models $\left\lbrace \mathcal{M}_j\right\rbrace_{j=0,1,\ldots,J}$, we propose to select the one that minimizes the strengthened Schwarz Information Criterion \citep{Liu, Fryzlewicz_WBS}, defined as
\begin{equation}
\label{sSIC}
{\rm sSIC}(j) = -2\sum_{k=1}^{j+1}\ell\left(X_{\tilde{b}_{k-1} + 1},\ldots,X_{\tilde{b}_k};\hat{\theta}_k\right) + n_{j}\left(\log T\right)^{\alpha},
\end{equation}
where $\tilde{b}_0 = 0$ and for each collection $\mathcal{M}_j$, $\tilde{b}_{j+1} = T$ and $\hat{\theta}_1,\hat{\theta}_2,\ldots,\hat{\theta}_{j+1}$ are the maximum likelihood estimators of the segment parameters for the model \eqref{new_model} with change-point locations $b_1,b_2,\ldots,b_j$. The quantity $n_j$ is the total number of estimated parameters related to $\mathcal{M}_j$. For example, if we do not consider the change-point locations as free parameters, then in the scenario of piecewise-constant mean $n_j = j+1$ (the constant values for each of the $j+1$ segments), while in the scenario of continuous and piecewise-linear signals $n_j = j+2$ (the starting intercept and slope and the $j$ changes in the slope). We mention that if the continuity constraint is to be removed, then $n_j$ would be equal to $2j+2$ (the constant and slope values for the $j+1$ segments). If now we consider the change-point locations to be free parameters, then we just need to add $j$ in the above values for $n_j$ in the different scenarios.

In the algorithm we have referred to three parameters: $C^*, \tilde{\tilde{C}}$, and $\alpha$. Although we do not give a recipe for the choice of $C^*$ and $\tilde{\tilde{C}}$, Section \ref{subsec:var_proposed} describes how to circumvent their choice. With respect to $\alpha$, taking its value to be equal to 1 in \eqref{sSIC} gives the standard SIC penalty, but our theory requires $\alpha > 1$. In practice we use $\alpha = 1.01$ in order to remain close to SIC. Theorems \ref{consistency_theorem2} and \ref{consistency_theorem_trend2} below give the consistency results for the piecewise-constant and continuous piecewise-linear models, based on the sSIC approach. The proof of Theorem \ref{consistency_theorem2} is in the supplementary material and the same approach can be followed to prove Theorem \ref{consistency_theorem_trend2}.
\begin{theorem}
\label{consistency_theorem2}
Let $\left\lbrace X_t \right\rbrace_{t=1,2,\ldots,T}$ follow model \eqref{new_model} under piecewise-constancy and let the assumptions of Theorem \ref{consistency_theorem} hold. Let $N$ and $r_j, j=1,2,\ldots,N$ be the number and locations of the change-points. Let $N \leq J$, where $J$ can also grow with $T$. In addition, let $\alpha > 1$ be such that $\left(\log T\right)^\alpha = o(\delta_T\underline{f}_T^2)$ is satisfied, where $\delta_T$ and $\underline{f}_T$ are defined in (A1). With $\left\lbrace\mathcal{M}_{j}\right\rbrace_{j=0,1,\ldots,J}$ being the set of candidate models obtained by the solution path algorithm, we define $\hat{N} = {\rm argmin}_{j=0,1,\ldots,J}\;{\rm sSIC}(j)$. Then, there exist positive constants $C_1, C_2$, which do not depend on $T$, such that for $\Delta_j^f = \left|f_{r_j+1} - f_{r_j}\right|$,
\begin{equation}
\label{mainresult_theorem_ic}
\mathbb{P}\,\left(\hat{N} = N, \max_{j=1,2,\ldots,N}\left(\left|\hat{r}_j - r_j\right|\left(\Delta_j^f\right)^2\right) \leq C_1\left(\log T\right)^{\alpha}\right) \geq 1 - \frac{C_2}{T}.
\end{equation}
\end{theorem}
\begin{theorem}
\label{consistency_theorem_trend2}
Let $\left\lbrace X_t \right\rbrace_{t=1,2,\ldots,T}$ follow model \eqref{new_model} under continuous piecewise-linearity and let the assumptions of Theorem \ref{consistency_theorem_trend} hold. Let $N$ and $r_j, j=1,2,\ldots,N$ be the number and locations of the change-points. Let $N \leq J$, where $J$ can also grow with $T$. In addition, let $\alpha > 1$ be such that $\left(\log T\right)^\alpha = o(\delta_T^3\underline{f}_T^2)$ is satisfied, where $\delta_T$ and $\underline{f}_T$ are defined in (A2). With $\left\lbrace\mathcal{M}_{j}\right\rbrace_{j=0,1,\ldots,J}$ being the set of candidate models obtained by the solution path algorithm, we define $\hat{N} = {\rm argmin}_{j=0,1,\ldots,J}\;{\rm sSIC}(j)$. Then, there exist positive constants $C_1, C_2$, which do not depend on $T$, such that for $\Delta_j^f = \left|2f_{r_j} - f_{r_{j}+1} - f_{r_{j}-1}\right|$,
\begin{equation}
\label{mainresult_theorem_trend_ic}
\mathbb{P}\,\left(\hat{N} = N, \max_{j=1,2,\ldots,N}\left(\left|\hat{r}_j - r_j\right|\left(\Delta_j^f\right)^{2/3}\right) \leq C_1(\log T)^{\alpha/3}\right) \geq 1 - \frac{C_2}{T}.
\end{equation}
\end{theorem}
We note that our solution path algorithm, explained in detail in Section \ref{supp:solution_path_algorithm_parts} of the supplementary material, allows $J$, the number of the detections from the already explained overestimation process, to grow with $T$. The quantities on the right hand sides of \eqref{mainresult_theorem_ic} and \eqref{mainresult_theorem_trend_ic} are $1-\mathcal{O}\left(1/T\right)$; the same order as those in \eqref{mainresult_theorem} and \eqref{mainresult_theorem_trend}. The lowest admissible $\delta_T\underline{f}_T^2$ and $\delta_T^3\underline{f}_T^2$ in Theorems \ref{consistency_theorem2} and \ref{consistency_theorem_trend2}, respectively, are slightly larger than the same quantities in the thresholding approach. Our empirical expertise suggests that SIC-based approaches tend to exhibit better practical behaviour for signals that have a moderate number of change-points and/or large spacings between them. A hybrid that combines the advantages of the thresholding and the SIC-based approach is introduced in Section \ref{subsec:hybrid}.
\section{Computational complexity and practicalities}
\label{sec:variants}
\subsection{Computational cost}
\label{subsec:Compcost}
With $\delta_T$ being the minimum distance between two change-points, and $\lambda_T$ the interval-expansion parameter, we use $\lambda_T < \delta_T$. We note that while $\delta_T$ is unknown, choosing $\lambda_T$ small enough guarantees with high probability that this requirement holds; see Section \ref{subsec:parameter_choice} for how to choose $\lambda_T$ in order to obtain good accuracy performance and at the same time low computational cost. Now, since $K = \lceil T/\lambda_T\rceil > \lceil T/\delta_T \rceil$ and the total number, $M_{ID}$, of intervals required to scan the data is no more than $2K$ ($K$ intervals from each expanding direction), in the worst case scenario we have $M_{ID} = 2K > 2\left\lceil\frac{T}{\delta_T}\right\rceil$.
As a comparison, in WBS and NOT one needs to draw at least $M$ intervals where $M \geq \left(9T^2/\delta_T^2\right)\log\left(T^2/\delta_T\right)$. The lower bound for $M$ in WBS and NOT is $\mathcal{O}\left(T^2/\delta_T^2\right)$ up to a logarithmic factor, whereas the lower bound for $M_{ID}$ is $\mathcal{O}\left(T/\delta_T\right)$. This results in great speed gains of ID over WBS and NOT. The reason behind this significant difference in the computational complexity of the methods is that in WBS and NOT both the start- and end-points of the randomly drawn intervals have to be chosen, whereas in ID, depending on the expanding direction, we keep the start- or the end-point fixed.
\subsection{Parameter choice}
\label{subsec:parameter_choice}
{\textbf{Choice of the threshold constant.}} We start with an upper bound on the constant $C$, as defined in \eqref{threshold}, for the case of piecewise-constant signals when the error terms $\epsilon_t$ are i.i.d. from the Gaussian distribution. We note that this result is of independent interest. Our model is as in \eqref{our_model} for stationary $\epsilon_t$. For any vector $\boldsymbol{y} \in \mathbb{R}^{T}$, we define
\begin{equation}
\tilde{y}_{s,e}^b  = \sqrt{\frac{e-b}{n(b-s+1)}}\sum_{t=s}^{b}y_t - \sqrt{\frac{b-s+1}{n(e-b)}}\sum_{t=b+1}^{e}y_t;\quad
\tilde{\tilde{y}}_{s,e} = \frac{\sum_{t=s}^e y_t}{(e-s+1)^{1/2}},
\label{notation_Corollary}
\end{equation}
where $1 \leq s \leq b < e \leq T$ and $n = e-s+1$. It can be shown that if $\epsilon_t$, are serially independent and their distribution is symmetric about zero (for example i.i.d. standard Gaussian random variables), then the sequence $\left\lbrace\epsilon_t\right\rbrace_{t=1}^T$ satisfies
\begin{equation}
\label{assumption_corollary}
\forall\,\gamma>0,\quad P\left(\min_{s,b,e}\,\, \tilde{\tilde{\epsilon}}_{s,b} \tilde{\tilde{\epsilon}}_{b+1,e} < -\gamma \right) \le  P\left(\max_{s,b,e}\,\, \tilde{\tilde{\epsilon}}_{s,b} \tilde{\tilde{\epsilon}}_{b+1,e} > \gamma \right)
\end{equation}
The following corollary indicates that as $T\rightarrow\infty$, we have that $C \leq \sqrt{3/2}$, meaning that the threshold can be taken to be at most $\sqrt{3\log T}$. This value of $\sqrt{3}$ is smaller than the constant used in the solution path algorithm of Section \ref{subsec:sSIC} ($\tilde{\tilde{C}} = 2\sqrt{2}$), which can however be used to give explicit upper bounds on the consistency results as explained in Theorems \ref{consistency_theorem} and \ref{consistency_theorem2}; in contrast, Corollary \ref{cor:bound} does not give an explicit upper bound for the probability related to the consistency result as expressed in \eqref{corollary_result}. We highlight that the aforementioned bound on the constant and its proof are simpler than the results presented in \cite{Fang_Li_Siegmund} which involve the manipulation of complex distributions. The proof is in the supplementary material.
\begin{corollary}
\label{cor:bound}
Let $\{\epsilon_t\}_{t=1}^T$ be i.i.d. $N(0, \sigma^2)$. For any $\delta > 0$,
\begin{equation}
\label{corollary_result}
\Prob\left( \exists_{s,a,e}\,\, \left(\tilde{\epsilon}_{s,e}^b\right)^2 > 3 \sigma^2 (1 + \delta) \log\,T   \right) \xrightarrow[T \rightarrow \infty]{} 0.
\end{equation}
\end{corollary}
For the practical choice of the values of $C$ and $\tilde{C}$, in \eqref{threshold} and \eqref{threshold_linear}, respectively, we ran a large-scale simulation study involving a wide range of signals. The number of change-points, $N$, was generated from the Poisson distribution with rate parameter $N_{\alpha}\in \left\lbrace 4,8,12 \right\rbrace$. For  $T \in \left\lbrace 100,200,500,1000,2000,5000 \right\rbrace$, we uniformly distributed the change-points in $\left\lbrace 1,2,\ldots,T \right\rbrace$. Then, for piecewise-constant (or continuous piecewise-linear) signals, at each change-point location we introduced a jump (or a slope change) which followed the normal distribution with mean zero and variance $\sigma^2 \in \left\lbrace 1,3,5 \right\rbrace$. Standard Gaussian noise was then added onto the simulated signal. For each value of $N_\alpha$, $\sigma^2$ and $T$ we generated 1000 replicates and estimated the number of change-points using ID with threshold $\zeta_T$ as in \eqref{threshold} and \eqref{threshold_linear} for a variety of constant values $C$ and $\tilde{C}$. The best behaviour occurred when, approximately, $C=1.05$ and $\tilde{C} = 1.4$. These values will be referred to as the default constants and they hold true for all signals that satisfy the assumption of the error terms $\epsilon_t$ being i.i.d. Gaussian. We note that the value of $\tilde{C} = 1.4$ does not violate Corollary \ref{cor:bound} because the result expressed in the latter is only for piecewise-constant signals, while the constant $\tilde{C}$ applies to the scenario of continuous, piecewise-linear signals.
Due to the fact that the contrast function used is based on local averaging, the CLT can be used to show that for sufficiently large sample size $T$, ID is robust when the normality assumption is not satisfied; this has also been explored in \cite{Fearnhead2020}. Also, pre-averaging is a practical approach that we employed in Section \ref{subsec:ht_variant} for such cases with error departures from Gaussianity.

In the SIC-based approach of Section \ref{subsec:sSIC}, we started by detecting change-points using threshold $\tilde{\zeta}_T < \zeta_T$. In practice, we take the constants related to $\tilde{\zeta}_T$, namely $C_{\tilde{\zeta}_T}$ and $\tilde{C}_{\tilde{\zeta}_T}$ as defined in Section \ref{subsec:sSIC}, to be 0.9 and 1.25, respectively.
\vspace{0.05in}
\\
{\textbf{Choice of the expansion parameter $\lambda_T$.}} We start by highlighting that our numerical experience suggests that ID is robust to small changes in the value of $\lambda_T$; for a small-scale simulation study when the value of $\lambda_T$ changes significantly ($\lambda_T \in \left\lbrace 5,20,80\right\rbrace$), see Section \ref{sec:investigation_lambda} of the supplementary material. Theoretically, for a given signal, the change-point detection results obtained from ID are the same for any value of $\lambda_T$ used which is less than the minimum spacing between two successive change-points. The computational cost of running ID is inversely proportional to the size of the expansion parameter; the smaller the $\lambda_T$, the more intervals we need to work on. However, the low computational complexity of our algorithm allows us to take $\lambda_T$ to be as small as the value of three leading to very good accuracy even for signals with frequent change-points. We now give example execution times for two models, (T1) and (T2) defined below, on a 3.60GHz CPU with 16 GB of RAM. We employed the ID-variant for long signals explained in Section \ref{subsec:var_proposed}.
\begin{itemize}[leftmargin = 0.32in]
\item[(T1)] Length $l_j = 7\times 10^{j}, j=3,4,5$, with change-points at $7, 14, \ldots, l_j - 7$ and values between them $0,4,0,4,\ldots, 0, 4$. The standard deviation is $\sigma=0.5$.
Execution times: 0.31s ($j=3$), 2.25s ($j=4$), 26.41s ($j=5$).
\item[(T2)] Length $l_j = 7 \times 10^{j}, j=3,4,5$, with no change-points. We use $\sigma=1$. Execution times: 0.64s ($j=3$), 3.01s ($j=4$), 30.35s ($j=5$).
\end{itemize}
\subsection{Variants}
\label{subsec:var_proposed}
Here, we describe three different ways to further improve ID's practical performance.
\vspace{0.05in}
\\ 
{\textit{Long signals:}} If $T$ is large, we split the given data sequence uniformly into smaller parts (windows), to which ID is then applied. In practical implementations, the length of the window is 3000 and we apply this structure only when $T > 12000$, because for smaller values of $T$ there are no significant differences in the execution times of ID and its window-based variant. The computational improvement that this structure offers is explained in Section \ref{sec:Supp_wind_impr} of the supplement.
\vspace{0.05in}
\\
{\textit{Restarting after detection:}} In practice, instead of starting from the end-point $e^*$ (or start-point $s^*$) of the right-expanding (or left-expanding) interval where a detection occurred, we could start from the estimated change-point, $\hat{b}$. This alternative, labelled ${\rm ID}_{det}$, leads to accuracy improvement without affecting the speed of the method.
\vspace{0.05in}
\\
{\textit{Faster solution path algorithm:}} In practice, we use only Part 4 of the solution path algorithm described in Section \ref{supp:solution_path_algorithm_parts} of the supplement because it is quicker and conceptually simpler; it requires only the choice of $\alpha$ and tends not to affect ID's accuracy.
\subsection{Alternative model selection criteria}
\label{subsec:hybrid}
{\textit{A hybrid between thresholding and SIC stopping rules:}} 
For signals with a large number of regularly occurring change-points, the threshold-based ID tends to behave better than the SIC-based procedure. As explained after Theorems \ref{consistency_theorem2} and \ref{consistency_theorem_trend2}, this is unsurprising because SIC-based approaches typically perform better on signals with a moderate number of change-points separated by larger spacings. This difference in ID's behaviour between the threshold- and SIC-based versions is what motivates us to introduce a hybrid of these two stopping rules with minimal parameter choice, which works as follows. Firstly, we estimate the change-points using the threshold approach ${\rm ID}_{det}$ with $\lambda_T^{th} = 3$. If the estimated change-points are more than a constant $J^*$, then the result is accepted and we stop. Otherwise, the hybrid method proceeds to detect the change-points using the SIC-based approach with $\lambda_T > \lambda_T^{th}$, since the already-applied thresholding rule has not suggested a signal with many change-points. In the simulations, we use $J^* = 100$, $\lambda_T=10$.\\
\vspace{0.05in}
\\
{\textit{Steepest Drop to Low Levels (SDLL):}}
We also combine ID with the SDLL model selection method introduced in \cite{Fryzlewicz_WBS2}.

\subsection{Extension to different noise structures}
\label{subsec:ht_variant}
This section describes how to use ID when the noise is not Gaussian. We pre-process the data in order to obtain a noise structure that is closer to Gaussianity. For a given scale number $s$ and data $\left\lbrace X_t \right\rbrace_{t=1,2,\ldots,T}$, let $Q= \lceil T/s \rceil$ and $\tilde{X}_q = \frac{1}{s}\sum_{t=(q-1)s+1}^{qs}X_t$, for $q=1,2,\ldots,Q-1$, while $\tilde{X}_Q = (T - (Q-1)s)^{-1}\sum_{t=(Q-1)s+1}^{T}X_t$. We apply ID on $\left\lbrace\tilde{X}_q\right\rbrace_{q=1,2,\ldots,Q}$ to obtain the estimated change-points, namely $\tilde{\tilde{r}}_1,\tilde{\tilde{r}}_2,\ldots,\tilde{\tilde{r}}_{\hat{N}}$, in increasing order. To estimate the original locations of the change-points we define $\hat{r}_k = \left(\tilde{\tilde{r}}_k-1\right)s + \left\lfloor \frac{s}{2} + 0.5 \right\rfloor, \; k=1,2,\ldots,\hat{N}$. The larger the value of $s$, the closer the distribution of the noise to normal, but the more the amount of pre-processing. In simulations presented in Section \ref{sec:simulation}, we use $s=3$ for the case of Student-$t_5$ distributed noise, while if the tails are heavier (Student-$t_3$), we set $s=5$. The hybrid version of ID will be employed on $\left\lbrace\tilde{X}_q\right\rbrace_{q=1,2,\ldots,Q}$ and in order to be consistent with the choice of the expansion parameter, we take $\lambda^*_T = \left\lfloor \lambda_T/s\right\rfloor$. In practice, for unknown noise, our recommendation is to set $s=5$.\\
\section{Simulations}
\label{sec:simulation}
This section compares the performance of ID with competitors. The main change-point detection $\mathsf{R}$ functions in the competing packages were called using their default input arguments, which does not always allow direct like-for-like comparisons of the methods.
Whenever needed (difficult signal structures), and in order to help the competitors capture their best possible performance, the input values were adjusted accordingly. The $\mathsf{R}$ code used for the simulation study is available from Github at \url{https://github.com/Anastasiou-Andreas/IDetect/blob/master/R/Simulations_used.R}.  Table \ref{table_methods} shows the competitors used. 
\begin{table}[h]
\centering
\caption{The competing methods used in the simulation study}
\begin{tabular}{|l|l|l|l|}
\cline{1-4}
 & & & \\
Type of signal & Method notation & Reference & \textsf{R} package\\
\hline
 & PELT & \cite{Killick_PELT} & \textbf{changepoint}\\
 & NP.PELT & \cite{Haynes_npPelt} & \textbf{changepoint.np}\\
 & S3IB  & \cite{Rigaill} & \textbf{Segmentor3IsBack}\\
 & CumSeg  & \cite{Muggeo_Adelfio} & \textbf{cumSeg}\\
Piecewise-constant & CPM & \cite{Ross_CPM} & \textbf{cpm}\\
 & WBS & \cite{Fryzlewicz_WBS} & \textbf{wbs}\\
 & WBS2 & \cite{Fryzlewicz_WBS2} & \textbf{breakfast}\\
 & NOT & \cite{NOT_paper} & \textbf{not}\\
 & FDR & \cite{LiMunk} & \textbf{FDRSeg}\\
 & TGUH & \cite{Fryzlewicz_tail-greedy} & \textbf{breakfast}\\
\hline
 & NOT & \cite{NOT_paper} & \textbf{not}\\
 & TF & \cite{KimTF} & -\\
Continuous piecewise-linear & CPOP & \cite{FearnheadCPOP} & -\\
 & MARS & \cite{Friedman} & \textbf{earth}\\
 & FKS & \cite{Spiriti} & \textbf{freeknotsplines}\\
\hline
\end{tabular}
\label{table_methods}
\end{table}
CPOP is employed based on $\mathsf{R}$ code found in \url{http://www.research.lancs.ac.uk/portal/en/datasets/cpop(56c07868-3fe9-4016-ad99-54439ec03b6c).html} and TF in \url{https://stanford.edu/~boyd/l1_tf}. For WBS, we give results based on both the information criterion and the thresholding (for $C=1$) stopping rules. The notation is WBSIC and WBSC1, respectively. With respect to WBS2, its performance is investigated based on the SDLL model selection criterion introduced in \cite{Fryzlewicz_WBS2}. In the \textbf{cpm} package, the threshold is decided through the average run length (ARL) until a false positive occurs. In our simulations, we give results for ${\rm ARL}=500$ (the default value) and if the signal length, $l_s$, is greater than 500, results are also given for ${\rm ARL} = 1000\lceil ls/1000 \rceil$. The notation is CPM.$l.A$, with $A$ the value of ARL. For FKS, when the number of knots is unknown (the scenario we work in), we need to specify the maximum allowed number of knots. We take this to be 2$N$, with $N$ the true number of change-points. Also, the estimated change-points by FKS are positive real numbers; we take as estimation the closest integer. The proposed ID version  is the hybrid described in Section \ref{subsec:hybrid}. However, we also present the results for two more variants: SDLL and thresholding with constant $\sqrt{3/2}$ (see \eqref{threshold}), which is the upper bound proven in Corollary \ref{cor:bound}. The notation for these variants is ID.SDLL and ID$_{\sqrt{3/2}}$, respectively.
\vspace{0.05in}
\\
{\textbf{A seemingly difficult structure for ID:}} Signals that present the most difficulty to ID are ones in which change-points are concentrated in the middle part of the data and offset each other, as in Figure \ref{hatfunction}. The reason is that due to the left- and right- expanding feature of ID, where one of the two end-points of the interval is kept fixed, the change-points need to be detectable based on relatively ``unbalanced'' (explanation follows directly below) tests, which typically tend to offer poor power. For example, referring again to Figure \ref{hatfunction}, the change-point at 490 will need to be isolated and detected by comparing the means of the data over the long interval $[1,490]$ and a short interval of the form $[491,e_j]$, where $e_j \leq 510$ is the end-point of a right-expanding interval $[1,e_j]$. To be more precise, if the expansion parameter $\lambda = 3$, then $e_j \in \left\lbrace 492, 495, \ldots, 510\right\rbrace$ and therefore our procedure will have seven opportunities to detect the change-point 490 while it is still isolated in intervals that do not contain any other change-points. Even though ID would be expected to struggle in detecting the change-points in such unbalanced intervals, our numerical experience suggests that its performance on such challenging signals is in fact very good and matches or surpasses that of the best competitors; see for example the results in Table \ref{models_Piotr} for the model (M4), which follows this structure.
\begin{figure}[h]
\centering
\includegraphics[width=0.8\textwidth,height=0.18\textheight]{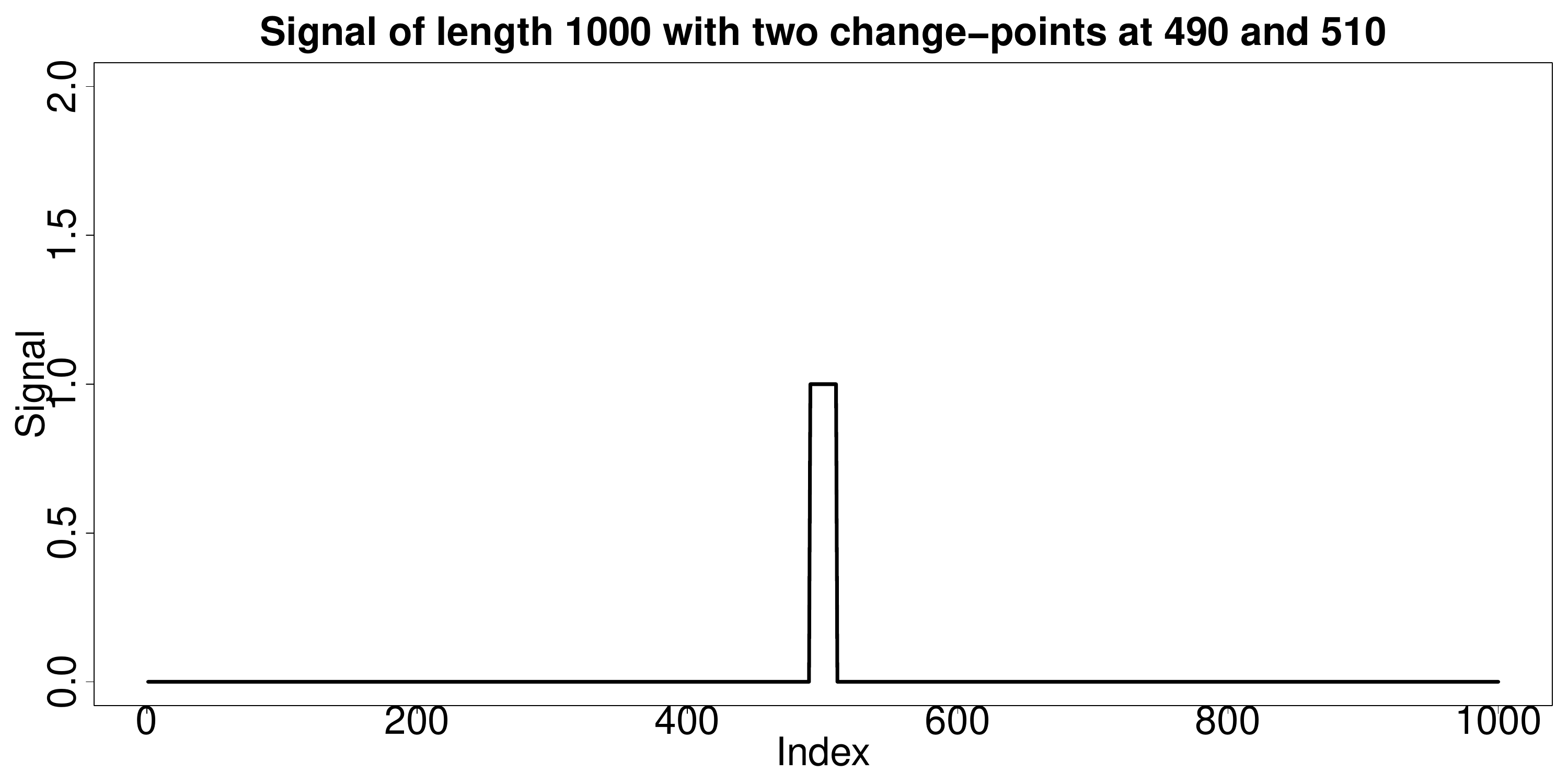}
\caption{Example of a signal of length 1000 with change-points at 490 and 510 offsetting each other.} 
\label{hatfunction}
\end{figure}
All the signals are fully specified in Section \ref{supp:signals} of the supplementary material. Figure \ref{figure_signals} shows examples of the data generated by models (M1) \textit{blocks}, (M2) \textit{teeth}, (M4) \textit{middle-points}, and (W1) \textit{wave 1}. Tables \ref{models_Piotr}--\ref{models_linearW3} summarize the results in the case of i.i.d$.$ Gaussian noise. Table \ref{models_Piotr_ht} presents the behaviour of ID under the setting of i.i.d$.$ scaled Student-$t_d$ noise, where $d=3,5$. More examples are in the supplement.
\begin{figure}[h]
\centering
\includegraphics[width=1\textwidth,height=0.29\textheight]{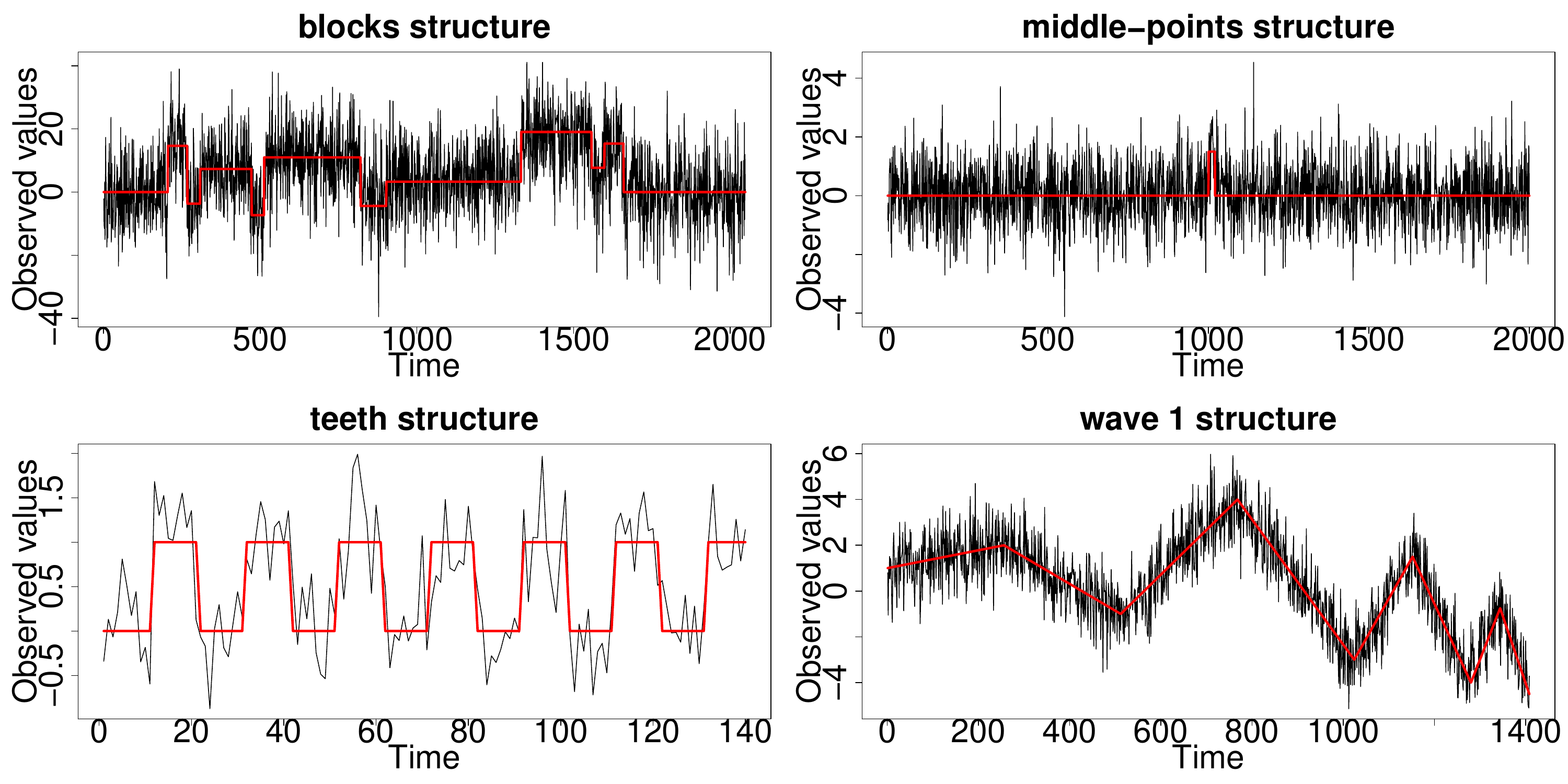}
\caption{Examples of data series, used in simulations. The true signal, $f_t$, is in red.} 
\label{figure_signals}
\end{figure}

We highlight that the NOT, WBSIC, and S3IB methods require the specification of the maximum number, $K_{max}$, of change-points allowed to be detected. If the default values in these methods are lower than the true number of change-points in the simulated examples, then we take $K_{max} = \lceil T/\delta_T\rceil$, where $\delta_T$ is the minimum distance between two change-points. We ran 100 replications for each signal and the frequency distribution of $\hat{N} - N$ for each method is presented. The methods with the highest empirical frequency of $\hat{N} - N = 0$ (or in a neighbourhood of zero, depending on the example) and those within $10\%$ off the highest are given in bold. As a measure of the accuracy of the detected locations, we provide Monte-Carlo estimates of the mean squared error, ${\rm MSE} = T^{-1}\sum_{t=1}^{T}\E\left(\hat{f}_t - f_t\right)^2$, where $\hat{f}_t$ is the ordinary least square approximation of $f_t$ between two successive change-points. In continuous piecewise-linear signals, $\hat{f}_t$ is the splines fit obtained using the \textbf{splines} package in \textsf{R}. The scaled Hausdorff distance, $d_H = n_s^{-1}\max\left\lbrace \max_j\min_k\left|r_j-\hat{r}_k\right|,\max_k\min_j\left|r_j-\hat{r}_k\right|\right\rbrace,$ where $n_s$ is the length of the largest segment, is also given in all examples apart from the signal (NC) in Table \ref{sim_supp2}, which is a constant-mean signal with no change-points.

The average computational time for all methods, apart from FDR, is also provided. FDR is excluded due to its non-uniform procedure in terms of the execution speed for each signal (if a newly obtained signal has length greater than previously treated signals, then FDR estimates the threshold by 5000 Monte-Carlo simulations, which makes it slow). In some cases the average computational time for FKS is not given. We have already explained that we need to pre-specify the maximum allowed number of knots in order for FKS to work. The method is somewhat slow and we exclude the results for FKS when the true change-points are more than 10, as in such cases it would take a significant amount of time to finish all the 100 simulations.
\begin{table}[H]
\centering
\caption{Distribution of $\hat{N} - N$ over 100 simulated data sequences of the piecewise-constant signals (M1)-(M4). The average MSE, $d_H$ and computational time are also given}
{\small{
\begin{tabular}{|l|l|l|l|l|l|l|l|l|l|l|l|}
\cline{1-12}
 & & \multicolumn{7}{|c|}{•} & & & \\
 & & \multicolumn{7}{|c|}{$\hat{N} - N$} & & & \\
Method & Model & $\leq -3$ & $-2$ & $-1$ & 0 & 1 & 2 & $\geq 3$ & MSE & $d_H$ & Time (ms)\\
\hline
PELT &  & 6 & 32 & 50 & 12 & 0 & 0 & 0 & 3.23 & 0.14 & 3\\
NP.PELT &  & 0 & 2 & 27 & 49 & 15 & 5 & 2 & 2.82 & 0.10 & 211.8\\
S3IB  &  & 0 & 7 & 38 & 54 & 1 & 0 & 0 & 2.49 & 0.08 & 343.2\\
CumSeg  &  & 39 & 21 & 38 & 2 & 0 & 0 & 0 & 6.37 & 0.20 & 62.3\\
CPM.$l.$500  &  & 0 & 0 & 0 & 3 & 3 & 4 & 90 & 4.45 & 0.44 & 2.3\\
CPM.$l.$3000  &  & 0 & 0 & 8 & 41 & 26 & 19 & 6 & 3.03 & 0.19 & 3.3\\
WBSC1 & (M1) & 0 & 0 & 11 & 32 & 27 & 19 & 11 & 2.79 & 0.25 & 99.3\\
WBSIC &  & 0 & 3 & 37 & 53 & 7 & 0 & 0 & 2.59 & 0.08 & 99.3\\
WBS2 &  & 0 & 3 & 54 & 31 & 8 & 2 & 2 & 2.64 & 0.09 & 623.3\\
NOT  &  & 0 & 3 & 51 & 43 & 3 & 0 & 0 & 2.61 & 0.10 & 80.7\\
FDR  &  & 0 & 0 & 33 & 54 & 12 & 1 & 0 & 2.51 & 0.09 & -\\
TGUH &  & 0 & 5 & 37 & 49 & 7 & 1 & 1 & 3.30 & 0.08 & 127.4\\
\textbf{ID} &  & 0 & 3 & 30 & \textbf{62} & 5 & 0 & 0 & 2.66 & 0.08 & 23.9\\
ID.SDLL &  & 1 & 2 & 59 & 28 & 5 & 3 & 2 & 2.80 & 0.10 & 20\\
ID$_{ \sqrt{3/2}}$ &  & 0 & 9 & 62 & 28 & 1 & 0 & 0 & 2.75 & 0.09 & 22.3\\
\hline
PELT &  & 85 & 6 & 0 & 9 & 0 & 0 & 0 & 181 $\times 10^{-3}$ & 6.62 & 1.1\\
NP.PELT &  & 84 & 12 & 3 & 1 & 0 & 0 & 0 & 165 $\times 10^{-3}$ & 4.26 & 3.1\\
S3IB  &  & 41 & 15 & 1 & 43 & 0 & 0 & 0 & 117 $\times 10^{-3}$ & 3.73 & 15.2\\
CumSeg  &  & 100 & 0 & 0 & 0 & 0 & 0 & 0 & 251 $\times 10^{-3}$ & - & 3.9\\
CPM.$l.$500 &  & 78 & 4 & 15 & 3 & 0 & 0 & 0 & 145 $\times 10^{-3}$ & 2.96 & 0.4\\
{\textbf{WBSC1}} & (M2) & 1 & 2 & 7 & {\textbf{72}} & 12 & 6 & 0 & 53 $\times 10^{-3}$ & 0.33 & 38.2\\
{\textbf{WBSIC}} &  & 7 & 8 & 1 & {\textbf{68}} & 13 & 3 & 0 & 64 $\times 10^{-3}$ & 1.00 & 38.2\\
{\textbf{WBS2}} &  & 3 & 3 & 4 & {\textbf{71}} & 10 & 4 & 5 & 58 $\times 10^{-3}$ & 0.363 & 30.5\\
{\textbf{NOT}} &  & 9 & 7 & 4 & {\textbf{73}} & 6 & 1 & 0 & 65 $\times 10^{-3}$ & 0.97 & 43.4\\
FDR &  & 14 & 11 & 11 & 55 & 7 & 2 & 0 & 71 $\times 10^{-3}$ & 0.80 & -\\
{\textbf{TGUH}} &  & 4 & 18 & 3 & {\textbf{68}} & 7 & 0 & 0 & 64 $\times 10^{-3}$ & 0.47 & 22.8\\
\textbf{ID} &  & 7 & 7 & 1 & \textbf{74} & 11 & 0 & 0 & 60 $\times 10^{-3}$ & 0.87 & 8.8\\
ID.SDLL &  & 5 & 5 & 6 & 63 & 8 & 4 & 9 & 62 $\times 10^{-3}$ & 0.43 & 3.7\\
ID$_{ \sqrt{3/2}}$ &  & 28 & 13 & 9 & 47 & 3 & 0 & 0 & 84 $\times 10^{-3}$ & 0.90 & 5.3\\
\hline
\textbf{PELT} &  & 0 & 2 & 7 & \textbf{90} & 1 & 0 & 0 & 23 $\times 10^{-3}$ & 0.15 & 1.1\\
NP.PELT &  & 100 & 0 & 0 & 0 & 0 & 0 & 0 & 781 $\times 10^{-3}$ & 1.78 & 4.2\\
S3IB  &  & 98 & 1 & 1 & 0 & 0 & 0 & 0 & 213 $\times 10^{-3}$ & 0.91 & 20.2\\
CumSeg  &  & 0 & 3 & 16 & 72 & 9 & 0 & 0 & 65 $\times 10^{-3}$ & 0.32 & 5.2\\
CPM.$l.$500 &  & 1 & 6 & 87 & 6 & 0 & 0 & 0 & 51 $\times 10^{-3}$ & 0.85 & 0.2\\
WBSC1 & (M3) & 0 & 0 & 0 & 66 & 26 & 7 & 1 & 24 $\times 10^{-3}$ & 0.19 & 37.3\\
WBSIC &  & 0 & 0 & 0 & 64 & 27 & 9 & 0 & 24 $\times 10^{-3}$ & 0.18 & 37.3\\
{\textbf{WBS2}} &  & 0 & 0 & 1 & {\textbf{87}} & 8 & 2 & 2 & 25 $\times 10^{-3}$ & 0.17 & 34.7\\
\textbf{NOT} &  & 0 & 0 & 0 & \textbf{93} & 7 & 0 & 0 & 21 $\times 10^{-3}$ & 0.13 & 118.3\\
FDR &  & 0 & 0 & 2 & 77 & 15 & 5 & 1 & 23 $\times 10^{-3}$ & 0.17 & -\\
\textbf{TGUH} &  & 0 & 0 & 1 & \textbf{91} & 6 & 2 & 0 & 25 $\times 10^{-3}$ & 0.15 & 25.2\\
\textbf{ID} &  & 0 & 0 & 0 & \textbf{91} & 8 & 1 & 0 & 22 $\times 10^{-3}$ & 0.13 & 9.8\\
\textbf{ID.SDLL} &  & 0 & 0 & 1 & \textbf{97} & 1 & 0 & 1 & 24 $\times 10^{-3}$ & 0.14 & 6.8\\
{\textbf{ID$\mathbf{_{ \sqrt{3/2}}}$}} &  & 0 & 0 & 2 & {\textbf{94}} & 3 & 1 & 0 & 23 $\times 10^{-3}$ & 0.15 & 4.4\\
\hline
PELT & & - & 53 & 0 & 47 & 0 & 0 & 0 & 14 $\times 10^{-3}$ & 0.54 & 6.7\\
NP.PELT & & - & 0 & 0 & 21 & 3 & 34 & 42 & 14 $\times 10^{-3}$ & 0.47 & 395.2\\
\textbf{S3IB} & & - & 12 & 0 & \textbf{87} & 1 & 0 & 0 & 7 $\times 10^{-3}$ & 0.12 & 292.1\\
CumSeg & & - & 100 & 0 & 0 & 0 & 0 & 0 & 23 $\times 10^{-3}$ & -
 & 84.6\\
CPM.$l.$500 & & - & 0 & 0 & 0 & 0 & 6 & 94 & 31 $\times 10^{-3}$ & 0.76 & 14\\
CPM.$l.$2000 & & - & 0 & 0 & 35 & 11 & 22 & 32 & 13 $\times 10^{-3}$ & 0.39 & 20.4\\
WBSC1 & (M4) & - & 0 & 0 & 23 & 20 & 17 & 40 & 13 $\times 10^{-3}$ & 0.50 & 120.8\\
\textbf{WBSIC} & & - & 4 & 0 & \textbf{96} & 1 & 0 & 0 & 5 $\times 10^{-3}$ & 0.04 & 119.2\\
WBS2 & & - & 0 & 1 & 83 & 10 & 4 & 2 & 5 $\times 10^{-3}$ & 0.09 & 666.4\\
\textbf{NOT} & & - &  8 & 0 & \textbf{92} & 0 & 0 & 0 & 6 $\times 10^{-3}$ & 0.08 & 61.8\\
FDR & & - & 0 & 19 & 70 & 10 & 1 & 0 & 9 $\times 10^{-3}$ & 0.07 & -\\
TGUH & & - & 0 & 51 & 40 & 7 & 2  & 0 & 23 $\times 10^{-3}$ & 0.28 & 169.2\\
\textbf{ID} &  & - & 7 & 0 & \textbf{93} & 0 & 0 & 0 & 6 $\times 10^{-3}$ & 0.07 & 42.3\\
ID.SDLL & & - & 0 & 0 & 81 & 4 & 10 & 5 & 7 $\times 10^{-3}$ & 0.10 & 28.7\\
{\textbf{ID$\mathbf{_{ \sqrt{3/2}}}$}} & & - & 1 & 0 & {\textbf{98}} & 1 & 0 & 0 & 5 $\times 10^{-3}$ & 0.05 & 66.4\\
\hline
\end{tabular}}}
\label{models_Piotr}
\end{table}
\begin{table}[H]
\centering
\caption{Distribution of $\hat{N} - N$ over 100 simulated data sequences from the piecewise-constant signal (M5). The average MSE, $d_H$ and computational time are also given}
{\small{
\begin{tabular}{|l|l|l|l|l|l|l|l|l|}
\cline{1-9}
 & \multicolumn{5}{|c|}{•} & & & \\
 & \multicolumn{5}{|c|}{$\hat{N} - N$} & & & \\
Method  & $\leq -500$ & $(-500,-50]$ & $(-50,-10)$ & $[-10,10]$ & $> 10$ & MSE & $d_H$ & Time (s)\\
\hline
PELT  & 100  & 0 & 0 & 0 & 0 & 1.97 & 114.92 & 0.033\\
NP.PELT  & 100 & 0 & 0 & 0 & 0 & 2.25 & 551.89 & 8.976\\
S3IB  & 99 & 1 & 0 & 0 & 0 & 2.23 & 1979.95 & 332.841\\
CumSeg  & 100 & 0 & 0 & 0 & 0 & 2.25 & 1999 & 0.551\\
CPM.$l.$500 & 0 & 45 & 54 & 1 & 0 & 0.19 & 9.00 & 0.002\\
CPM.$l.$20000  & 100 & 0 & 0 & 0 & 0 & 2.23 & 1999 & 1.245\\
WBSC1  & 100 & 0 & 0 & 0 & 0 & 1.51 & 35.26 & 12.272\\
WBSIC & 100 & 0 & 0 & 0 & 0 & 2.25 & 1999 & 12.272\\
WBS2 & 0 & 0 & 0 & {\textbf{100}} & 0 & 0.14 & 0.54 & 5.796\\
NOT  & 100 & 0 & 0 & 0 & 0 & 2.25 & 1999 & 0.484\\
FDR  & 0 & 0 & 0 & 5 & 95 & 0.14 & 0.51 & -\\
\textbf{TGUH} & 0 & 0 & 0 & \textbf{100} & 0 & 0.16 & 0.84 & 0.794\\
\textbf{ID} & 0 & 0 & 0 & \textbf{100} & 0 & 0.14 & 0.99 & 0.785\\
\textbf{ID.SDLL} & 0 & 0 & 0 & \textbf{100} & 0 & 0.14 & 0.71 & 120.601\\
ID$_{ \sqrt{3/2}}$ & 0 & 82 & 18 & 0 & 0 & 0.22 & 2.48 & 1.363\\
\hline
\end{tabular}}}
\label{distribution_M2}
\end{table}
\begin{table}[H]
\centering
\caption{Distribution of $\hat{N} - N$ over 100 simulated data sequences from (NC). Also the average MSE and computational times for each method are given}
{\small{
\begin{tabular}{|l|l|l|l|l|l|l|}
\cline{1-7}
 & \multicolumn{4}{|c|}{•} & & \\
 & \multicolumn{4}{|c|}{$\hat{N} - N$} & & \\
Method  &  0 & 1 & 2 & $\geq 3$ & MSE & Time (s)\\
\hline
\textbf{PELT}  & \textbf{100} & 0 & 0 & 0 & 39 $\times 10^{-5}$ & 0.004\\
NP.PELT & 8 & 1 & 23 & 68 & 999 $\times 10^{-5}$ & 1.077\\
\textbf{S3IB} & \textbf{100} & 0 & 0 & 0 & 39 $\times 10^{-5}$ & 0.715\\
\textbf{CumSeg}  & \textbf{100} & 0 & 0  & 0 & 39 $\times 10^{-5}$ & 0.115\\
CPM.$l.$500  & 0 & 0 & 0 & 100 & 2957 $\times 10^{-5}$ & 0.011\\
CPM.$l.$3000  & 28 & 6 & 39 & 27 & 628 $\times 10^{-5}$ & 0.031\\
WBSC1  & 15 & 18 & 20 & 47 & 653 $\times 10^{-5}$ & 0.149\\
\textbf{WBSIC}  & \textbf{99} & 1 & 0 & 0 & 44 $\times 10^{-5}$ & 0.149\\
WBS2  & 89 & 5 & 4 & 2 & 82 $\times 10^{-5}$ & 0.958\\
\textbf{NOT}  & \textbf{99} & 1 & 0 & 0 & 44 $\times 10^{-5}$ & 0.089\\
\textbf{FDR}  & \textbf{96} & 4 & 0 & 0 & 47 $\times 10^{-5}$ & -\\
\textbf{TGUH} & \textbf{100} & 0 & 0 & 0 & 39 $\times 10^{-5}$ & 0.217\\
\textbf{ID}  & \textbf{100} & 0 & 0 & 0 & 39 $\times 10^{-5}$ & 0.172\\
{\textbf{ID.SDLL}}  & {\textbf{90}} & 4 & 0 & 6 & 182 $\times 10^{-5}$ & 0.069\\
{\textbf{ID$\mathbf{_{ \sqrt{3/2}}}$}} & {\textbf{99}} & 0 & 1 & 0 & 41 $\times 10^{-5}$ & 0.259\\
\hline
\end{tabular}}}
\label{sim_supp2}
\end{table}
\begin{table}[H]
\centering
\caption{Distribution of $\hat{N} - N$ over 100 simulated data sequences from the continuous piecewise-linear signals (W1), (W3), and (W4). The average MSE, $d_H$ and computational time for each method are also given}
{\small{
\begin{tabular}{|l|l|l|l|l|l|l|l|l|l|l|l|}
\cline{1-12}
 & & \multicolumn{7}{|c|}{•} &  &  & \\
 & & \multicolumn{7}{|c|}{$\hat{N} - N$} &  &  & \\
Method & Model & $\leq -3$ & $-2$ & $-1$ & 0 & 1 & 2 & $\geq 3$ & MSE & $d_H$ & Time (s)\\
\hline
\textbf{NOT} & & 0 & 0 & 0 & \textbf{99} & 1 & 0 & 0 & 0.016 & 0.063 & 0.343\\
TF  & & 0 & 0 & 0 & 0 & 0 & 0 & 100 & 0.029 & 0.451 & 1.125\\
\textbf{CPOP} & & 0 & 0 & 0 & \textbf{99} & 1 & 0 & 0 & 0.013 & 0.055 & 23.190\\
MARS  & (W1) & 0 & 0 & 2 & 9 & 42 & 39 & 8 & 0.034 & 0.200 & 0.011\\
FKS  & & 0 & 0 & 0 & 72 & 22 & 6 & 0 & 0.015 & 0.109 & 270.385\\
\textbf{ID} & & 0 & 0 & 0 & \textbf{91} & 9 & 0 & 0 & 0.030 & 0.104 & 0.036\\
\textbf{ID.SDLL} & & 0 & 0 & 0 & \textbf{98} & 0 & 1 & 1 & 0.033 & 0.098 & 0.030\\
\hline
NOT  &  & 0 & 0 & 27 & 0 & 6 & 18 & 49 & 0.035 & 0.571 & 0.163\\
TF  & & 0 & 0 & 0 & 0 & 0 & 0 & 100 & 606.523 & 0.432 & 0.117\\
\textbf{CPOP} &  & 0 & 0 & 0 & \textbf{90} & 6 & 2 & 2 & 0.010 & 0.097 & 0.078\\
MARS  & (W3) & 91 & 0 & 7 & 2 & 0 & 0 & 0 & 3.991 & 2.258 & 0.008\\
{\textbf{FKS}} &  & 0 & 0 & 0 & {\textbf{90}} & 9 & 1 & 0 & 0.010 & 0.097 & 67.582\\
\textbf{ID} &  & 0 & 0 & 0 & \textbf{99} & 1 & 0 & 0 & 0.013 & 0.101 & 0.017\\
\textbf{ID.SDLL} &  & 0 & 0 & 0 & \textbf{93} & 4 & 1 & 2 & 0.022 & 0.130 & 0.010\\
\hline
NOT  & & 0 & 1 & 14 & 20 & 16 & 20 & 29 & 0.109 & 0.998 & 0.958\\
TF  & & 0 & 0 & 0 & 0 & 0 & 0 & 100 & 660.399 & 0.465 & 1.349\\
\textbf{CPOP}  & (W4) & 0 & 0 & 0 & \textbf{92} & 8 & 0 & 0 & 0.015 & 0.084 & 1.627\\
MARS  &  & 100 & 0 & 0 & 0 & 0 & 0 & 0 & 22.058 & 1.609 & 0.019\\
\textbf{ID} &  & 0 & 0 & 0 & \textbf{92} & 8 & 0 & 0 & 0.038 & 0.123 & 0.045\\
\textbf{ID.SDLL} &  & 0 & 0 & 0 & \textbf{92} & 4 & 1 & 3 & 0.062 & 0.120 & 0.025\\
\hline
\end{tabular}}}
\label{models_linearW1W2}
\end{table}
\vspace{-0.11in}
\begin{table}[H]
\centering
\caption{Distribution of $\hat{N} - N$ over 100 simulated data sequences of the continuous piecewise-linear signal (W2). The average MSE, $d_H$ and computational time for each method are also given}
{\small{
\begin{tabular}{|l|l|l|l|l|l|l|l|l|l|l|}
\cline{1-11}
 & \multicolumn{7}{|c|}{•} & & & \\
 & \multicolumn{7}{|c|}{$\hat{N} - N$} & & & \\
Method & $\leq -90$ & $(-90,-1)$ & $-1$ & 0 & $1$ & $(1,60]$ & $> 60$ & MSE & $d_H$ & Time (s)\\
\hline
NOT  & 100 & 0 & 0 & 0 & 0 & 0 & 0 & 4.731 & 99 & 0.869\\
TF  & 0 & 0 & 0 & 0 & 0 & 0 & 100 & 212.547 & 0.387 & 0.863\\
\textbf{CPOP}  & 0 & 0 & 0 & \textbf{97} & 3 & 0 & 0 & 0.162 & 0.189 & 1.161\\
MARS  & 100 & 0 & 0 & 0 & 0 & 0 & 0 & 4.703 & 98.523 & 0.009\\
\textbf{ID} & 0 & 0 & 0 & \textbf{98} & 2 & 0 & 0 & 0.201 & 0.242 & 0.589\\
\textbf{ID.SDLL} & 0 & 0 & 0 & \textbf{98} & 2 & 0 & 0 & 0.256 & 0.287 & 0.097\\
\hline
\end{tabular}}}
\label{models_linearW3}
\end{table}
\vspace{-0.2in}
\begin{table}[H]
\centering
\caption{ID results for the distribution of $\hat{N} - N$ for the models (M2)-(M4) and (W1), over 100 simulations where the distribution of the noise is Student-$t_d$, for $d=3,5$. The average MSE, $d_H$ and computational time are also given}
{\small{
\begin{tabular}{|l|l|l|l|l|l|l|l|l|l|l|l|}
\cline{1-12}
 & & \multicolumn{7}{|c|}{•} &  &  & \\
 & & \multicolumn{7}{|c|}{$\hat{N} - N$} & & &\\
$d$ & Model & $\leq -3$ & $-2$ & $-1$ & 0 & 1 & 2 & $\geq 3$ & MSE & $d_H$ & Time (ms)\\
\hline
 & (M2) & 6 & 2 & 2 & 74 & 9 & 5 & 2 & $60\times 10^{-3}$ & 0.86 & 9.7\\
5 & (M3) & 0 & 0 & 0 & 75 & 16 & 5 & 4 & $21 \times 10^{-3}$ & 0.16 & 9.2\\
 & (W1) & 0 & 0 & 0 & 86 & 12 & 2 & 0 & $31 \times 10^{-3}$ & 0.23 & 32.8\\
\hline
 & (M2) & 7 & 1 & 2 & 52 & 21 & 8 & 9 & $71\times 10^{-3}$ & 1.18 & 8.7\\
3 & (M3) & 0 & 1 & 0 & 59 & 20 & 13 & 7 & $26 \times 10^{-3}$ & 0.22 & 9.8\\
 & (W1) & 0 & 0 & 0 & 62 & 28 & 4 & 6 & $32 \times 10^{-3}$ & 0.25 & 22.6\\
\hline
\end{tabular}}}
\label{models_Piotr_ht}
\end{table}

With regards to piecewise-constancy, ID is always in the top $10\%$ of the best methods when considering accuracy in any aspect (estimation of $N$, MSE, $d_H$); in most cases it is the best method overall. ID.SDLL is also, in most cases, in the top 10\% of the best performing methods; this provides evidence that the Isolate-Detect algorithm can be combined with various model selection criteria (thresholding, SIC, SDLL) and maintain a good practical behaviour. When the threshold constant, $C$, is equal to $\sqrt{3/2}$, the behaviour of ID remains good for signals that have a moderate number of change-points that are not near each other. As we can see from Table \ref{distribution_M2},
 ID$_{ \sqrt{3/2}}$ seems to struggle in scenarios with a large number of frequently occurring change-points. In continuous piecewise-linear signals, CPOP, ID, and ID.SDLL are in all cases in the top $10\%$ of the best methods in terms of the accurate estimation of $N$. In terms of the MSE and $d_H$, CPOP is by a narrow margin the overall best method, with ID and ID.SDLL coming second and third, respectively. We can deduce that our method exhibits uniformity in detecting with high accuracy the change-points for various different signal structures, a characteristic which is at least partly absent from the majority of its competitors. Furthermore, ID's behaviour is particularly impressive in extremely long signals with a large number of frequently occurring change-points; see Tables \ref{distribution_M2} and 
\ref{models_linearW3}. 
Compared to other well-behaved methods, such as NOT for piecewise-constancy and CPOP for continuous piecewise-linear signals, our methodology has by far the lowest computational cost.
To conclude, the simulation study provides evidence that Isolate-Detect is an accurate, reliable, and quick method for generalized change-point detection.

The results of Table \ref{models_Piotr_ht} are very good for $d=5$ and not too different from those under Gaussian noise. For $d=3$, there is a slight overestimation of the number of change-points. When the tails of the distribution of the noise are significantly heavier than those of the normal distribution, one can obtain better results by increasing the threshold constant. For example, the results in Table \ref{models_Piotr_ht} for $d=3$ were improved when the threshold constant was slightly increased. We highlight that more thorough simulations can be done using our \textsf{R} packages {\textbf{IDetect}} and {\textbf{breakfast}} and code available from  \url{https://github.com/Anastasiou-Andreas/IDetect/blob/master/R/Simulations_used.R}.
\vspace{-0.2in}
\section{Real data examples}
\label{sec:real_data}
\vspace{-0.05in}
\subsection{UK House Price Index}
\label{subsec:hpi}
We investigate the performance of ID on monthly percentage changes in the UK House price index from January 1995 to December 2020 in two London Boroughs: Tower Hamlets and Hackney. The data are available from \url{http://landregistry.data.gov.uk/app/ukhpi} and they were accessed in March 2021. Figure \ref{HPIdata} shows the fits of ID, ID.SDLL, NOT, and TGUH. In both data sets, ID behaves similarly to NOT whereas ID.SDLL's performance is closer to that of TGUH where we detect more change-points. This difference between the examined methods is, in our opinion, due to the fact that ID in this example and NOT detect change-points based on the Schwarz Information Criterion, so fewer estimated change-points can be expected. The detection of two change-points near March 2008 and September 2009 for both boroughs may be related to the financial crisis during that time, which led to a decrease in house prices. As explained in Section \ref{subsec:sSIC}, our methodology returns the solution path defined in \eqref{orderedcpts}, which can be used to obtain different fits; see Section \ref{sec:Supp_more_real_data} in the supplement for more details and for a real-data example where this is useful.
\begin{figure}[h]
\centering
\includegraphics[width=1\textwidth,height=0.3\textheight]{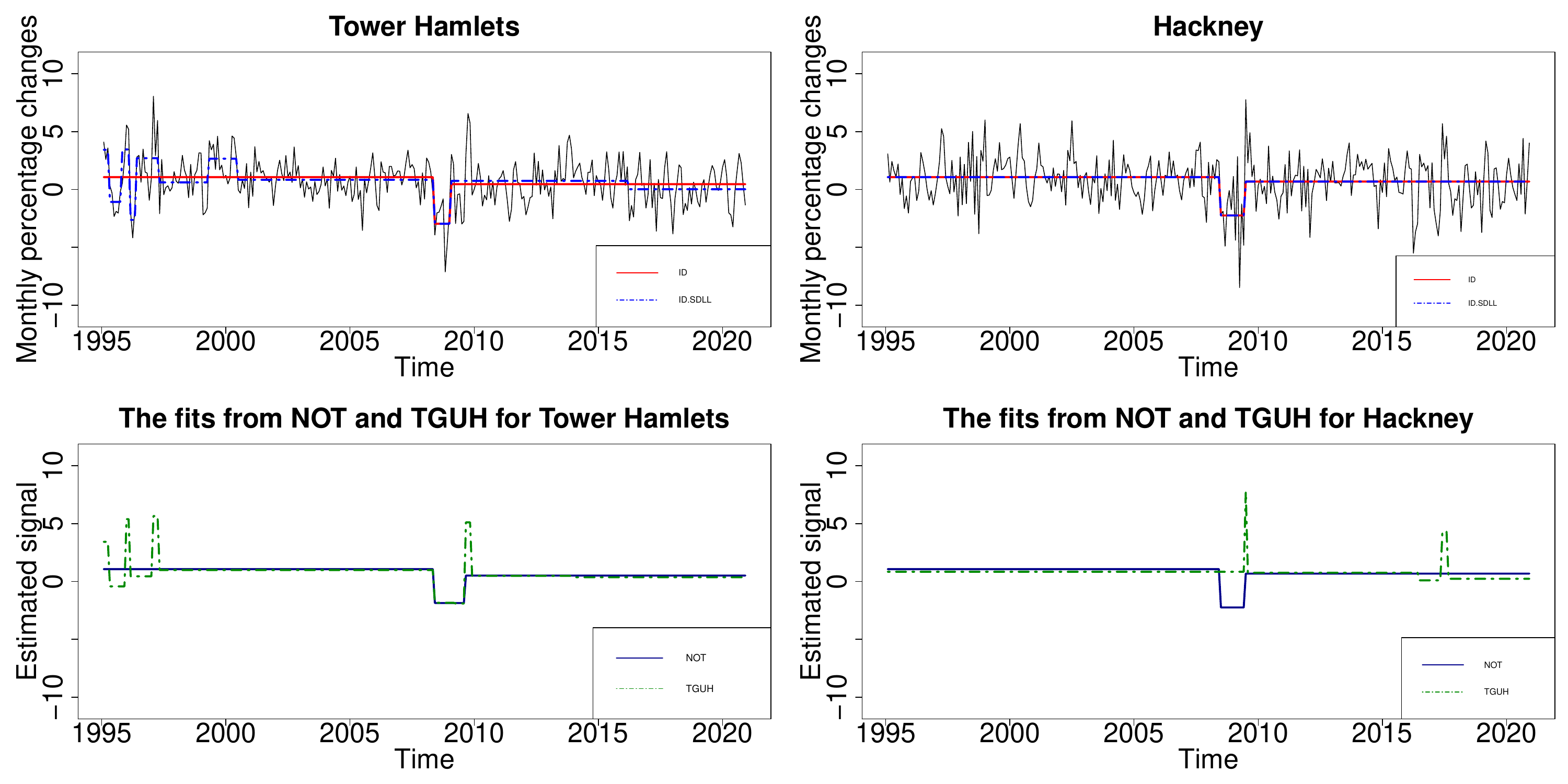}
\caption{{\textbf{Top row:}} The time series and the fitted piecewise-constant mean signals obtained by ID and ID.SDLL for both Tower Hamlets and Hackney. {\textbf{Bottom row:}} NOT (solid) and TGUH (dashed) estimates for Tower Hamlets and Hackney.} 
\label{HPIdata}
\end{figure}
Residual diagnostics have indicated that the behaviour of the raw residuals, $X_t - \hat{f}_t$, in relation to normality and independence is good for all methods. 

\subsection{The COVID-19 outbreak in the UK}
\label{subsec:covid}
The performance of ID is investigated on data from the recent COVID-19 pandemic; we employ a continuous piecewise-linear model on the daily number of lab-confirmed cases in England, as well as on the daily additional COVID-19 associated UK deaths. The data concern the period from the beginning of March 2020 until the end of February 2021 and they are available from \url{https://coronavirus.data.gov.uk}. The data were accessed on the $8^{th}$ of March 2021. Before applying the various methods to the data, we bring the distribution closer to Gaussian with constant variance. To achieve this we perform the Anscombe transform, $a:\mathbb{N} \rightarrow \mathbb{R}$, with $a(x) = 2\sqrt{x+3/8}$ as described in \cite{Anscombe}. We denote the transformed number of COVID-19 cases by $\tilde{X}_t$ and the transformed number of COVID-19 associated deaths by $\tilde{D}_t$. Figure \ref{figure_covid_cases_deaths} presents the results of ID, ID.SDLL, CPOP, and NOT for the transformed data.
\begin{figure}[h]
\centering\includegraphics[width=1\textwidth,height=0.3\textheight]{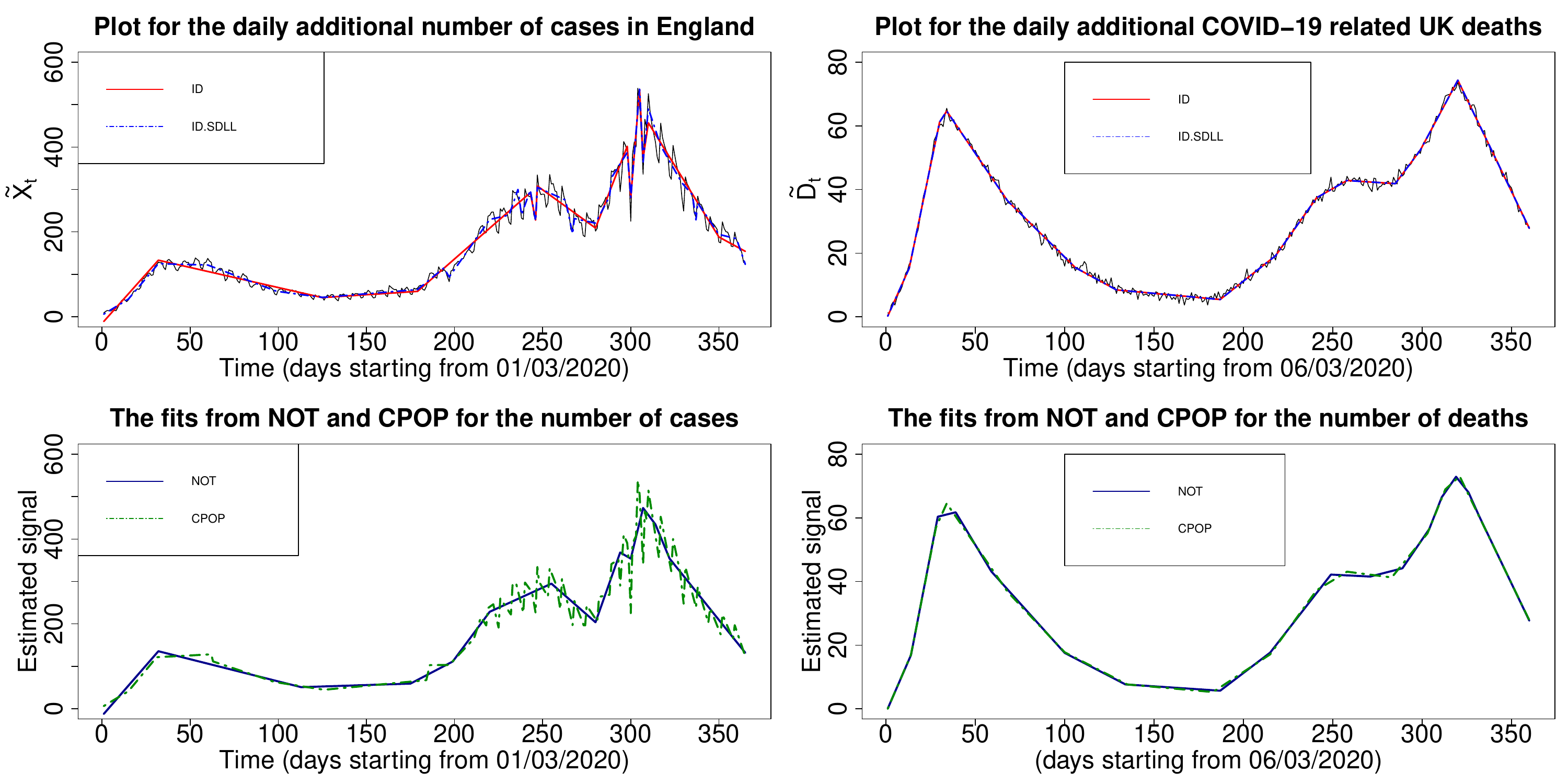}
\caption{{\textbf{Top row:}} The transformed data sequence and the fitted continuous and piecewise-linear mean signals obtained by ID and ID.SDLL for both the daily number of cases and the daily number of deaths. {\textbf{Bottom row:}} NOT (solid) and CPOP (dashed) estimates for the daily number of cases and the daily number of deaths.} 
\label{figure_covid_cases_deaths}
\end{figure}
We observe that ID, ID.SDLL, and NOT have a similar behaviour, while CPOP gives a higher estimated number of change-points. In an attempt to date the detected change-points by ID, we provide a possible explanation of their location with respect to the outbreak of the pandemic in the UK; this discussion is given in Section \ref{sec:supp_discussion_covid} of the supplementary material.

For another example related to the continuous, piecewise-linear case, see Section \ref{sec:Supp_more_real_data} of the supplement where we explore the behaviour of Isolate-Detect and two competitors, CPOP and NOT, on the daily closing stock prices of Samsung Electronics Co. from July 2012 until June 2020.
\section{Concluding reflections on ID}
\label{sec:Discussion}
In this paper, we have proposed Isolate-Detect which is a new, generic technique for multiple generalized change-point detection in noisy data sequences. The method is based on a change-point isolation approach which seems to provide an advantage in detection power, especially in complex structures where most state-of-the-art competitors seem to suffer (see the simulations in Section \ref{sec:simulation}) such as limited spacings between change-points. In addition, the aforementioned isolation aspect allows the extension of our method to the detection of knots in higher-order polynomial signals. As already mentioned in Section \ref{sec:intro}, NOT, WBS, and WBS2 also work on sub-intervals of the data, but the way the isolation is carried out in ID, where one of the end-points of the subintervals is kept fixed, provides predictable execution times for the analysis of a given data sequence, which are faster than the aforementioned competitors; see Sections \ref{subsec:Compcost} and \ref{sec:simulation}. Another advantage of our method over NOT, WBS and WBS2 is that, due to its pseudo-sequential interval expansion character, it can easily be applied for online change-point detection.

In Section \ref{subsec:hybrid}, a variant of ID was introduced that combines the threshold- and SIC-based versions of our proposed method with the aim to enhance its accuracy (both in terms of the estimated number and the estimated change-point locations) for signals of different structures with respect to the true number of change-points and the distance between them. In addition, due to the way that the relevant hybrid approach has been developed in Section \ref{subsec:hybrid}, we manage to offer, for ease of execution, minimal parameter choice. Apart from thresholding and SIC, we have also combined ID with the SDLL model selection criterion.

In the practical applications of Sections \ref{sec:simulation} and \ref{sec:real_data}, compared to the state-of-the-art competitors, ID lies in the top 10\% (in terms of the accurate estimation of the number and the location of the change-points) of the best methods. Furthermore, it exhibits a notable advantage over other techniques in long signals with many change-points that occur frequently. In addition, ID's pseudo-sequential character assists in attaining a low computational time; our method can accurately analyse signals of tens of thousands with thousands of change-points in less than a second; see for example Table \ref{distribution_M2}. In cases where the normality assumption for the error terms is violated, Section \ref{subsec:ht_variant} provides a practical solution where pre-processing allows us to use ID without altering the proposed parameter values. The results of simulations from a Student-$t$ distribution with two options for the degrees of freedom are in Table \ref{models_Piotr_ht}.

Since no method has a uniformly best behaviour, it is natural to also highlight the weaknesses of our method in terms of its practical behaviour. To start with, ID can be slow in long and constant signals in which change-points do not occur. This is because of the expanding intervals attribute, which in the case of no change-points will push the method to keep testing for change-points in growing, overlapping intervals. This is inevitably going to lead to high computational costs. We tried to eliminate this weakness by introducing a window-based variant, as explained in Section \ref{subsec:var_proposed}. Another drawback of the method is that, due to its left- and right-expanding feature, the change-points need to be detectable based on relatively unbalanced intervals. This could lead to accuracy issues in signals where the change-points are in the middle of the data sequence and offset each other. In practice, we have not encountered this type of behaviour in ID; in particular it accurately detects the change-points for the model (M4) in Table \ref{models_Piotr}, which is an example of the aforementioned structure with two nearby change-points in the middle of the data sequence.

\clearpage
\thispagestyle{empty}
\bibliographystyle{natbib}
\markboth{Andreas Anastasiou \and Piotr Fryzlewicz}{}
\begin{center}
{\Large{\textbf{Supplement to ``Detecting multiple generalized
\\
\vspace{0.05in}
change-points by isolating single ones''}
\\
\vspace{0.2in}}
\large{B{\small{Y}} A. ANASTASIOU$^{1, \dagger}$ and P. FRYZLEWICZ$^{2, \ddagger}$}}\\
\vspace{0.1in}
\textit{$^1$Department of Mathematics and Statistics, University of Cyprus,\\
P.O. Box: 20537, 1678, Nicosia, Cyprus}\\
\textit{$^2$Department of Statistics, The London School of Economics and Political Science,\\
Columbia House, Houghton Street, London, WC2A 2AE}\\
\hspace{-1.2in}\textit{E-mail: }$^{\dagger}$anastasiou.andreas@ucy.ac.cy; $^{\ddagger}$p.fryzlewicz@lse.ac.uk
\date{}
\end{center}
\setcounter{section}{0}
\setcounter{table}{0}
\setcounter{page}{1}
\section{The different parts in the removal process of the solution path algorithm}
\label{supp:solution_path_algorithm_parts}
The four different parts of the removal process applied in order to create the solution path explained in Section \ref{subsec:sSIC} of the paper are very similar and are based on the idea of removing change-points according to their contrast function values as well as their distance to neighbouring estimates. We mention that if the algorithm proceeds from Part 1 to Part 2 as below, then it is guaranteed that it will also proceed up to Part 4. All events below occur with probability tending to one with $T$.
\vspace{0.05in}
\\
{\textbf{Part 1:}} With $C^*$ being a positive constant, the aim is to prune the estimates in $\tilde{S}$, such that, for each true change-point, there are at most four and at least one estimated change-point within a distance of $C^*(\log T)^{\alpha}$. To achieve this, $\forall j \in \left\lbrace 1,2,\ldots,J\right\rbrace$ and with $\tilde{r}_0 = 1$, $\tilde{r}_{J+1} = T$,  we collect triplets $(\tilde{r}_{j-1},\tilde{r}_j,\tilde{r}_{j+1})$ and we calculate $CS(\tilde{r}_j) := C_{\tilde{r}_{j-1},\tilde{r}_{j+1}}^{\tilde{r}_j}(\boldsymbol{X})$, with $C_{s,e}^b(\boldsymbol{X})$ being the relevant contrast function. For $m = {\rm argmin}_{j}\left\lbrace CS(\tilde{r}_j) \right\rbrace$, firstly we check whether $CS(\tilde{r}_m) \leq \tilde{\tilde{C}}\sqrt{\log T}$, for $\tilde{\tilde{C}} > 0$; in the proofs of Theorems \ref{consistency_theorem2} and \ref{consistency_theorem_trend2}, $\tilde{\tilde{C}} = 2\sqrt{2}$, but smaller values could be sufficient. If $CS(\tilde{r}_m) \leq \tilde{\tilde{C}}\sqrt{\log T}$ and also $\tilde{r}_{j+1} - \tilde{r}_{j-1} \leq 2C^*(\log T)^{\alpha}$, we remove $\tilde{r}_m$ from $\tilde{S}$, reduce $J$ by 1, relabel the remaining estimates (in increasing order) in $\tilde{S}$, and repeat this estimate removal process. We proceed to Part 2 when $CS(\tilde{r}_m) > \tilde{\tilde{C}}\sqrt{\log T}$. If this is not satisfied at any point of this part, then we conclude that there are no change-points in the data sequence and we stop.
\vspace{0.05in}
\\
{\textbf{Part 2:}} The aim is to continue the pruning process of Part 1, in a way that at the end of Part 2 there is at least one estimate within a distance of $C^*(\log T)^{\alpha}$ from each true change-point, but also there are at most two estimates between any pair of consecutive true change-points. For the relabelled estimates in $\tilde{S}$ after the completion of Part 1, if $\tilde{r}_{j} - \tilde{r}_{j-1} \leq C^*(\log T)^{\alpha}$, then we remove $\tilde{r}_j$, relabel the remaining estimates, and keep removing the estimates until there is no pair $(\tilde{r}_{j-1},\tilde{r}_{j})$, such that $\tilde{r}_j - \tilde{r}_{j-1} \leq C^*(\log T)^{\alpha}$. We then calculate $CS(\tilde{r}_j)$ as in Part 1 and for $m = {\rm argmin}_{j}\left\lbrace CS(\tilde{r}_j) \right\rbrace$, if $CS(\tilde{r}_m) \leq \tilde{\tilde{C}}\sqrt{\log T}$, then we remove $\tilde{r}_m$ and relabel the remaining elements of $\tilde{S}$. This removal process is repeated and we proceed to Part 3 only when $CS(\tilde{r}_m) > \tilde{\tilde{C}}\sqrt{\log T}$.
\vspace{0.05in}
\\
{\textbf{Part 3:}} We need to ensure that once $\tilde{S}$ contains $N$ estimates, then for $j=1,2,\ldots, N$, each $\tilde{r}_j$ is within a distance of $C^*\left(\log T\right)^{\alpha}$ from $r_j$. To achieve this, for the remaining estimated change-points after Part 2, we use triplets $(\tilde{s}_{j},\tilde{r}_j,\tilde{e}_{j})$, with $\tilde{s}_j = \left\lfloor(\tilde{r}_{j-1} + \tilde{r}_j)/2\right\rfloor + 1$ and $\tilde{e}_j = \left\lceil(\tilde{r}_{j} + \tilde{r}_{j+1})/2\right\rceil$. For $m = {\rm argmin}_{j} C_{\tilde{s}_{j},\tilde{e}_{j}}^{\tilde{r}_j}(\boldsymbol{X})$, if $C_{\tilde{s}_m,\tilde{e}_m}^{\tilde{r}_m}(\boldsymbol{X}) \leq \tilde{\tilde{C}}\sqrt{\log T}$, then we remove $\tilde{r}_m$ and relabel the remaining estimates in $\tilde{S}$ in increasing order. We repeat this removal procedure until $C_{\tilde{s}_{m},\tilde{e}_{m}}^{\tilde{r}_m}(\boldsymbol{X}) > \tilde{\tilde{C}}\sqrt{\log T}$, which is when we proceed to Part 4.
\vspace{0.05in}
\\
{\textbf{Part 4:}} For the estimated change-points that are in $\tilde{S}$ after Part 3 is completed, we use again the triplets $(\tilde{r}_{j-1},\tilde{r}_j,\tilde{r}_{j+1})$ in order to find $m = {\rm argmin}_{j}\left\lbrace CS(\tilde{r}_j) \right\rbrace$ and then remove $\tilde{r}_m$ from $\tilde{S}$. This estimates removal approach is repeated until $\tilde{S} = \emptyset$.
\section{Models used in the simulation study of the paper}
\label{supp:signals}
The characteristics of the test signals $f_t$ as well as the standard deviations $\sigma$ of the noise $\epsilon_t$, which were used in the simulation study are given in the list below.
\begin{itemize}
\item[(M1)] \textit{blocks}: length 2048 with change-points at 205, 267, 308, 472, 512, 820, 902, 1332, 1557, 1598, 1659 with values between change-points 0, 14.64, $-$3.66, 7.32, $-$7.32, 10.98, $-$4.39, 3.29, 19.03, 7.68, 15.37, 0. The standard deviation is $\sigma = 10$.



\item[(M2)] \textit{teeth}: length 140 with change-points at 11, 21, 31, 41, 51, 61, 71, 81, 91, 101, 111, 121, 131 with values between change-points 0, 1, 0, 1, 0, 1, 0, 1, 0, 1, 0, 1, 0, 1. The standard deviation of the noise is $\sigma = 0.4$.

\item[(M3)] \textit{stairs}: length 150 with change-points at 11, 21, 31, 41, 51, 61, 71, 81, 91, 101, 111, 121, 131, 141 with values between change-points 1, 2, 3, 4, 5, 6, 7, 8, 9, 10, 11, 12, 13, 14, 15. The standard deviation of the noise is $\sigma = 0.3$.

\item[(M4)] \textit{middle-points}: length 2000 with change-points at 1000 and 1020 with values between change-points 0, 1.5, 0. The standard deviation of the noise is $\sigma = 1$.

\item[(M5)] {\textit{long teeth}}: length 20000 with 1999 change-points at $10,20,\ldots,19990$ with values between change-points $0,3,0,3,\ldots,0,3$. The standard deviation is $\sigma = 0.8$.




\item[(NC)] {\textit{constant signal}}: length 3000 with no change-points. The standard deviation is $\sigma = 1$.

\item[(W1)] {\textit{wave 1}}: piecewise-linear signal without jumps in the intercept, $T = 1408$, with 7 change-points at $256, 512, 768, 1024, 1152, 1280, 1344$ with the corresponding changes in slopes $-1/64,2/64,-3/64,4/64,-5/64,6/64,-7/64$, starting intercept $f_1 = 1$ and slope $f_2-f_1$ = $1/256$. The standard deviation of the noise is $\sigma=1$.

\item[(W2)] {\textit{wave 2}}: piecewise-linear signal without jumps in the intercept, $T = 1500$, with 99 change-points at $15,30,\ldots,1485$. The corresponding changes in the slope are $-1,1, -1, \ldots, -1$, while the starting intercept is $f_1 = -1/2$ and the starting slope is $f_2-f_1$ = $1/40$. The standard deviation is $\sigma=1$.

\item[(W3)] {\textit{smoother signal 1}}: piecewise-linear signal without jumps in the intercept, $T = 200$, with 9 change-points at $20,40,\ldots,180$ with the corresponding changes in slopes $1/6,1/2, -3/4, -1/3, -2/3, 1, 1/4, 3/4, -5/4$. The starting intercept is $f_1 = 1$ and slope $f_2-f_1$ = $1/32$. The standard deviation of the noise is $\sigma=0.3$.

\item[(W4)] {\textit{smoother signal 2}}: piecewise-linear signal without jumps in the intercept, $T = 1000$, with 19 change-points at $50,100,\ldots,950$ with the corresponding changes in slopes $-1/16, -5/16, -5/8, 1, 5/16, 15/32, -5/8, -7/32, -3/4, 13/16, \linebreak 5/16, 19/32, -1, -5/8, 23/32, 1/2, 15/16,-25/16, -5/4$, starting intercept $f_1 = 1$ and slope $1/32$. The standard deviation of the noise is $\sigma=0.6$.
\end{itemize}
\section{Improvement of ID in the case of big data}
\label{sec:Supp_wind_impr}
In this section, we show through simulations that applying ID on a fixed window grid improves its speed in large data sets, without affecting its accuracy. We compare the classic ID method as explained in Section \ref{subsec:methodology} of the main paper with the new window-grid-based version (WID) of Section \ref{subsec:var_proposed} in the case of three data sequences of length $10^5$, each with standard Gaussian noise. We work under the scenario of piecewise-constant mean. The three signals are
\begin{itemize}
\item[(D1)] No change-points;
\item[(D2)] three change-points at 25000, 55000, 85000 and the values between change-points are 0,3,-3,2;
\item[(D3)] seven change-points at 16000, 22000, 28000, 46000, 62000, 74000, 86000 and the values between change-points are 0,4,-4,4,-4,4,-4,4.
\end{itemize}
We took the expansion parameter $\lambda_T$ to be equal to 10 and the results are shown in Table \ref{table_comparison_windows_basic}. As a measure of the accuracy of the detected locations, we provide Monte-Carlo estimates of the mean squared error, ${\rm MSE} = T^{-1}\sum_{t=1}^{T}\E\left(\hat{f}_t - f_t\right)^2$. The scaled Hausdorff distance, $$d_H = n_s^{-1}\max\left\lbrace \max_j\min_k\left|r_j-\hat{r}_k\right|,\max_k\min_j\left|r_j-\hat{r}_k\right|\right\rbrace,$$ where $n_s$ is the length of the largest segment, is also given for (D2) and (D3); for (D1), $d_H$ is not informative. In terms of accuracy, both methods exhibit excellent behaviour. However, in terms of speed, the advantage of the windows-based approach is obvious. Note the decrease in the computational time of ID when the number of change-points gets larger. This is expected because the worst case in terms of computational complexity is when there are no change-points because ID will then be forced to calculate the contrast function on quite large intervals, even on $[1,T]$, which is computationally more expensive.
\begin{table}[H]
\centering
\caption{A comparison on the performance of WID and ID over 10 simulated time series of three different models of length $10^5$ each. The distribution of $\hat{N} - N$, as well as the average MSE, Hausdorff distance and computational time for each method are provided}{
\begin{tabular}{|l|l|l|l|l|l|}
\cline{1-6}
 & & & & & \\
 & & & & & \\
Method & Model & $\hat{N} - N = 0$ & MSE & $d_H$ & Time (s)\\
\hline
WID &  & 10 & $1.65 \times 10^{-5}$ & - & 2.39 \\
ID & (D1) & 10 & $1.65 \times 10^{-5}$ & - & 79.40\\
\hline
WID &  & 10 & $8.8 \times 10^{-5}$ & $3.8 \times 10^{-5}$ & 2.28\\
ID & (D2) & 10 & $11 \times 10^{-5}$ & $1.6 \times 10^{-5}$ & 13.06\\
\hline
WID &  & 10 & $8.9 \times 10^{-5}$ & 0 & 2.17\\
ID & (D3) & 10 & $8.9 \times 10^{-5}$  & 0 & 6.34\\
\hline
\end{tabular}}
\label{table_comparison_windows_basic}
\end{table}





\section{A justification of the estimations in the example of Section \ref{subsec:covid}}
\label{sec:supp_discussion_covid}
Here, we provide a possible explanation of the three most important (based on the solution path) detected change-points by ID in the real data related to the COVID-19 outbreak in the UK. We focus on the signal obtained with respect to the daily number of deaths (ID and ID.SDLL give exactly the same outcome), but similar conclusions can be extracted for the daily number of cases. The three most important changes in the behaviour of the data sequence have been detected on the $8^{th}$ of April, the $12^{th}$ of July and the $19^{th}$ of January. Some justification of these results is as follows:
\begin{itemize}
\item The change-point on the $8^{th}$ of April, shows that the upward trend vanishes and a negative slope takes its place leading to a vast decrease in the number of reported deaths. There is an obvious connection of this change-point with the British Government’s important decision in late March to impose a general lockdown. On 9 April, the Secretary of State for Foreign Affairs stated that the UK was {\textit{starting to see the impact of the restrictions}}; our second detection is in full agreement with the aforementioned date and statement.
\item With respect to the change-point on the $12^{th}$ of July, it indicates a stabilisation, at a low level, for the number of deaths. This is, of course, expected because the downward trend could not continue indefinitely. 
\item The last change-point on the $19^{th}$ of January indicates that an upward trend on the daily number of deaths, which was apparent for a period of almost 3.5 months, vanishes and its place takes a significantly negative slope. There is an obvious connection of this change-point with the vaccination programme in the country. More specifically, on 8 January 2021, MRNA-1273 (commonly known as the Moderna vaccine) was the third COVID-19 vaccine approved for use in the UK.
\end{itemize}
\section{Additional simulation results}
\label{sec:Supp_more_simul}
Here we present the results of simulations for various signals other than those presented in Section \ref{sec:simulation} of the main article. Tables \ref{distribution_LT2}-\ref{sim_supp5} summarize the results for the following models.
\begin{itemize}
\item[(LS)] {\textit{long stairs}}: length 10000 with 499 change-points at $20,40,\ldots,9980$ with values between change-points $0,2,4,6,\ldots,996,998$. The standard deviation is $\sigma = 1$.

\item[(LT2)] {\textit{long teeth 2}}: length 10000 with 249 change-points at $40,80,\ldots,9960$ with values between change-points $0,1.5,0,1.5,\ldots,0,1.5$. The standard deviation is $\sigma = 1$.

\item[(ELT)] {\textit{extremely long teeth}}: length 100000 with 19999 change-points at $5,10,\ldots,99995$ with values between change-points $0,2,0,2,\ldots,0,2$. The standard deviation is $\sigma = 0.3$.

\item[(NC2)] {\textit{constant signal 2}}: length 300 with no change-points. The standard deviation is $\sigma = 1$.

\item[(SW1)] {\textit{wave 5}}: piecewise-linear signal without jumps in the intercept, $T = 2400$, with 119 change-points at $20,40,\ldots,2380$ with the corresponding changes in slopes $2.5,-2.5, 2.5, \ldots, 2.5$, starting intercept $f_1 = 1$, slope $f_2-f_1$ = $1.25$ and $\sigma=3$.

\item[(SW2)] {\textit{wave 6}}: piecewise-linear signal without jumps in the intercept, $T = 1500$, with 29 change-points at $50,100,\ldots,1450$ with the corresponding changes in slopes $-1/7,1/7, -1/7, \ldots, -1/7$, starting intercept $f_1 = -1/2$, slope $f_2-f_1$ = $1/24$ and $\sigma=1$.

\item[(SW3)] {\textit{wave 7}}: piecewise-linear signal without jumps in the intercept, $T = 840$, with 119 change-points at $7,14,\ldots,833$ with the corresponding changes in slopes $-1, 1, -1, \ldots, -1$, starting intercept $f_1 = -1/2$ and slope $f_2-f_1$ = $1/32$. The standard deviation is $\sigma=0.3$.
\end{itemize}
The signals (LS), (LT2), (ELT), and (NC2) are treated under piecewise-constancy, while (SW1), (SW2) and (SW3) under the continuous and piecewise-linear case. FDR, WBSIC and S3IB are excluded from the comparative study for the extremely long signal (ELT). For FDR, we had to interrupt the execution after 10 hours, while for WBSIC and S3IB, in order to have a fair comparison of the methods with the rest, we had to increase the default value of the maximum number of change-points allowed to be detected to be greater than 20000. For a single iteration we had to stop the execution for WBSIC and S3IB after 30 minutes.
\begin{table}[H]
\centering
\caption{Distribution of $\hat{N} - N$ over 100 simulated data sequences from the piecewise-constant signal (LS). The average MSE, $d_H$ and computational time are also given}
{\small{
\begin{tabular}{|l|l|l|l|l|l|l|l|l|}
\cline{1-9}
 & \multicolumn{5}{|c|}{•} & & & \\
 & \multicolumn{5}{|c|}{$\hat{N} - N$} & & & \\
Method  & $\leq -300$ & $(-300,-100]$ & $(-100,-10)$ & $[-10,10]$&  $>10$ & MSE & $d_H$ & Time (s)\\
\hline
PELT  & 0 & 100 & 0 & 0 & 0 & 0.67 & 1.02 & 0.024\\
NP.PELT & 100 & 0 & 0 & 0 & 0 & 9655.28 & 113.76 & 11.302\\
S3IB  & 0 & 13 & 87 & 0 & 0 & 0.44 & 1.01 & 111.143\\
CumSeg  & 100 & 0 & 0 & 0 & 0 & 87.11 & 15.08 & 0.323\\
\textbf{CPM.$l.$500}  & 0 & 0 & 0 & \textbf{100} & 0 & 0.19 & 0.75 & 0.002\\
CPM.$l.$10000  & 0 & 0 & 74 & 25 & 0 & 0.22 & 1 & 0.002\\
WBSC1  & 0 & 0 & 80 & 20 & 0 & 0.24 & 1 & 1.293\\
WBSIC & 0 & 0 & 22 & 78 & 0& 0.22 & 0.99 & 1.293\\
WBS2 & 0 & 0 & 29 & 58 & 13 & 0.22 & 0.99 & 1.293\\
NOT & 100 & 0 & 0 & 0 & 0 & 9.31 & 5.27 & 4.330\\
\textbf{FDR} & 0 & 0 & 2 & \textbf{98} & 0 & 0.19 & 0.83 & -\\
TGUH & 0 & 0 & 100 & 0 & 0 & 0.29 & 1 & 0.484\\
\textbf{ID} & 0 & 0 & 14 & \textbf{86} & 0 & 0.21 & 1.00 & 0.460\\
\hline
\end{tabular}}}
\label{distribution_M4}
\end{table}
\begin{table}[H]
\centering
\caption{Distribution of $\hat{N} - N$ over 100 simulated data sequences from the piecewise-constant signal (LT2). The average MSE, $d_H$ and computational time are also given}
{\small{
\begin{tabular}{|l|l|l|l|l|l|l|l|l|}
\cline{1-9}
 & \multicolumn{5}{|c|}{•} &  & & \\
 & \multicolumn{5}{|c|}{$\hat{N} - N$} & & & \\
Method  & $\leq -150$ & $(-150,-50]$ & $(-50,-10)$ & $[-10,10]$ & $> 10$ & MSE & $d_H$ & Time (s)\\
\hline
PELT  & 100 & 0 & 0 & 0 & 0 & 0.55 & 130.64 & 0.016\\
NP.PELT  & 0 & 12 & 88 & 0 & 0 & 0.21 & 4.95 & 0.496\\
S3IB  & 0 & 1 & 54 & 45 & 0 & 0.13 & 2.58 & 33.410\\
CumSeg  & 100 & 0 & 0 & 0 & 0 & 0.56 & 249 & 0.428\\
CPM.$l.$500  & 0 & 0 & 0 & 9 & 91 & 0.12 & 0.59 & 0.008\\
\textbf{CPM.$l.$10000}  & 0 & 0 & 0 & \textbf{100} & 0 & 0.12 & 1.21 & 0.008\\
WBSC1  & 0 & 84 & 16 & 0 & 0 & 0.27 & 4.22 & 0.631\\
WBSIC & 7 & 0 & 0 & 88 & 5 & 0.16 & 15.92 & 1.051\\
NOT  & 100 & 0 & 0 & 0 & 0 & 0.56  & 240.08 & 0.610\\
\textbf{FDR} & 0 & 0 & 0 & \textbf{99} & 1 & 0.11 & 0.72 & -\\
\textbf{TGUH} & 0 & 0 & 2 & \textbf{98} & 0 & 0.14 & 1.43 & 0.580\\
\textbf{ID} & 0 & 0 & 0 & \textbf{100} & 0 & 0.11 & 0.76 & 0.139\\
\hline
\end{tabular}}}
\label{distribution_LT2}
\end{table}
\begin{table}[H]
\centering
\caption{Distribution of $\hat{N} - N$ over 100 simulated time series from the signal (ELT). Also the average MSE and computational times for each method are given}{
\begin{tabular}{|l|l|l|l|l|l|}
\cline{1-6}
 & \multicolumn{3}{|c|}{•} & & \\
 & \multicolumn{3}{|c|}{$\hat{N} - N$} & & \\
Method  & $\leq -17000$ & $(-17000,-10)$ & $[-10,10]$ & MSE  & Time (s)\\
\hline
PELT  & 100 & 0 & 0 & 0.94 & 0.086\\
NP.PELT & 100 & 0 & 0 & 1 & 136.154\\
CumSeg  & 100 & 0 & 0 & 1 & 4.831\\
CPM.$l.$500  & 100 & 0 & 0 & 1 & 59.936\\
WBSC1  & 100 & 0 & 0 & 0.92 & 6.346\\
NOT & 100 & 0 & 0 & 1 & 2.873\\
\textbf{TGUH} & 0 & 0 & \textbf{100} & 0.02 & 4.751\\
\textbf{ID} & 0 & 10 & \textbf{90} & 0.02 & 3.693\\
\hline
\end{tabular}}
\label{sim_supp1}
\end{table}
\begin{table}[H]
\centering
\caption{Distribution of $\hat{N} - N$ over 100 simulated time series from (NC2). Also the average MSE and computational times for each method are given}{
\begin{tabular}{|l|l|l|l|l|l|l|}
\cline{1-7}
 & \multicolumn{4}{|c|}{•} & & \\
 & \multicolumn{4}{|c|}{$\hat{N} - N$} & & \\
Method  &  0 & 1 & 2 & $\geq 3$ & MSE & Time (ms)\\
\hline
\textbf{PELT}  & \textbf{100} & 0 & 0 & 0 & 28 $\times 10^{-4}$  & 12.3\\
NP.PELT & 56 & 13 & 23 & 8 & 269 $\times 10^{-4}$  & 30.4\\
\textbf{S3IB} & \textbf{95} & 3 & 2 & 0 & 57 $\times 10^{-4}$  & 35.7\\
\textbf{CumSeg}  & \textbf{100} & 0 & 0 & 0 & 28 $\times 10^{-4}$ & 19.3\\
CPM.$l.$500 & 54 & 11 & 15 & 20 & 294 $\times 10^{-4}$ & 1.5\\
WBSC1  & 17 & 16 & 19 & 48 & 578 $\times 10^{-4}$ & 62.7\\
\textbf{WBSIC}  & \textbf{95} & 3 & 1 & 1 & 59 $\times 10^{-4}$ & 61.3\\
\textbf{NOT}  & \textbf{99} & 0 & 0 & 1 & 39 $\times 10^{-4}$  & 37.6\\
\textbf{FDR}  & \textbf{90} & 7 & 2 & 1 & 66 $\times 10^{-4}$  & -\\
TGUH & 83 & 0 & 12 & 5 & 147 $\times 10^{-4}$  & 49.6\\
\textbf{ID}  & \textbf{95} & 4 & 1 & 0 & 60 $\times 10^{-4}$ & 1.1\\
\hline
\end{tabular}}
\label{sim_supp3}
\end{table}
\begin{table}[H]
\centering
\caption{Distribution of $\hat{N} - N$ over 100 simulated time series of the signal (SW1). Also, the average MSE, Hausdorff distance and computational time for each method are given}{
\begin{tabular}{|l|l|l|l|l|l|l|l|l|l|}
\cline{1-10}
 & \multicolumn{6}{|c|}{•} & & & \\
 & \multicolumn{6}{|c|}{$\hat{N} - N$} & & & \\
Method & $\leq -110$ & $(-110,-1]$ & 0 & $1$ & $(1,90)$ & $\geq 90$ & MSE & $d_H$ & Time (s)\\
\hline
NOT & 100 & 0 & 0 & 0 & 0 & 0 & 52.352 & 119 & 6.061\\
TF  & 0 & 0 & 0 &0 & 0 & 100 & 107.463 & 0.381 & 2.309\\
\textbf{CPOP}  & 0 & 0 & \textbf{96} & 4 & 0 & 0 & 1.063 & 0.146 & 3.327\\
\textbf{ID} & 0 & 0 & \textbf{90} & 10 & 0 & 0 & 1.781 & 0.254 & 0.041\\
\hline
\end{tabular}}
\label{sim_supp4}
\end{table}
\begin{table}[H]
\centering
\caption{Distribution of $\hat{N} - N$ over 100 simulated time series of the signal (SW2). Also, the average MSE, Hausdorff distance and computational time for each method are given}{
\begin{tabular}{|l|l|l|l|l|l|l|l|l|l|l|l|}
\cline{1-12}
 & \multicolumn{8}{|c|}{•} & & & \\
 & \multicolumn{8}{|c|}{$\hat{N} - N$} & & & \\
Method & $\leq -10$ & $(-10,-1)$ & $-1$ & 0 & 1 & 2 & $(2,10)$ & $\geq 10$ & MSE & $d_H$ & Time (s)\\
\hline
NOT  & 0 & 5 & 1 & 1 & 1 & 3 & 25 & 64 & 0.24 & 0.95 & 0.412\\
TF  & 0 & 0 & 0 & 0 & 0 & 0 & 0 & 100 & 0.76 & 0.33 & 1.417\\
\textbf{CPOP}  & 0 & 0 & 0 & \textbf{97} & 3 & 0 & 0 & 0 & 0.05 & 0.17 & 8.728\\
\textbf{ID} & 0 & 0 & 0 & \textbf{97} & 2 & 1 & 0 & 0 & 0.07 & 0.27 & 0.049\\
\hline
\end{tabular}}
\label{sim_supp5}
\end{table}
\begin{table}[H]
\centering
\caption{Distribution of $\hat{N} - N$ over 100 simulated data sequences of the continuous piecewise-linear signal (SW3). The average MSE, $d_H$ and computational time for each method are also given}
{\small{
\begin{tabular}{|l|l|l|l|l|l|l|l|l|l|l|}
\cline{1-11}
 & \multicolumn{7}{|c|}{•} & & & \\
 & \multicolumn{7}{|c|}{$\hat{N} - N$} & & & \\
Method & $\leq -100$ & $(-100,-1)$ & $-1$ & 0 & $1$ & $(1,10]$ & $> 10$ & MSE & $d_H$ & Time (s)\\
\hline
NOT & 100 & 0 & 0 & 0 & 0 & 0 & 0 & 1.063 & 119 & 0.485\\
TF & 0 & 0 & 0 & 0 & 0 & 0 & 100 & 217868.2 & 0.324 & 0.632\\
\textbf{CPOP} & 0 & 0 & 0 & \textbf{98} & 2 & 0 & 0 & 0.027 & 0.154 & 0.438\\
\textbf{ID} & 0 & 0 & 0 & \textbf{100} & 0 & 0 & 0 & 0.039 & 0.210 & 0.055\\
\hline
\end{tabular}}}
\label{models_linearW4}
\end{table}

In all examples, the ID methodology is within $10\%$ of the best method. Once again, it exhibits remarkable behaviour when it comes to very long signals with a large number of frequently appearing change-points; see Tables \ref{distribution_LT2}, \ref{sim_supp1} and \ref{sim_supp4}.
\section{Investigation of the impact of different values of $\lambda_T$ on accuracy}
\label{sec:investigation_lambda}
In this section, we provide a small-scale simulation study in order to examine the behaviour of ID on different values of the expansion parameter $\lambda_T$. The simulation set up is as in Section \ref{sec:simulation} of the main paper and the signals used are those explained in Section \ref{supp:signals} of the supplement. For the expansion parameter, we take $\lambda_T \in \left\lbrace 5,20,80\right\rbrace$. The results are in Tables \ref{supplement_table_lambda1} - \ref{supplement_table_lambda9} below.
\begin{table}[h!]
\centering
\caption{Distribution of $\hat{N} - N$ over 100 simulated data sequences from (NC). The average MSE and computational times are also given}
{\small{
\begin{tabular}{|l|l|l|l|l|l|l|}
\cline{1-7}
 & \multicolumn{4}{|c|}{•} & & \\
 & \multicolumn{4}{|c|}{$\hat{N} - N$} & & \\
Method  &  0 & 1 & 2 & $\geq 3$ & MSE & Time (s)\\
\hline
ID$_{\lambda_T = 5}$  & 100 & 0 & 0 & 0 & 31 $\times 10^{-5}$ & 0.116\\
ID$_{\lambda_T = 20}$ & 100 & 0 & 0 & 0 & 31 $\times 10^{-5}$ & 0.033\\
ID$_{\lambda_T = 80}$ & 100 & 0 & 0 & 0 & 31 $\times 10^{-5}$ & 0.012\\
\hline
\end{tabular}}}
\label{supplement_table_lambda1}
\end{table}
\begin{table}[H]
\centering
\caption{Distribution of $\hat{N} - N$ over 100 simulated data sequences of the piecewise-constant signals (M1)-(M3). The average MSE, $d_H$ and computational time are also given}
{\small{
\begin{tabular}{|l|l|l|l|l|l|l|l|l|l|l|l|}
\cline{1-12}
 & & \multicolumn{7}{|c|}{•} & & & \\
 & & \multicolumn{7}{|c|}{$\hat{N} - N$} & & & \\
Method & Model & $\leq -3$ & $-2$ & $-1$ & 0 & 1 & 2 & $\geq 3$ & MSE & $d_H$ & Time (ms)\\
\hline
ID$_{\lambda_T = 5}$ &  & 0 & 3 & 40 & 56 & 1 & 0 & 0 & 2.49 & 0.06 & 26.8\\
ID$_{\lambda_T = 20}$ & (M1) & 0 & 3 & 40 & 57 & 0 & 0 & 0 & 2.44 & 0.06 & 17.4\\
ID$_{\lambda_T = 80}$ &  & 2 & 8 & 47 & 43 & 0 & 0 & 0 & 2.92 & 0.08 & 11.6\\
\hline
ID$_{\lambda_T = 5}$ &  & 7 & 8 & 3 & 72 & 9 & 1 & 0 & 67 $\times 10^{-3}$ & 0.88 & 10.9\\
ID$_{\lambda_T = 20}$ & (M2) & 83 & 7 & 8 & 4 & 0 & 0 & 0 & 185 $\times 10^{-3}$ & 6.80 & 8.1\\
ID$_{\lambda_T = 80}$ &  & 91 & 2 & 0 & 7 & 0 & 0 & 0 & 212 $\times 10^{-3}$ & 8.75 & 5.7\\
\hline
ID$_{\lambda_T = 5}$ &  & 0 & 0 & 0 & 82 & 14 & 4 & 0 & 23 $\times 10^{-3}$ & 0.16 & 9.9\\
ID$_{\lambda_T = 20}$ & (M3) & 20 & 38 & 30 & 10 & 2 & 0 & 0 & 86 $\times 10^{-3}$ & 0.82 & 9.2\\
ID$_{\lambda_T = 80}$  &  & 100 & 0 & 0 & 0 & 0 & 0 & 0 & 784 $\times 10^{-3}$ & 2.73 & 5\\
\hline
\end{tabular}}}
\label{supplement_table_lambda2}
\end{table}
\begin{table}[H]
\centering
\caption{Distribution of $\hat{N} - N$ over 100 simulated data sequences from the piecewise-constant signal (M4). The average MSE, $d_H$ and computational time are also given}
{\small{
\begin{tabular}{|l|l|l|l|l|l|l|l|l|}
\cline{1-9}
 & \multicolumn{5}{|c|}{•} &  & & \\
 & \multicolumn{5}{|c|}{$\hat{N} - N$} & & & \\
Method  & $-2$ & $-1$ & $0$ & $1$ & $\geq 2$ & MSE & $d_H$ & Time (ms)\\
\hline
ID$_{\lambda_T = 5}$ & 12 & 0 & 87 & 1 & 0 & 7 $\times 10^{-3}$ & 0.13 & 31.5\\
ID$_{\lambda_T = 20}$ & 11 & 0 & 89 & 0 & 0 & 6 $\times 10^{-3}$ & 0.11 & 11.7\\
ID$_{\lambda_T = 80}$ & 28 & 0 & 72 & 0 & 0 & 11 $\times 10^{-3}$ & 0.29 & 5.8\\
\hline
\end{tabular}}}
\label{supplement_table_lambda3}
\end{table}
\begin{table}[H]
\centering
\caption{Distribution of $\hat{N} - N$ over 100 simulated data sequences from the piecewise-constant signal (M5). The average MSE, $d_H$ and computational time are also given}
{\small{
\begin{tabular}{|l|l|l|l|l|l|l|l|l|}
\cline{1-9}
 & \multicolumn{5}{|c|}{•} & & & \\
 & \multicolumn{5}{|c|}{$\hat{N} - N$} & & & \\
Method  & $\leq -500$ & $(-500,-50]$ & $(-50,-10)$ & $[-10,10]$ & $> 10$ & MSE & $d_H$ & Time (s)\\
\hline
ID$_{\lambda_T = 5}$ & 0  & 0 & 0 & 100 & 0 & 0.14 & 0.99 & 0.772\\
ID$_{\lambda_T = 20}$ & 100 & 0 & 0 & 0 & 0 & 1.28 & 3.45 & 0.395\\
ID$_{\lambda_T = 80}$ & 100 & 0 & 0 & 0 & 0 & 1.53 & 10.34 & 0.308\\
\hline
\end{tabular}}}
\label{supplement_table_lambda4}
\end{table}
\begin{table}[H]
\centering
\caption{Distribution of $\hat{N} - N$ over 100 simulated data sequences from the continuous piecewise-linear signal (W1). The average MSE, $d_H$ and computational time for each method are also given}
{\small{
\begin{tabular}{|l|l|l|l|l|l|l|l|l|l|l|}
\cline{1-11}
 & \multicolumn{7}{|c|}{•} &  &  & \\
 & \multicolumn{7}{|c|}{$\hat{N} - N$} &  &  & \\
Method & $\leq -3$ & $-2$ & $-1$ & 0 & 1 & 2 & $\geq 3$ & MSE & $d_H$ & Time (s)\\
\hline
ID$_{\lambda_T = 5}$ & 0 & 0 & 0 & 91 & 9 & 0 & 0 & 0.031 & 0.104 & 0.036\\
ID$_{\lambda_T = 20}$ & 0 & 0 & 0 & 92 & 8 & 0 & 0 & 0.027 & 0.099 & 0.020\\
ID$_{\lambda_T = 80}$ & 0 & 0 & 0 & 99 & 1 & 0 & 0 & 0.020 & 0.067 & 0.017\\
\hline
\end{tabular}}}
\label{supplement_table_lambda6}
\end{table}
\begin{table}[H]
\centering
\caption{Distribution of $\hat{N} - N$ over 100 simulated data sequences of the continuous piecewise-linear signal (W2). The average MSE, $d_H$ and computational time for each method are also given}
{\small{
\begin{tabular}{|l|l|l|l|l|l|l|l|l|l|l|}
\cline{1-11}
 & \multicolumn{7}{|c|}{•} & & & \\
 & \multicolumn{7}{|c|}{$\hat{N} - N$} & & & \\
Method & $\leq -90$ & $(-90,-1)$ & $-1$ & 0 & $1$ & $(1,60]$ & $> 60$ & MSE & $d_H$ & Time (s)\\
\hline
ID$_{\lambda_T = 5}$ & 0 & 0 & 0 & 97 & 3 & 0 & 0 & 0.227 & 0.272 & 0.690\\
ID$_{\lambda_T = 20}$ & 0 & 0 & 0 & 100 & 0 & 0 & 0 & 0.364 & 0.328 & 0.722\\
ID$_{\lambda_T = 80}$ & 100 & 0 & 0 & 0 & 0 & 0 & 0 & 4.730 & 98.955 & 0.102\\
\hline
\end{tabular}}}
\label{suppement_table_lambda7}
\end{table}
\begin{table}[H]
\centering
\caption{Distribution of $\hat{N} - N$ over 100 simulated time series of the continuous piecewise-linear signals (W3) and (W4). The average MSE, $d_H$ and computational time are also given}
{\small{
\begin{tabular}{|l|l|l|l|l|l|l|l|l|l|l|l|}
\cline{1-12}
 & & \multicolumn{7}{|c|}{•} &  &  & \\
 & & \multicolumn{7}{|c|}{$\hat{N} - N$} & & & \\
Method & Model & $\leq -3$ & $-2$ & $-1$ & 0 & 1 & 2 & $\geq 3$ & MSE & $d_H$ & Time (s)\\
\hline
ID$_{\lambda_T = 5}$  &  & 0 & 0 & 0 & 94 & 6 & 0 & 0 & 0.020 & 0.122 & 0.014\\
ID$_{\lambda_T = 20}$ & (W3) & 0 & 0 & 0 & 99 & 1 & 0 & 0 & 0.009 & 0.067 & 0.012\\
ID$_{\lambda_T = 80}$ &  & 100 & 0 & 0 & 0 & 0 & 0 & 0 & 3.917 & 1.875 & 0.009\\
\hline
ID$_{\lambda_T = 5}$ &  & 0 & 0 & 0 & 77 & 22 & 1 & 0 & 0.055 & 0.134 & 0.040\\
ID$_{\lambda_T = 20}$ & (W4) & 0 & 0 & 0 & 98 & 2 & 0 & 0 & 0.029 & 0.106 & 0.032\\
ID$_{\lambda_T = 80}$ &  & 0 & 36 & 50 & 14 & 0 & 0 & 0 & 1.603 & 0.950 & 0.030\\
\hline
\end{tabular}}}
\label{supplement_table_lambda9}
\end{table}
In terms of accuracy, we notice that in most cases, the best method is ID$_{\lambda_T = 5}$. In addition, as long as the different values of $\lambda_T$ are all less than the minimum distance, $\delta_T$, between two successive change-points, then the results are extremely similar (if not identical); see for example Tables \ref{supplement_table_lambda1} and \ref{supplement_table_lambda6}. On the other hand, when $\lambda_T > \delta_T$ the accuracy is getting worse; compare for example the results for the three different values of $\lambda_T$ in Models (M2), (M3), (M5), (W2), and (W3). In terms of speed, as expected, the larger the value of $\lambda_T$, the quicker the method in general.
\section{Additional real-data example}
\label{sec:Supp_more_real_data}
In this section we explore the behaviour of our method and two competitors, CPOP and NOT, to the daily closing stock prices of Samsung Electronics Co. from July 2012 until June 2020. The data are available from \url{https://finance.yahoo.com/quote/005930.KS/history?p=005930.KS} and they were accessed in July 2020. We look for changes in a continuous piecewise-linear mean signal. Figure \ref{Samdata} shows the results for the ID, ID.SDLL, NOT and CPOP methods, which detect $165,134,22,239$ change-points, respectively. From both the fit and the residuals given in Figure \ref{Samdata}, it is not easy to say which of the three methods gives the ``best'' number of change-points.
\begin{figure}[H]
\centering
\includegraphics[width=1\textwidth,height=0.35\textheight]{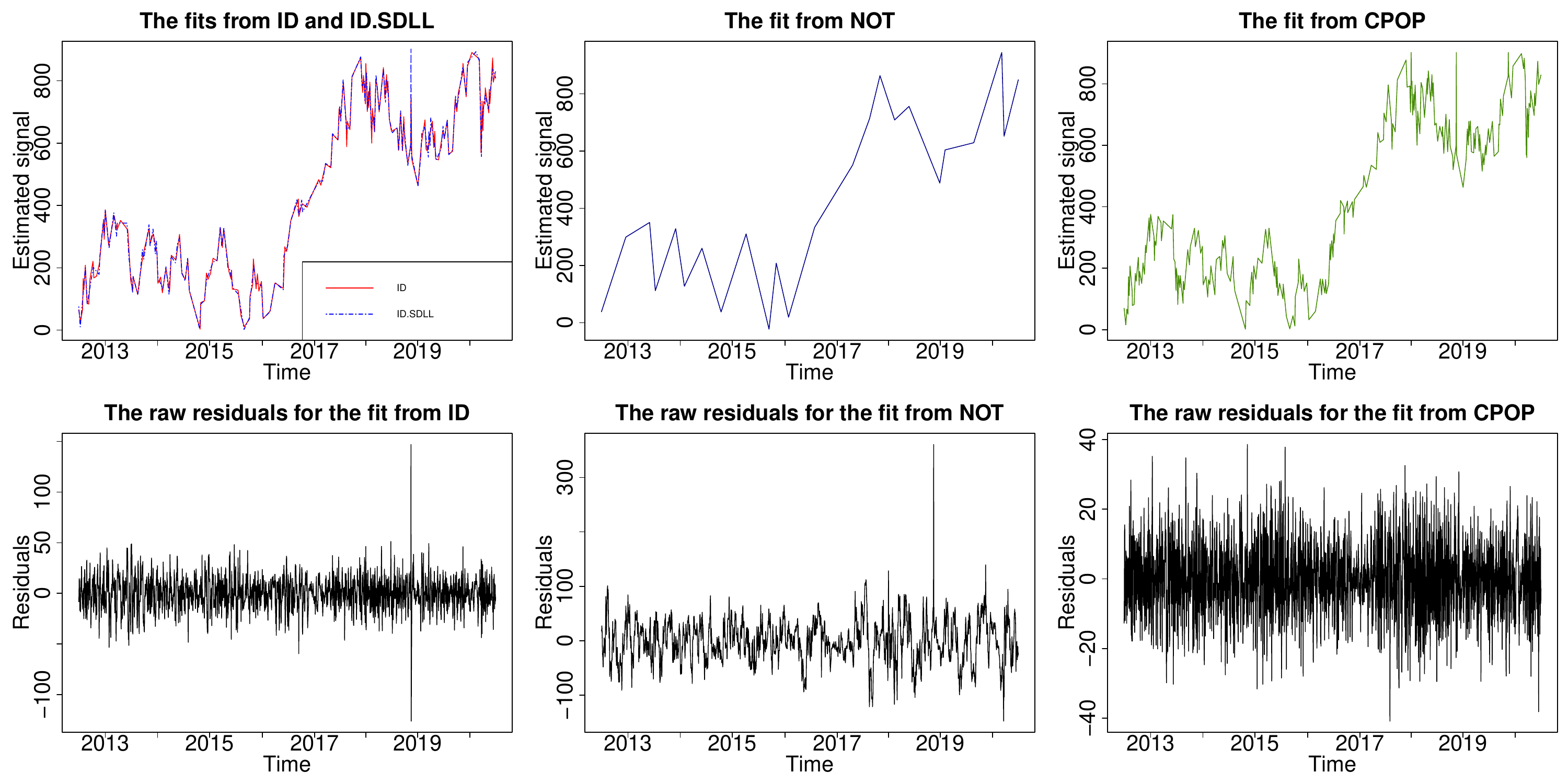}
\caption{{\textbf{Top row:}} From left to right, the fits for ID and ID.SDLL (on the same plot), NOT and CPOP, respectively. {\textbf{Bottom row:}} The raw residuals $\epsilon_t = Y_t - \hat{f}_t$ for each method.} 
\label{Samdata}
\end{figure}

ID can return a range of different fits providing users with the flexibility to choose according to their preference. In Figure \ref{CPOP_comp}, we use the solution path and we obtain the estimated signal and the raw residuals of ID for $\hat{N} = 22$ and $\hat{N} = 239$, which are the estimated change-point numbers through NOT and CPOP, respectively. The fit is similar to those obtained by the aforementioned methods as presented in Figure \ref{Samdata}. However, we note that both those competitors are significantly slower than ID; see Tables \ref{models_linearW1W2} and \ref{models_linearW3} in the main paper and Tables \ref{sim_supp4} - \ref{models_linearW4} in the supplement for a comparison. To conclude, apart from returning the estimated fit, the ID methodology can directly, and without any extra effort, produce a series of estimated signals based on the solution path defined in \eqref{orderedcpts}.
\begin{figure}[H]
\centering\includegraphics[width=1\textwidth,height=0.25\textheight]{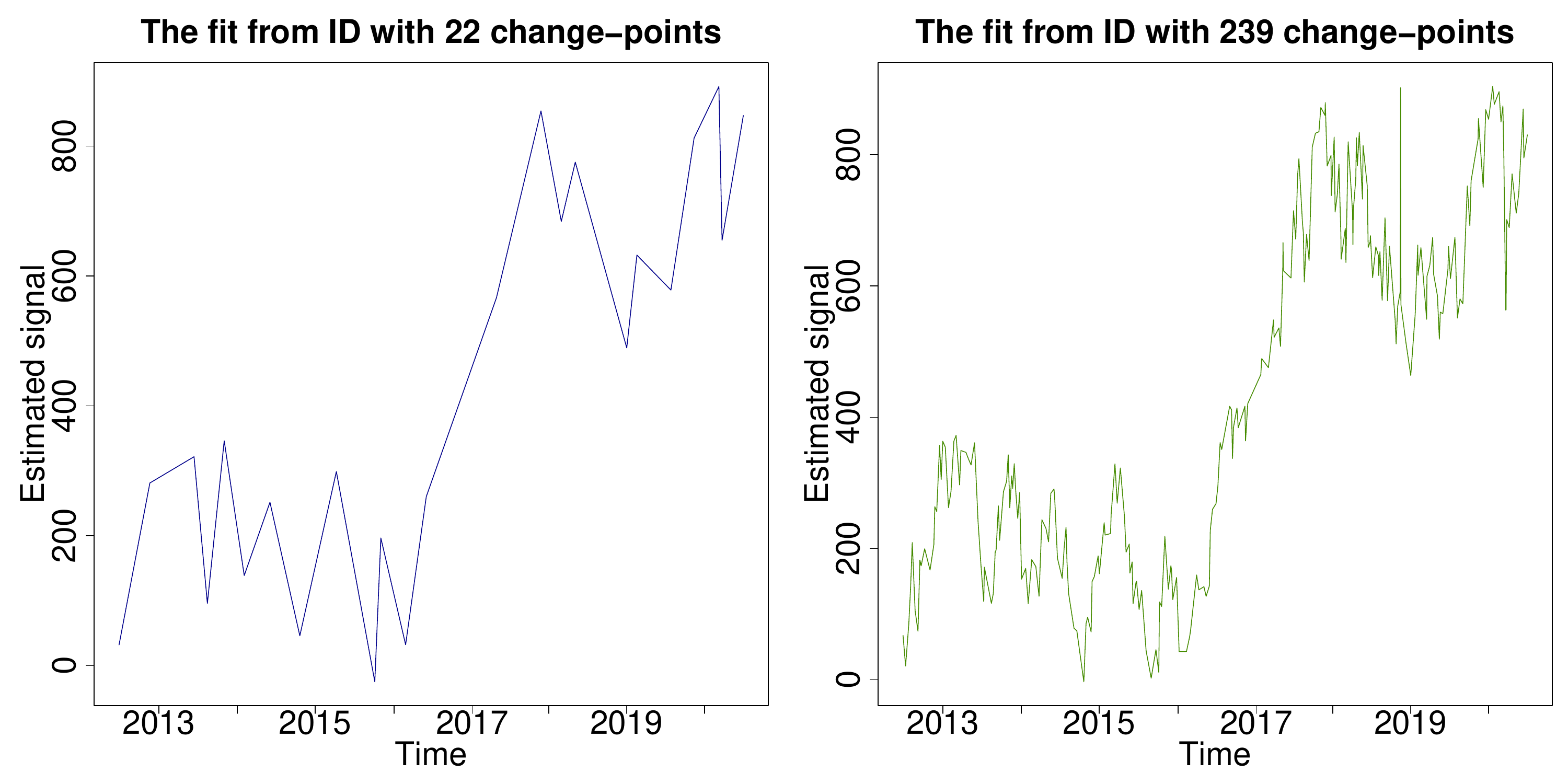}
\caption{The estimated signals obtained by ID with 22 and 239 change-points. The solution path was employed in order to obtain these fits.} 
\label{CPOP_comp}
\end{figure}
\section{Proof of the theorems in the paper and of Corollary \ref{cor:bound}}
\label{sec:supp_proofs}
From now on, the contrast vector $\boldsymbol{\psi_{s,e}^b} = (\psi_{s,e}^b(1), \psi_{s,e}^b(2),\ldots, \psi_{s,e}^b(T))$ is defined through the contrast function
\begin{equation}
\nonumber \psi_{s,e}^b(t) = \begin{cases}
    \sqrt{\frac{e-b}{n(b-s+1)}}, & t = s,s+1,\ldots,b,\\
    -\sqrt{\frac{b-s+1}{n(e-b)}}, & t=b+1,b+2,\ldots,e,\\
    0, & \text{otherwise}.
  \end{cases},
\end{equation}
where $s\leq b < e$. Notice that for any vector $\boldsymbol{v} = (v_1,v_2,\ldots,v_T)$, we have that $\langle \boldsymbol{v}, \boldsymbol{\psi_{s,e}^b}\rangle = \tilde{v}_{s,e}^b.$ We first present Lemma \ref{lemma_from_NOT}, which is partly used for the proof of Theorem \ref{consistency_theorem},
\begin{lemma}
\label{lemma_from_NOT}
Suppose $\boldsymbol{f} = \left(f_1,f_2,\ldots,f_T\right)^{\intercal}$ is a piecewise-constant vector. Pick any interval $[s,e] \subset [1,T]$ such that $[s,e-1]$ contains exactly one change-point $r_j$. Let $\rho = |r_j-b|$, $\Delta_j^f = \left|f_{r_j+1}-f_{r_j}\right|$, $\eta_L= r_j-s+1$ and $\eta_R=e-r_j$. Then,
\begin{equation}
\nonumber \|\psi_{s,e}^b\langle\boldsymbol{f},\boldsymbol{\psi_{s,e}^b}\rangle - \boldsymbol{\psi_{s,e}^{r_j}}\langle\boldsymbol{f},\boldsymbol{\psi_{s,e}^{r_j}}\rangle\|_2^2 = \left(\tilde{f}_{s,e}^{r_j}\right)^2 - \left(\tilde{f}_{s,e}^{b}\right)^2.
\end{equation}
In addition,
\begin{enumerate}
\item for any $r_j \leq b < e$, $\left(\tilde{f}_{s,e}^{r_j}\right)^2 - \left(\tilde{f}_{s,e}^{b}\right)^2 = \left(\rho\eta_L/(\rho+\eta_L)\right)\left(\Delta_j^f\right)^2$;
\item for any $s \leq b < r_j$, $\left(\tilde{f}_{s,e}^{r_j}\right)^2 - \left(\tilde{f}_{s,e}^{b}\right)^2 = \left(\rho\eta_R/(\rho+\eta_R)\right)\left(\Delta_j^f\right)^2$;
\end{enumerate}
\end{lemma}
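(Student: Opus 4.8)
The plan is to establish the lemma in two independent pieces: the norm identity (the first displayed equation), and then the two explicit evaluations of $(\tilde f_{s,e}^{r_j})^2 - (\tilde f_{s,e}^b)^2$. The first piece is purely geometric and holds for any admissible $b$; the second is a direct substitution into the CUSUM formula \eqref{CUSUM}. I would record at the outset that $\eta_L + \eta_R = e - s + 1 = n$, and write $\mu_1 = f_{r_j}$, $\mu_2 = f_{r_j+1}$ for the two segment values, so that $\Delta_j^f = |\mu_1 - \mu_2|$.

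For the norm identity, the key observation is that on $[s,e]$ the vector $\boldsymbol f$ is piecewise constant with its single jump at $r_j$, hence it lies in the two-dimensional subspace spanned by the constant vector $1_{s,e}$ and the contrast vector $\boldsymbol{\psi_{s,e}^{r_j}}$, which are orthonormal (the CUSUM vector has zero mean and unit norm). Writing $\boldsymbol f = \langle\boldsymbol f, 1_{s,e}\rangle 1_{s,e} + \tilde f_{s,e}^{r_j}\boldsymbol{\psi_{s,e}^{r_j}}$ on $[s,e]$ and using $\langle 1_{s,e}, \boldsymbol{\psi_{s,e}^b}\rangle = 0$ for every $b$ gives the factorization $\tilde f_{s,e}^b = \tilde f_{s,e}^{r_j}\langle\boldsymbol{\psi_{s,e}^{r_j}}, \boldsymbol{\psi_{s,e}^b}\rangle$. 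Substituting this together with $\|\boldsymbol{\psi_{s,e}^b}\|_2 = \|\boldsymbol{\psi_{s,e}^{r_j}}\|_2 = 1$ into the expansion of the squared norm,
\[
\|\boldsymbol{\psi_{s,e}^b}\tilde f_{s,e}^b - \boldsymbol{\psi_{s,e}^{r_j}}\tilde f_{s,e}^{r_j}\|_2^2 = (\tilde f_{s,e}^b)^2 + (\tilde f_{s,e}^{r_j})^2 - 2\tilde f_{s,e}^b\tilde f_{s,e}^{r_j}\langle\boldsymbol{\psi_{s,e}^{r_j}},\boldsymbol{\psi_{s,e}^b}\rangle,
\]
the cross term equals $-2(\tilde f_{s,e}^b)^2$ and the right-hand side collapses to $(\tilde f_{s,e}^{r_j})^2 - (\tilde f_{s,e}^b)^2$, which is the claimed identity.

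For the explicit formulas I would assume without loss of generality that $r_j \le b < e$ (part 1; part 2 is identical after swapping the roles of the two segments, i.e. $\eta_L \leftrightarrow \eta_R$). Here $\rho = b - r_j$, and the two partial sums in \eqref{CUSUM} are $\sum_{t=s}^b f_t = \eta_L\mu_1 + \rho\mu_2$ and $\sum_{t=b+1}^e f_t = (e-b)\mu_2$, while $b - s + 1 = \eta_L + \rho$ and $e - b = \eta_R - \rho$. Substituting and simplifying — the terms proportional to $\mu_2$ cancel against each other — leaves $\tilde f_{s,e}^b = \sqrt{(\eta_R - \rho)/(n(\eta_L+\rho))}\,\eta_L(\mu_1 - \mu_2)$, so that $(\tilde f_{s,e}^b)^2 = \frac{(\eta_R-\rho)\eta_L^2}{n(\eta_L+\rho)}(\Delta_j^f)^2$. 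Combining with $(\tilde f_{s,e}^{r_j})^2 = \frac{\eta_L\eta_R}{n}(\Delta_j^f)^2$ (the computation already carried out in \eqref{midstep1}) and collecting terms over the common denominator $\eta_L + \rho$ yields $(\tilde f_{s,e}^{r_j})^2 - (\tilde f_{s,e}^b)^2 = \frac{\rho\eta_L}{\rho+\eta_L}(\Delta_j^f)^2$.

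The main obstacle is not conceptual but the bookkeeping in the second piece: one must split each partial sum correctly according to whether $b$ lies to the left or right of $r_j$ and then verify the cancellation of the $\mu_2$-terms. I would streamline this by evaluating $\tilde f_{s,e}^b$ directly rather than first computing the inner product $\langle\boldsymbol{\psi_{s,e}^{r_j}},\boldsymbol{\psi_{s,e}^b}\rangle$, and by exploiting the symmetry $\eta_L\leftrightarrow\eta_R$ so that only one of the two cases needs to be carried out in full.
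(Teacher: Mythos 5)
Your proof is correct, but it is genuinely different from what the paper does: the paper offers no argument at all for this lemma, disposing of it with a citation (``See Lemma 4 from \cite{NOT_paper}''), whereas you supply a self-contained derivation. Your two ingredients both check out. For the norm identity, the observation that $\boldsymbol{f}$ restricted to $[s,e]$ lies in the span of the orthonormal pair $\left\lbrace 1_{s,e}, \boldsymbol{\psi_{s,e}^{r_j}}\right\rbrace$, together with $\langle 1_{s,e},\boldsymbol{\psi_{s,e}^{b}}\rangle = 0$ for every $b$, gives the factorization $\tilde{f}_{s,e}^b = \tilde{f}_{s,e}^{r_j}\langle\boldsymbol{\psi_{s,e}^{r_j}},\boldsymbol{\psi_{s,e}^b}\rangle$, and substituting this into the cross term (no division by $\tilde{f}_{s,e}^{r_j}$ is needed, since the cross term equals $-2\left(\tilde{f}_{s,e}^{r_j}\right)^2\langle\boldsymbol{\psi_{s,e}^{r_j}},\boldsymbol{\psi_{s,e}^b}\rangle^2 = -2\left(\tilde{f}_{s,e}^b\right)^2$) collapses the expansion exactly as claimed. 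For the explicit formulas, your computation $\left(\tilde{f}_{s,e}^b\right)^2 = \frac{(\eta_R-\rho)\eta_L^2}{n(\eta_L+\rho)}\left(\Delta_j^f\right)^2$ for $r_j\leq b<e$ is right, and combining it with $\left(\tilde{f}_{s,e}^{r_j}\right)^2 = \frac{\eta_L\eta_R}{n}\left(\Delta_j^f\right)^2$ and $\eta_L+\eta_R=n$ does yield $\frac{\rho\eta_L}{\rho+\eta_L}\left(\Delta_j^f\right)^2$; the time-reversal symmetry $\eta_L\leftrightarrow\eta_R$ legitimately dispatches the second case, since the CUSUM statistic changes only in sign under reversal of the interval. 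What the paper's route buys is brevity and an appeal to an already-published result; what yours buys is that the paper would no longer need to outsource this lemma to an external reference, at the cost of roughly a page of elementary algebra.
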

\begin{proof}
See Lemma 4 from \cite{NOT_paper}.
\end{proof}
{\raggedleft{\textbf{Brief discussion of the steps of the proof of Theorem \ref{consistency_theorem}}}}\\
Before proceeding with the thorough mathematical proof, we give an informal explanation of the main steps. In the main part of the proof, we derive results for the signal $f_t$. However, the consistency is concerned with the estimated number and locations of the change-points in the observed process $\left\lbrace X_t\right\rbrace_{t=1,2,\ldots,T}$. Therefore, in order to be able to deduce consistency related to $X_t$ from our $f_t$-reliant proof, we need first to show that for all $1\leq s \leq b < e \leq T$, the observed quantity $\left|\tilde{X}_{s,e}^b\right|$ is uniformly close to the unobserved $\left|\tilde{f}_{s,e}^b\right|$; this is achieved in Step 1. In Step 2, for $b_1,b_2 \in [s,e)$, we control the distance between the noised $\left|\tilde{X}_{s,e}^{b_1}\right| - \left|\tilde{X}_{s,e}^{b_2}\right|$ and its noiseless equivalent $\left|\tilde{f}_{s,e}^{b_1}\right| - \left|\tilde{f}_{s,e}^{b_2}\right|$ for all possible combinations of $s,e,b_1,b_2$. This allows us to transfer the decision on whether $b_1$ or $b_2$ is more suitable as a change-point, from $\left|\tilde{f}_{s,e}^{b_1}\right| - \left|\tilde{f}_{s,e}^{b_2}\right|$ to the calculable $\left|\tilde{X}_{s,e}^{b_1}\right| - \left|\tilde{X}_{s,e}^{b_2}\right|$. Step 3 is the main part of our proof, where we first show that as the ID algorithm proceeds, each change-point will get isolated in an interval where its detection will occur with high probability. Therefore, it suffices to restrict our proof to a single change-point detection framework, and the convergence rate is proved to hold for each estimated location. Because upon detection ID proceeds from the end-point (or start-point) of the interval where the detection occurred, we also show that with probability one there is no change-point in those bypassed points (between the detection and the new start- or end-point). Furthermore, in Step 3 it is shown that, the new start- and end-points are at places that allow the detection of the next change-point. In Step 4, we conclude the proof by showing that after detecting all change-points, then ID, with high probability, will terminate after scanning all the remaining data. We mention that for our proof, we employ Lemma \ref{lemma_from_NOT} given in the online supplement.
\vspace{0.05in}
\\
{\raggedright{\textbf{Proof of Theorem \ref{consistency_theorem}}. We will prove the more specific result}}
\begin{equation}
\label{mainresult_theorem2}
\Prob\left(\hat{N} = N, \max_{j=1,2,\ldots,N}\left(\left|\hat{r}_j - r_j\right|\left(\Delta_j^f\right)^2\right) \leq C_3\log T\right) \geq 1 - \frac{1}{6\sqrt{\pi}T},
\end{equation}
which implies the result in \eqref{mainresult_theorem}.
\vspace{0.05in}
\\
{\textbf{Step 1}:} Allow us to denote by
\begin{equation}
\label{A_T}
A_T = \left\lbrace \max_{s,b,e: 1\leq s \leq b < e \leq T}\left|\tilde{X}_{s,e}^b - \tilde{f}_{s,e}^b\right|\leq \sqrt{8\log T} \right\rbrace.
\end{equation}
We will show that $\Prob\left(A_T\right) \geq 1-1/(12\sqrt{\pi}T)$. From \eqref{new_model} and \eqref{CUSUM}, simple steps yield $\tilde{X}_{s,e}^b - \tilde{f}_{s,e}^b = \tilde{\epsilon}_{s,e}^b$, where $\tilde{\epsilon}_{s,e}^b \sim \mathcal{N}(0,1)$. Thus, for $Z \sim \mathcal{N}(0,1)$, using the Bonferroni inequality we get that 
\begin{align}
\nonumber & \Prob\left((A_T)^{c}\right) = \Prob\left(\max_{s,b,e: 1\leq s \leq b < e \leq T}\left|\tilde{X}_{s,e}^b - \tilde{f}_{s,e}^b\right| > \sqrt{8\log T}\right)\\
\nonumber & \leq \sum_{1\leq s\leq b <e \leq T}\Prob\left(\left|\tilde{\epsilon}_{s,e}^b\right|>\sqrt{8\log T}\right) \leq \frac{T^3}{6}\Prob(|Z|>\sqrt{8\log T})\\
\nonumber & = \frac{T^3}{3}\Prob\left(Z>\sqrt{8 \log T}\right)\leq \frac{T^3}{3}\frac{\phi(\sqrt{8\log T})}{\sqrt{8\log T}} \leq \frac{1}{12\sqrt{\pi}T},
\end{align}
where $\phi(\cdot)$ is the probability density function of the standard normal distribution.
\vspace{0.1in}
\\
{\textbf{Step 2:}} For intervals $[s,e)$ that contain only one true change-point $r_j$, we denote by
\begin{equation}
\label{B_T}
B_T = \left\lbrace \max_{j=1,2,\ldots,N}\max_{\substack{r_{j-1}<s\leq r_j\\r_j < e \leq r_{j+1}\\s\leq b < e}}\frac{\left|\left\langle\boldsymbol{\psi_{s,e}^b}\langle\boldsymbol{f},\boldsymbol{\psi_{s,e}^b}\rangle - \boldsymbol{\psi_{s,e}^{r_j}}\langle\boldsymbol{f},\boldsymbol{\psi_{s,e}^{r_j}}\rangle,\boldsymbol{\epsilon}\right\rangle\right|}{\|\boldsymbol{\psi_{s,e}^b}\langle\boldsymbol{f},\boldsymbol{\psi_{s,e}^b}\rangle - \boldsymbol{\psi_{s,e}^{r_j}}\langle\boldsymbol{f},\boldsymbol{\psi_{s,e}^{r_j}}\rangle\|_{2}}\leq \sqrt{8 \log T}\right\rbrace.
\end{equation}
Because $\left|\left\langle\boldsymbol{\psi_{s,e}^b}\langle\boldsymbol{f},\boldsymbol{\psi_{s,e}^b}\rangle - \boldsymbol{\psi_{s,e}^{r_j}}\langle\boldsymbol{f},\boldsymbol{\psi_{s,e}^{r_j}}\rangle\right\rangle\right|/\left(\|\boldsymbol{\psi_{s,e}^b}\langle\boldsymbol{f},\boldsymbol{\psi_{s,e}^b}\rangle - \boldsymbol{\psi_{s,e}^{r_j}}\langle\boldsymbol{f},\boldsymbol{\psi_{s,e}^{r_j}}\rangle\|_{2}\right)$ follows the standard normal distribution, then we use a similar approach as in Step 1, to show that $\Prob\left(\left(B_T\right)^{c}\right) \leq \frac{1}{12\sqrt{\pi}T}$. Therefore, Steps 1 and 2 lead to
\begin{equation}
\nonumber \Prob\left(A_T \cap B_T\right) \geq 1 - \frac{1}{6\sqrt{\pi}T}.
\end{equation}
{\textbf{Step 3:}} This is the main part of our proof, where we explain in detail how to get the result in \eqref{mainresult_theorem2}. For ease of understanding, we split this step into two smaller parts. From now on, we assume that $A_T$ and $B_T$ both hold. The constants we use are
\begin{equation}
\label{constantsPCM}
C_1 = \sqrt{C_3} + \sqrt{8}, C_2 = \frac{1}{\sqrt{6}} - \frac{2\sqrt{2}}{\underline{C}}, C_3 = 2(2\sqrt{2}+4)^2,
\end{equation}
where $\underline{C}$ is as in condition (A1).
\vspace{0.1in}
\\
{\textbf{Step 3.1:}} For ease of presentation, we take $\lambda_T \leq \delta_T/3$; see Remark \ref{remark_lambda_T} for comments in regards to the general case of $\lambda_T \leq \delta_T/m$, for an $m>1$. Allow us now $\forall j \in \left\lbrace 1,2, \ldots, N\right\rbrace$, to define the intervals 
\begin{equation}
\label{isolating_intervals}
I_{j}^R = \left[r_j + \frac{\delta_T}{3}, r_j + 2\frac{\delta_T}{3}\right), \qquad I_{j}^L = \left(r_j - 2\frac{\delta_T}{3}, r_j - \frac{\delta_T}{3}\right].
\end{equation}
In order for $I_j^R$ and $I_j^L$ to have at least one point, we actually implicitly require that $\delta_T > 3$, which is the case for sufficiently large $T$; see assumption (A1). Since the length of the intervals in \eqref{isolating_intervals} is equal to $\delta_T/3$ and $\lambda_T \leq \delta_T/3$, then ID ensures that for $K=\left\lceil T/\lambda_T  \right\rceil$ and $k,m \in \left\lbrace 1,2,\ldots,K\right\rbrace$, there exists at least one $c_k^r = k\lambda_T$ and at least one $c_m^l = T-m\lambda_T+1$ that are in $I_j^R$ and $I_j^L$, $\forall j =1,2,\ldots,N$. At the beginning of our algorithm, $s=1$, $e=T$ and depending on whether $r_1 \leq T - r_N$ then $r_1$ or $r_N$ will get isolated in a right- or left-expanding interval, respectively. W.l.o.g., assume that $r_1 \leq T - r_N$. As already mentioned, ID naturally ensures that $\exists k \in \left\lbrace 1,2,\ldots, K\right\rbrace$ such that $c_k^r \in I_1^R$. There is no other change-point in $[1,c_k^r]$ apart from $r_1$. We will show that for $\tilde{b} = {\rm argmax}_{1\leq t < c_k^r}\left|\tilde{X}_{1,c_k^r}^t\right|$, then $\left|\tilde{X}_{1,c_k^r}^{\tilde{b}}\right| > \zeta_T$. Using \eqref{A_T}, we have that
\begin{equation}
\label{thresholdpassing_firststep}
\left|\tilde{X}_{1,c_k^r}^{\tilde{b}}\right| \geq \left|\tilde{X}_{1,c_k^r}^{r_1}\right| \geq \left|\tilde{f}_{1,c_k^r}^{r_1}\right| - \sqrt{8 \log T}.
\end{equation} 
But,
\begin{align}
\label{midstep1}
\nonumber & \left|\tilde{f}_{1,c_k^r}^{r_1}\right|  = \left|\sqrt{\frac{c_k^r - r_1}{r_1 c_k^r}}r_1f_{r_1} - \sqrt{\frac{r_1}{c_k^r(c_k^r - r_1)}}(c_k^r - r_1)f_{r_1+1}\right| = \sqrt{\frac{(c_k^r - r_1)r_1}{c_k^r}}\Delta_1^f\\
& = \sqrt{\frac{(c_k^r - r_1)r_1}{(c_k^r-r_1)+r_1}}\Delta_1^f \geq \sqrt{\frac{(c_k^r - r_1)r_1}{2\max\left\lbrace c_k^r - r_1, r_1 \right\rbrace}}\Delta_1^f = \sqrt{\frac{\min\left\lbrace c_k^r - r_1, r_1 \right\rbrace}{2}}\Delta_1^f.
\end{align}
By the definition of $\delta_T$ and from our notation of $r_0 = 0$, we know that $r_1 \geq \delta_T$. In addition, since $c_k^r \in I_1^R$, then $\delta_T/3 \leq c_k^r - r_1 < 2\delta_T/3$, meaning that
\begin{equation}
\label{mindistance1}
\min\left\lbrace c_k^r - r_1, r_1 \right\rbrace \geq \frac{\delta_T}{3}.
\end{equation}
The result in \eqref{thresholdpassing_firststep}, the assumption (A1) and the application of \eqref{mindistance1} in \eqref{midstep1} yield
\begin{align}
\nonumber \left|\tilde{X}_{1,c_k^r}^{\tilde{b}}\right| & \geq \sqrt{\frac{\delta_T}{6}}\Delta_1^f - \sqrt{8\log T} \geq \sqrt{\frac{\delta_T}{6}}\underline{f}_T - \sqrt{8\log T}\\
\nonumber & = \left(\frac{1}{\sqrt{6}} - \frac{2\sqrt{2 \log T}}{\sqrt{\delta_T}\underline{f}_T}\right)\sqrt{\delta_T}\underline{f}_T \geq \left(\frac{1}{\sqrt{6}} - \frac{2\sqrt{2}}{\underline{C}}\right)\sqrt{\delta_T}\underline{f}_T\\
\nonumber & = C_2\sqrt{\delta_T}\underline{f}_T > \zeta_T.
\end{align}
Therefore, there will be an interval of the form $[1,c_{\tilde{k}}^r]$, with $c_{\tilde{k}}^r > r_1$, such that $[1,c_{\tilde{k}}^r]$ contains only $r_1$ and $\max_{1\leq b < c_{\tilde{k}}^r}\left|\tilde{X}_{1,c_{\tilde{k}}^r}^{b}\right|> \zeta_T$. Let us, for $k^* \in \left\lbrace 1,2,\ldots, K \right\rbrace$, to denote by $c_{k^*}^r \leq c_{\tilde{k}}^r$ the first right-expanding point where this happens and let $b_1 = {\rm argmax}_{1\leq t < c_{k^*}^r}\left|\tilde{X}_{1,c_{k^*}^r}^t\right|$ with $\left|\tilde{X}_{1,c_{k^*}^r}^{b_1}\right| > \zeta_T$. Our aim now is to find $\gamma_T>0$ such that for any $b^* \in \left\lbrace 1,2,\ldots,c_{k^*}^r - 1 \right\rbrace$ with $\left|b^*-r_1\right|\left(\Delta_1^f\right)^2 > \gamma_T$, we have that
\begin{equation}
\label{relationship_contradiction}
\left(\tilde{X}_{1,c_{k^*}^r}^{r_1}\right)^2 > \left(\tilde{X}_{1,c_{k^*}^r}^{b^*}\right)^2.
\end{equation}
Proving \eqref{relationship_contradiction} and using the definition of $b_1$ we can conclude that $|b_1-r_1|\left(\Delta_1^f\right)^2 \leq \gamma_T$. Now, since $X_t = f_t + \epsilon_t$, then \eqref{relationship_contradiction} can be expressed as
\begin{align}
\label{relationship_contradiction2}
\nonumber \left(\tilde{f}_{1,c_{k^*}^r}^{r_1}\right)^2 - \left(\tilde{f}_{1,c_{k^*}^r}^{b^*}\right)^2 > &\left(\tilde{\epsilon}_{1,c_{k^*}^r}^{b^*}\right)^2 - \left(\tilde{\epsilon}_{1,c_{k^*}^r}^{r_1}\right)^2\\
& + 2\left\langle\boldsymbol{\psi_{1,c_{k^*}^r}^{b^*}}\langle\boldsymbol{f},\boldsymbol{\psi_{1,c_{k^*}^r}^{b^*}}\rangle - \boldsymbol{\psi_{1,c_{k^*}^r}^{r_1}}\langle\boldsymbol{f},\boldsymbol{\psi_{1,c_{k^*}^r}^{r_1}}\rangle,\boldsymbol{\epsilon}\right\rangle.
\end{align}
W.l.o.g. assume that $b^* \geq r_1$ and a similar approach as below holds when $b^*<r_1$. Lemma \ref{lemma_from_NOT}, gives for the left-hand side of the inequality in \eqref{relationship_contradiction2} that
\begin{equation}
\label{Lambda_mean}
\left(\tilde{f}_{1,c_{k^*}^r}^{r_1}\right)^2 - \left(\tilde{f}_{1,c_{k^*}^r}^{b^*}\right)^2 = \frac{\left|b^* - r_1\right|r_1}{\left|b^* - r_1\right| + r_1}\left(\Delta_1^f\right)^2 := \Lambda.
\end{equation}
For the terms on the right-hand side of \eqref{relationship_contradiction2}, using \eqref{A_T} we obtain that
\begin{equation}
\nonumber \left(\tilde{\epsilon}_{1,c_{k^*}^r}^{b^*}\right)^2 - \left(\tilde{\epsilon}_{1,c_{k^*}^r}^{r_1}\right)^2 \leq \max_{s,e,b:s\leq b <e} \left(\tilde{\epsilon}_{s,e}^b\right)^2 - \left(\tilde{\epsilon}_{1,c_{k^*}^r}^{r_1}\right)^2 \leq \max_{s,e,b:s\leq b <e} \left(\tilde{\epsilon}_{s,e}^b\right)^2 \leq 8\log T,
\end{equation}
while from \eqref{B_T} and Lemma \ref{lemma_from_NOT},
\begin{align}
\nonumber & 2\left\langle\boldsymbol{\psi_{1,c_{k^*}^r}^{b^*}}\langle\boldsymbol{f},\boldsymbol{\psi_{1,c_{k^*}^r}^{b^*}}\rangle - \boldsymbol{\psi_{1,c_{k^*}^r}^{r_1}}\langle\boldsymbol{f},\boldsymbol{\psi_{1,c_{k^*}^r}^{r_1}}\rangle,\boldsymbol{\epsilon}\right\rangle\\
\nonumber & \leq 2\|\boldsymbol{\psi_{1,c_{k^*}^r}^{b^*}}<\boldsymbol{f},\boldsymbol{\psi_{1,c_{k^*}^r}^{b^*}}> - \boldsymbol{\psi_{1,c_{k^*}^r}^{r_1}}<\boldsymbol{f},\boldsymbol{\psi_{1,c_{k^*}^r}^{r_1}}>\|_{2}\sqrt{8\log T} = 2\sqrt{\Lambda}\sqrt{8\log T}.
\end{align}
Therefore \eqref{relationship_contradiction2} is satisfied if the stronger inequality $\nonumber \Lambda > 8\log T + 2\sqrt{\Lambda}\sqrt{8\log T}$ is satisfied, which has solution
\begin{equation}
\nonumber \Lambda > (2\sqrt{2}+4)^2\log T.
\end{equation}
From \eqref{Lambda_mean} and since $(\left|b^*-r_1\right|r_1)/(\left|b^*-r_1\right| + r_1) \geq \min\left\lbrace\left|b^*-r_1\right|,r_1\right\rbrace/2$, we deduce that \eqref{relationship_contradiction} is implied by
\begin{equation}
\label{relationship_contradiction3}
\min\left\lbrace\left|b^*-r_1\right|,r_1\right\rbrace > \frac{2(2\sqrt{2}+4)^2\log T}{\left(\Delta_1^f\right)^2} = \frac{C_3\log T}{\left(\Delta_1^f\right)^2}.
\end{equation}
However,
\begin{align}
\label{relationship_contradiction4}
\min\left\lbrace r_1,c_{k^*}^r-r_1 \right\rbrace > C_3\frac{\log T}{\left(\Delta_1^f\right)^2}
\end{align}
and this is because if we assume that $\min \left\lbrace r_1,c_{k^*}^r-r_1 \right\rbrace \leq C_3\log T/\left(\Delta_1^f\right)^2$, then
\begin{align}
\nonumber \left|\tilde{X}_{1,c_{k^*}^r}^{b_1}\right| & \leq \left|\tilde{f}_{1,c_{k^*}^r}^{r_1}\right| + \sqrt{8\log T} = \sqrt{\frac{(c_{k^*}^r - r_1)r_1}{c_{k^*}^r}}\Delta_1^f + \sqrt{8\log T}\\
\nonumber & \leq \sqrt{\min \left\lbrace c_{k^*}^r - r_1,r_1 \right\rbrace}\Delta_1^f + \sqrt{8\log T}\leq \left(\sqrt{C_3} + \sqrt{8}\right)\sqrt{\log T}\\
\nonumber & = C_1\sqrt{\log T} \leq \zeta_T.
\end{align}
This comes to a contradiction to $\left|\tilde{X}_{1,c_{k^*}^r}^{b_1}\right| > \zeta_T$. Therefore, \eqref{relationship_contradiction4} holds and \eqref{relationship_contradiction3} is restricted to $\left|b^* - r_1\right|\left(\Delta_1^f\right)^2 > C_3\log T$, which implies \eqref{relationship_contradiction}. Thus, we conclude that necessarily,
\begin{equation}
\label{distance1}
\left|b_1 - r_1\right|\left(\Delta_1^f\right)^2 \leq C_3 \log T.
\end{equation}
So far, for $\lambda_T \leq \delta_T/3$ we have proven that working under the assumption that $A_T$ and $B_T$ hold, there will be an interval $[1,c_{k^*}^r]$, with $\left|\tilde{X}_{1,c_{k^*}^r}^{b_1}\right|>\zeta_T$, where $b_1=\underset{1\leq t < c_{k^*}^r}{{\rm argmax}}\left|\tilde{X}_{1,c_{k^*}^r}^{t}\right|$ is an estimation of $r_1$ that satisfies \eqref{distance1}.
\vspace{0.1in}
\\
{\textbf{Step 3.2:}}  After detecting the first change-point, ID follows the same process as in Step 3.1 but in the set $[c_{k^*}^r,T]$, which contains $r_2, r_3, \ldots, r_N$. This means that we bypass, without checking for possible change-points, the interval $[b_1+1, c_{k^*}^r)$ and we need to prove that:
\begin{itemize}
\item[(S.1)] There is no change-point in $[b_1 + 1,c_{k^*}^r)$, apart from maybe the already detected $r_1$;
\item[(S.2)] $c_{k^*}^r$ is at a location which allows for detection of $r_2$.
\end{itemize}
{\textbf{For (S.1):}} We will split the explanation into two cases with respect to the location of $b_1$.\\
{\textbf{Case 1:}} $b_1 < r_1 < c_{k^*}^r$. Using \eqref{distance1} and imposing the condition
\begin{equation}
\label{condition_delta1}
\delta_T > 3C_3\frac{\log T}{\left(\Delta_1^f\right)^2},
\end{equation}
then since $c_{\tilde{k}}^r \in I_1^R$, we have that
\begin{equation}
\nonumber c_{k^*}^r - b_1 \leq c_{\tilde{k}}^r - b_1 = c_{\tilde{k}}^r - r_1 + r_1 - b_1 < 2\frac{\delta_T}{3} + r_1 - b_1 \leq 2\frac{\delta_T}{3} + \frac{C_3\log T}{\left(\Delta_1^f\right)^2} < \delta_T.
\end{equation}   
Since $r_2 - r_1 \geq \delta_T$ and $r_1$ is already in $[b_1+1,c_{k^*}^r)$, then there is no other change-point in $[b_1+1,c_{k^*}^r)$ apart from $r_1$. Actually, the result in \eqref{condition_delta1} is not an extra assumption and we will briefly explain the reason at the end of our proof.\\
{\textbf{Case 2:}} $r_1 \leq b_1 < c_{k^*}^r$. Since $c_{\tilde{k}}^r \in I_1^R$, then $c_{k^*}^r - r_1 \leq c_{\tilde{k}}^r - r_1 < 2\delta_T/3$, which means that apart from $r_1$ there is no other change-point in $[r_1,c_{k^*}^r)$. With $r_1 \leq b_1$, then $[b_1+1,c_{k^*}^r)$ does not have any change-point.

Cases 1 and 2 above show that no matter the location of $b_1$, there is no change-point in $[b_1+1,c_{k^*}^r)$ other than possibly the previously detected $r_1$. Similarly to the approach in Step 3.1, our method applied now in $[c_{k^*}^r,T]$,  will first isolate $r_2$ or $r_N$ depending on whether $r_2 - c_{k^*}^r$ is smaller or larger than $T-r_N$. If $T-r_N < r_2 - c_{k^*}^r$ then $r_N$ will get isolated first in a left-expanding interval and the procedure to show its detection is exactly the same as for the detection of $r_1$ in Step 3.1. Therefore, for the sake of showing (S.2) let us assume that $r_2 - c_{k^*}^r \leq T-r_N$.
\vspace{0.1in}
\\
{\textbf{For (S.2):}} With ${\rm R}_{s,e}$ as in \eqref{expanding_points_s_e}, there exists $c_{k_2}^r \in {\rm R}_{c_{k^*}^r,T}$ such that $c_{k_2}^r \in I_2^R$, with $I_{j}^R$ defined in \eqref{isolating_intervals}. We will show that $r_2$ gets detected in $[c_{k^*}^r,c_{k^*_2}^r]$, for $k_2^* \leq k_2$ and its detection is $b_2 = {\rm argmax}_{c_{k^*}^r \leq t < c_{k^*_2}^r}\left|\tilde{X}_{c_{k^*}^r,c_{k^*_2}^r}^t\right|$, which satisfies $\left|b_2 - r_2\right|\left(\Delta_2^f\right)^2 \leq C_3\log T$. Following similar steps as in \eqref{midstep1}, we have that for $\tilde{b}_2 = \underset{c_{k^*}^r \leq t < c_{k_2}^r}{\rm argmax}\left|\tilde{X}_{c_{k^*}^r,c_{k_2}^r}^t\right|$,
\begin{equation}
\label{midstepforr2}
\left|\tilde{X}_{c_{k^*}^r,c_{k_2}^r}^{\tilde{b}_2}\right| \geq \left|\tilde{f}_{c_{k^*}^r,c_{k_2}^r}^{r_2}\right| - \sqrt{8\log T} \geq \sqrt{\frac{\min\left\lbrace c_{k_2}^r-r_2,r_2-c_{k^*}^r+1 \right\rbrace}{2}}\Delta_2^f - \sqrt{8\log T}.
\end{equation}
By construction, $c_{k_2}^r - r_2 \geq \delta_T/3$ and
\begin{align}
\nonumber & r_2 - c_{k^*}^r + 1 \geq r_2 - c_{\tilde{k}}^r + 1 = r_2 - r_1 - (c_{\tilde{k}}^r - r_1) + 1 \geq \delta_T - (c_{\tilde{k}}^r - r_1) + 1\\
\nonumber & \qquad\qquad\;\quad > \delta_T - 2\frac{\delta_T}{3} + 1 > \frac{\delta_T}{3},
\end{align}
which means that $\min\left\lbrace c_{k_2}^r-r_2,r_2-c_{k^*}^r+1 \right\rbrace \geq (\delta_T/3)$ and therefore continuing from \eqref{midstepforr2}, 
\begin{align}
\nonumber \left|\tilde{X}_{c_{k^*}^r,c_{k_2}^r}^{\tilde{b}_2}\right| & \geq \sqrt{\frac{\delta_T}{6}}\Delta_2^f - \sqrt{8\log T} \geq \left(\frac{1}{\sqrt{6}} - \frac{2\sqrt{2}\sqrt{\log T}}{\sqrt{\delta_T}\underline{f}_T}\right)\sqrt{\delta_T}\underline{f}_T\\
\nonumber & \geq \left(\frac{1}{\sqrt{6}}-\frac{2\sqrt{2}}{\underline{C}}\right)\sqrt{\delta_T}\underline{f}_T = C_2\sqrt{\delta_T}\underline{f}_T >\zeta_T.
\end{align}
Therefore, for a $c_{\tilde{k}_2}^r \in {\rm{R}}_{c_{k^*}^r,T}$ we have shown that there exists an interval of the form $[c_{k^*}^r,c_{\tilde{k}_2}^r]$, with $\max_{c_{k^*}^r\leq b <c_{\tilde{k}_2}^r}\left|\tilde{X}_{c_{k^*}^r,c_{\tilde{k}_2}^r}^b\right| > \zeta_T$. Let us denote by $c_{k^*_2}^r \in {\rm{R}}_{c_{k^*}^r,T}$ the first right-expanding point where this occurs  and let $b_2 = {\rm argmax}_{c_{k^*}^r\leq t < c_{k^*_2}^r}\left|\tilde{X}_{c_{k^*}^r,c_{k^*_2}^r}^t\right|$ with $\left|\tilde{X}_{c_{k^*}^r,c_{k^*_2}^r}^{b_2}\right| > \zeta_T$.

We will now show that $\left|b_2 - r_2\right|\left(\Delta_2^f\right)^2 \leq C_3\log T$. Following exactly the same process as in Step 3.1 and assuming now w.l.o.g. that $b_2 < r_2$, we have that for $b^* \in \left\lbrace c_{k^*}^r,\ldots,c_{k^*_2}^r - 1\right\rbrace$, 
\begin{equation}
\label{r2_relationship_contradiction}
\left(\tilde{X}_{c_{k^*}^r,c_{k^*_2}^r}^{r_2}\right)^2 > \left(\tilde{X}_{c_{k^*}^r,c_{k^*_2}^r}^{b^*}\right)^2
\end{equation}
is implied by $\min \left\lbrace \left|b^* - r_2\right|,c_{k^*_2}^r - r_2 \right\rbrace > C_3\log T/\left(\Delta_2^f\right)^2$. In the same way as in Step 3.1 and by contradiction we can show that $\min \left\lbrace c_{k^*_2}^r - r_2, r_2 - c_{k^*}^r + 1\right\rbrace > C_3\log T/\left(\Delta_2^f\right)^2$
and \eqref{r2_relationship_contradiction} is implied by $\left|b^* - r_2\right|\left(\Delta_2^f\right)^2 > C_3\log T$. Therefore $\left|b_2 - r_2\right|\left(\Delta_2^f\right)^2 > C_3 \log T$ would mean that $\left|\tilde{X}_{c_{k^*}^r,c_{k^*_2}^r}^{r_2}\right| > \left|\tilde{X}_{c_{k^*}^r,c_{k^*_2}^r}^{b_2}\right|$, which is not true by the definition of $b_2$. Having said this, we conclude that $\left|b_2 - r_2\right|\left(\Delta_2^f\right)^2 \leq C_3\log T$.
Having detected $r_2$, then our algorithm will proceed in the interval $[s,e]=[c_{k^*_2}^r, T]$ and all the change-points will get detected one by one since Step 3.2 will be applicable as long as there are undetected change-points in $[s,e]$.

Denoting by $\hat{r}_j$ the estimation of $r_j$ as we did in the statement of the theorem, then we conclude that all change-points will get detected one by one and $\left|\hat{r}_j - r_j\right|\left(\Delta_j^f\right)^2 \leq C_3\log T, \quad \forall j \in \left\lbrace 1,2,\ldots,N \right\rbrace$. In addition, as one can see from \eqref{condition_delta1}, our process imposes that $\delta_T > 3C_3 \log T/\left(\Delta_j^f\right)^2, \forall j \in \left\lbrace 1,2,\ldots,N \right\rbrace$, which, by the definition of $\underline{f}_T$, is implied by
\begin{equation}
\label{condition_deltageneral}
\delta_T > 3\frac{C_3 \log T}{\underline{f}_T^2}.
\end{equation}
We will now explain why \eqref{condition_deltageneral} is not actually an extra assumption but it is implied by our assumption (A1), which requires $\delta_T \geq \underline{C}^2\log T/\underline{f}_T^2$. Proving that $\underline{C} > \sqrt{3C_3}$ would mean that indeed (A1) implies \eqref{condition_deltageneral}. Due to $C_1\sqrt{\log T} \leq \zeta_T < C_2\sqrt{\delta_T}\underline{f}_T$, we require $\underline{C}$ to be such that $\underline{C}C_2 > C_1$. Simple steps yield
\begin{equation}
\label{condition_underline_C}
\underline{C}C_2 > C_1 \Leftrightarrow \underline{C}\left(\frac{1}{\sqrt{6}}-\frac{2\sqrt{2}}{\underline{C}}\right) > \sqrt{C_3} + \sqrt{8} \Leftrightarrow \underline{C} > \sqrt{6}\left(\sqrt{C_3} + 4\sqrt{2}\right).
\end{equation}
We conclude that $\underline{C} > \sqrt{3C_3}$, meaning that \eqref{condition_deltageneral} is something already satisfied due to (A1).
\vspace{0.1in}
\\
{\textbf{Step 4:}} The arguments given in Steps 1-3 hold in $A_T \cap B_T$. At the beginning of the algorithm, $s=1, e=T$ and for $N\geq 1$, there exist $k_1\in \left\lbrace 1,2,\ldots, K \right\rbrace$ such that $s_{k_1} = s, e_{k_1} \in I_1^R$ and $k_2\in \left\lbrace 1,2,\ldots, K \right\rbrace$ such that $s_{k_2} \in I_N^L, e_{k_2} = e$. As in our previous steps, w.l.o.g. assume that $r_1 \leq T - r_N$ and $r_1$ gets isolated and detected first in an interval $[s,c_{k^*}^r]$, where $c_{k^*}^r \in {\rm{R}}_{1,T}$ and it is less than or equal to $e_{k_1}$. Then, $\hat{r}_1 = {\rm argmax}_{s \leq t < c_{k^*}^r}|\tilde{X}_{s,c_{k^*}^r}^t|$ is the estimated location for $r_1$ and $\left|r_1-\hat{r}_1\right|\left(\Delta_1^f\right)^2\leq C_3\log T$. After this, the method continues in $[c_{k^*}^r,T]$ and keeps detecting all the change-points as explained in Step 3. There will not be any double detection issues because naturally, at each step of the algorithm, the new interval $[s,e]$ does not include any previously detected change-points. Once all the change-points have been detected one by one, then $[s,e]$ will contain no other change-points. ID will keep checking for possible change-points in intervals of the form $\left[s,c_{\tilde{k}_1}^r\right]$ and $\left[c_{\tilde{k}_2}^l,e\right]$ for $c_{\tilde{k}_1}^r \in {\rm{R}}_{s,e}$ and $c_{\tilde{k}_2}^l \in {\rm{L}}_{s,e}$. We denote by $[s^*,e^*]$ any of these intervals. ID will not detect anything in $[s^*,e^*]$ since $\forall b \in [s^*,e^*)$,
\begin{equation}
\nonumber \left|\tilde{X}_{s^*,e^*}^{b}\right| \leq \left|\tilde{f}_{s^*,e^*}^{b}\right| + \sqrt{8\log T} = \sqrt{8\log T} < C_1\sqrt{\log T}\leq \zeta_T.
\end{equation}
After not detecting anything in all intervals of the above form, then the algorithm concludes that there are not any change-points in $[s,e]$ and stops.$\qquad\qquad\qquad\;\;\;\;\blacksquare$
\begin{remark}
\label{remark_lambda_T}
It is interesting to explore what happens when instead of $\lambda_T \leq \delta_T/3$, we use the more general case of $\lambda_T \leq \delta_T/m$, for $m > 1$. The adjustments need to be made are
\begin{itemize}[leftmargin=0.4in]
\item[(Adj.1)] Instead of the definition in \eqref{isolating_intervals}, we now have
\begin{align}
\nonumber & I_j^R = \left[r_j + \frac{(m-1)\delta_T}{2m}, r_j + \frac{(m+1)\delta_T}{2m}\right)\\
\nonumber & I_j^L = \left(r_j - \frac{(m+1)\delta_T}{2m}, r_j - \frac{(m-1)\delta_T}{2m}\right].
\end{align}
Note that the length of the above intervals is $\delta_T/m$, meaning that with probability one there will be at least one left and one right expanding point in each of them because the distance between two consecutive right (left) expanding points is $\lambda_T \leq \delta_T/m$.
\item[(Adj.2)] Instead of $C_2$ as in \eqref{constantsPCM}, we should now use that $C_2 = \sqrt{(m-1)/4m} - 2\sqrt{2}/\underline{C}$. This is easy to prove and it will not be shown here.
\item[(Adj.3)] In \eqref{condition_underline_C}, we give a lower bound for $\underline{C}$. Following similar steps, this now becomes
\begin{equation}
\nonumber \underline{C} > \sqrt{\frac{4m}{m-1}}\left(\sqrt{C_3} + 4\sqrt{2}\right).
\end{equation}  
\end{itemize}
We see from (Adj.3) that the higher the value of $m$, the smaller the lower bound will be, meaning that the assumption on $\underline{C}$ gets possibly relaxed for larger values of $m$. On the other hand, the results above hold for an expanding level of $\lambda_T \leq \delta_T/m$ and thus, we notice that the smaller the value of $m$, the larger the upper bound for the acceptable $\lambda_T$-values. Our choice of $m=3$ gives a more symmetric aspect to our approach as the length of the intervals $I_j^R$ and $I_j^L$ is the same as the minimum distance of their start- and end-points from possible change-points, which is $\delta_T/3$.
\end{remark}

We now proceed to prove the result in Theorem \ref{consistency_theorem_trend}. For the continuous piecewise-linear case, the contrast function values at $b$ for the observed data, the signal, and the noise are denoted by $C_{s,e}^b(\boldsymbol{X})$, $C_{s,e}^b(\boldsymbol{f})$ and $C_{s,e}^b(\boldsymbol{\epsilon})$, respectively. We have $\Delta_{j}^f = \left|2f_{r_{j}} - f_{r_j-1}- f_{r_j+1}\right|$ and as in the case of piecewise-constancy, $\underline{f}_T = \underset{j=1,2,\ldots,N}{\min}\Delta_{j}^f$. The contrast vector $\boldsymbol{\phi_{s,e}^b} = (\phi_{s,e}^b(1), \phi_{s,e}^b(2),\ldots, \phi_{s,e}^b(T))$ is defined through the contrast function
\begin{equation}
\nonumber \phi_{s,e}^b(t) = \begin{cases}
    \alpha_{s,e}^b\beta_{s,e}^b\left[(e+2b-3s+2)t - (be +bs - 2s^2+2s)\right], & t=s,\ldots,b,\\
    -\frac{\alpha_{s,e}^b}{\beta_{s,e}^b}\left[(3e-2b-s+2)t - (2e^2+2e-be-bs)\right], & t=b+1,\ldots,e,\\
    0, & \text{otherwise},
  \end{cases}
\end{equation}
where $\alpha_{s,e}^b = (6/[n(n^2-1)(1+(e-b+1)(b-s+1)+(e-b)(b-s))])^{1/2}$ and $\beta_{s,e}^b = ([(e-b+1)(e-b)]/[(b-s+1)(b-s)])^{1/2}$, with $n=e-s+1$. For any vector $\boldsymbol{v} = (v_1,v_2,\ldots,v_T)$, we have that $$\left|\langle \boldsymbol{v}, \boldsymbol{\phi_{s,e}^b}\rangle \right| = C_{s,e}^b(\boldsymbol{v}).$$
Towards the proof of Theorem \ref{consistency_theorem_trend}, we use Lemmas \ref{lemma_from_NOT2} and \ref{lemma_from_NOT3} given below.
\begin{lemma}
\label{lemma_from_NOT2}
Suppose $\boldsymbol{f} = \left(f_1,f_2,\ldots,f_T\right)^{\intercal}$ is piecewise-linear vector and $r_1,\ldots,r_N$ are the locations of the change-points. Suppose $1 \leq s < e \leq T$, such that $r_{j-1} \leq s < r_j < e \leq r_{j+1}$, for some $j = 1,2,\ldots, N$. Let $\eta = \min\left\lbrace r_j - s, e-r_j \right\rbrace$. Then,
\begin{equation}
\nonumber C_{s,e}^{r_j}(\boldsymbol{f}) = \max_{s<b<e}C_{s,e}^b(\boldsymbol{f}) \begin{cases}
    \geq \frac{1}{\sqrt{24}}\eta^{\frac{3}{2}}\Delta_j^f,\\
    \leq \frac{1}{\sqrt{3}}(\eta+1)^{\frac{3}{2}}\Delta_j^f, & .
  \end{cases}
\end{equation}
\end{lemma}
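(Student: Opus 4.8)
The plan is to reduce everything to a single geometric quantity by exploiting the orthonormal structure behind the contrast vectors. First I would observe that on $[s,e]$ the vector $\boldsymbol f$ is continuous piecewise-linear with a single interior kink, at $r_j$, since $r_{j-1}\le s<r_j<e\le r_{j+1}$. The space of such vectors on the grid $\{s,\ldots,e\}$ is three-dimensional, and by the Gram--Schmidt construction recalled before the statement it is spanned by the mutually orthonormal triple $1_{s,e}$, $\gamma_{s,e}$, $\phi_{s,e}^{r_j}$. Hence I may write $\boldsymbol f|_{[s,e]}=\mu\,1_{s,e}+\nu\,\gamma_{s,e}+\kappa\,\phi_{s,e}^{r_j}$ for scalars $\mu,\nu,\kappa$. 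Since every contrast vector $\phi_{s,e}^b$ is orthogonal to $1_{s,e}$ and $\gamma_{s,e}$ and has unit norm, $\langle\boldsymbol f,\phi_{s,e}^b\rangle=\kappa\langle\phi_{s,e}^{r_j},\phi_{s,e}^b\rangle$, so Cauchy--Schwarz gives $C_{s,e}^b(\boldsymbol f)=|\kappa|\,|\langle\phi_{s,e}^{r_j},\phi_{s,e}^b\rangle|\le|\kappa|$, with equality only at $b=r_j$ (two distinct unit contrast vectors, having kinks at different locations, cannot be parallel). This simultaneously proves the identity $C_{s,e}^{r_j}(\boldsymbol f)=\max_{s<b<e}C_{s,e}^b(\boldsymbol f)$ and identifies the common value as $|\kappa|$.

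Next I would compute $|\kappa|$. Because $\phi_{s,e}^{r_j}$ is orthogonal to all linear functions on $[s,e]$, subtracting from $\boldsymbol f$ the linear extension of its left piece leaves the inner product unchanged, and the remainder equals $(m_R-m_L)\,\tilde\phi_{s,e}^{r_j}$, where $m_L,m_R$ are the left and right slopes and $|m_R-m_L|=|2f_{r_j}-f_{r_j-1}-f_{r_j+1}|=\Delta_j^f$. Writing $P$ for the orthogonal projection onto the complement of $\mathrm{span}\{1_{s,e},\gamma_{s,e}\}$, so that $\phi_{s,e}^{r_j}=P\tilde\phi_{s,e}^{r_j}/\|P\tilde\phi_{s,e}^{r_j}\|_2$, this yields
\begin{equation}
\nonumber C_{s,e}^{r_j}(\boldsymbol f)=\Delta_j^f\,\bigl\|P\tilde\phi_{s,e}^{r_j}\bigr\|_2 .
\end{equation}
The two claimed inequalities are therefore equivalent to the purely deterministic estimate $\tfrac{1}{24}\eta^3\le\|P\tilde\phi_{s,e}^{r_j}\|_2^2\le\tfrac{1}{3}(\eta+1)^3$, with $\eta=\min\{r_j-s,\,e-r_j\}$.

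For the upper bound I would use the symmetrization $P(t-r_j)_+=P(r_j-t)_+$ (the two hinges differ by the linear term $t-r_j$, which $P$ annihilates), so the projected hinge may be represented using whichever arm is shorter. Dropping the nonnegative projected components gives $\|P\tilde\phi_{s,e}^{r_j}\|_2^2\le\sum_{k=1}^{\eta}k^2\le\tfrac{1}{3}(\eta+1)^3$, as required. The lower bound is the main obstacle. Here I would bound $\|P\tilde\phi_{s,e}^{r_j}\|_2^2=\min_{\ell\ \mathrm{linear}}\|\tilde\phi_{s,e}^{r_j}-\ell\|_{2,[s,e]}^2$ from below by restricting the sum to the symmetric window $J=[r_j-\eta,r_j+\eta]\subseteq[s,e]$, which can only decrease the residual: $\|P\tilde\phi_{s,e}^{r_j}\|_2^2\ge\min_{\ell}\|\tilde\phi_{s,e}^{r_j}-\ell\|_{2,J}^2$. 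On $J$ the hinge splits, in the centred variable $u=t-r_j$, as $u_+=\tfrac12 u+\tfrac12|u|$; the odd part $\tfrac12 u$ is absorbed by the linear fit, so the residual equals $\tfrac12(|u|-\overline{|u|})$, and evaluating the integer sums $\sum_{|u|\le\eta}|u|$ and $\sum_{|u|\le\eta}u^2$ produces exactly the constant $\tfrac{1}{24}$ in the lower bound. The delicate point, and the only computationally heavy ingredient, is that the projection onto linear functions cancels the bulk of the hinge's energy, so the lower bound genuinely requires controlling this cancellation precisely (the restriction to the balanced window is what both extracts the surviving order-$\eta^3$ curvature and pins down the sharp constant); this is the piecewise-linear analogue of Lemma \ref{lemma_from_NOT}.
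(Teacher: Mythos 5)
Your proof is correct, but it is not the paper's argument: for this lemma the paper gives no proof at all, deferring entirely to Lemma 5 of \cite{NOT_paper}, so your self-contained derivation is a genuinely different route. I checked its key steps and they hold. The decomposition $\boldsymbol f|_{[s,e]}=\mu\,1_{s,e}+\nu\,\gamma_{s,e}+\kappa\,\phi_{s,e}^{r_j}$ is legitimate because the continuity constraint gives $f_t=\mu_{j,1}+\mu_{j,2}t+(\mu_{j+1,2}-\mu_{j,2})\tilde\phi_{s,e}^{r_j}(t)$ on $[s,e]$ with $|\mu_{j+1,2}-\mu_{j,2}|=\Delta_j^f$; since every $\phi_{s,e}^b$ is a unit vector orthogonal to $1_{s,e}$ and $\gamma_{s,e}$, Cauchy--Schwarz yields $\max_{s<b<e}C_{s,e}^b(\boldsymbol f)=C_{s,e}^{r_j}(\boldsymbol f)=|\kappa|=\Delta_j^f\,\|P\tilde\phi_{s,e}^{r_j}\|_2$ (your parenthetical non-parallelism claim is not even needed, as equality at $b=r_j$ already pins down the maximum). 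The symmetrization $P(t-r_j)_+=P(r_j-t)_+$ is valid because the two hinges differ by the linear vector $t-r_j$, which $P$ kills, giving $\|P\tilde\phi_{s,e}^{r_j}\|_2^2\le\sum_{k=1}^{\eta}k^2=\eta(\eta+1)(2\eta+1)/6\le(\eta+1)^3/3$. For the lower bound, restricting the least-squares problem to the symmetric window $J=[r_j-\eta,r_j+\eta]\subseteq[s,e]$ and splitting $u_+=\tfrac12u+\tfrac12|u|$ (the odd part being absorbed into the affine fit, the best affine fit to the even part over a symmetric grid being its mean) gives
\begin{equation}
\nonumber \|P\tilde\phi_{s,e}^{r_j}\|_2^2\;\geq\;\frac14\left(\sum_{u=-\eta}^{\eta}u^2-(2\eta+1)\,\overline{|u|}^{\,2}\right)=\frac{\eta(\eta+1)(\eta^2+\eta+1)}{12(2\eta+1)}\;\geq\;\frac{\eta^3}{24},
\end{equation}
where the final inequality reduces to $3\eta^2+4\eta+2\geq 0$; since $s<r_j<e$ forces $\eta\geq1$, all steps are valid. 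What your route buys is a transparent proof whose only input is the Gram--Schmidt description of $\phi_{s,e}^b$ already stated in Section 2.2 of the paper, making the supplement self-contained; what the paper's citation buys is brevity and exact consistency of constants with \cite{NOT_paper}, where the result is stated in precisely this form.
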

\begin{proof}
See Lemma 5 from \cite{NOT_paper}.
\end{proof}
\begin{lemma}
\label{lemma_from_NOT3}
Suppose $\boldsymbol{f} = \left(f_1,f_2,\ldots,f_T\right)^{\intercal}$ is piecewise-linear vector. With $r_1,r_2,\ldots,r_N$ the locations of the change-points, suppose that $1 \leq s < e \leq T$, such that $r_{j-1}\leq s < r_j < e \leq r_{j+1}$ for some $j=1,2,\ldots,N$. Let $\rho = |r_j-b|$, $\Delta_j^f = \left|2f_{r_j}-f_{r_j-1} - f_{r_j+1}\right|$, $\eta_L= r_j-s$ and $\eta_R=e-r_j$. Then,
\begin{equation}
\nonumber \|\phi_{s,e}^b\langle\boldsymbol{f},\boldsymbol{\phi_{s,e}^b}\rangle - \boldsymbol{\phi_{s,e}^{r_j}}\langle\boldsymbol{f},\boldsymbol{\phi_{s,e}^{r_j}}\rangle\|_2^2 = \left(C_{s,e}^{r_j}(\boldsymbol{f})\right)^2 - \left(C_{s,e}^{b}(\boldsymbol{f})\right)^2.
\end{equation}
Furthermore,
\begin{enumerate}
\item for any $r_j \leq b < e$, $\left(C_{s,e}^{r_j}(\boldsymbol{f})\right)^2 - \left(C_{s,e}^{b}(\boldsymbol{f})\right)^2 \geq (1/63)\min(\rho,\eta_L)^3\left(\Delta_j^f\right)^2$;
\item for any $s < b \leq r_j$, $\left(C_{s,e}^{r_j}(\boldsymbol{f})\right)^2 - \left(C_{s,e}^{b}(\boldsymbol{f})\right)^2 \geq (1/63)\min(\rho,\eta_R)^3\left(\Delta_j^f\right)^2$.
\end{enumerate}
\end{lemma}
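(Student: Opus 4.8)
The plan is to dispatch the three claims in order: the displayed identity is a short orthogonality argument, while the two lower bounds reduce to lower-bounding one explicit rational function of the gap lengths.

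\textbf{The identity.} I would use two structural facts. First, by the very construction in \eqref{contrast_vectorCPLM}, the vector $\phi_{s,e}^{r_j}$ is the $\ell_2$-normalised Gram--Schmidt residual of the ramp $\tilde{\phi}_{s,e}^{r_j}$ against $\gamma_{s,e}$ and $1_{s,e}$, so $\{1_{s,e},\gamma_{s,e},\phi_{s,e}^{r_j}\}$ is an orthonormal triple, and \emph{every} $\phi_{s,e}^{b}$ is orthogonal to both $1_{s,e}$ and $\gamma_{s,e}$. Second, since $r_{j-1}\le s< r_j< e\le r_{j+1}$, the restriction of $\boldsymbol f$ to $[s,e]$ is continuous and piecewise-linear with a single kink at $r_j$, hence lies in $\mathrm{span}\{1_{s,e},\gamma_{s,e},\phi_{s,e}^{r_j}\}$ and equals $a\,1_{s,e}+c\,\gamma_{s,e}+\langle \boldsymbol f,\phi_{s,e}^{r_j}\rangle\,\phi_{s,e}^{r_j}$. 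Pairing this decomposition with $\phi_{s,e}^{b}$ and invoking the orthogonality kills the first two terms, giving the key relation $\langle \boldsymbol f,\phi_{s,e}^{b}\rangle=\langle \boldsymbol f,\phi_{s,e}^{r_j}\rangle\,\langle \phi_{s,e}^{r_j},\phi_{s,e}^{b}\rangle$. Expanding $\|\phi_{s,e}^{b}\langle \boldsymbol f,\phi_{s,e}^{b}\rangle-\phi_{s,e}^{r_j}\langle \boldsymbol f,\phi_{s,e}^{r_j}\rangle\|_2^2$, using $\|\phi_{s,e}^{b}\|_2=\|\phi_{s,e}^{r_j}\|_2=1$ and substituting this relation, the cross term combines with the $\langle \boldsymbol f,\phi_{s,e}^{b}\rangle^2$ term to leave exactly $\langle \boldsymbol f,\phi_{s,e}^{r_j}\rangle^2-\langle \boldsymbol f,\phi_{s,e}^{b}\rangle^2=(C_{s,e}^{r_j}(\boldsymbol f))^2-(C_{s,e}^{b}(\boldsymbol f))^2$, which is the first display.

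\textbf{Reduction of the bounds.} By the identity just proved it suffices to bound $\langle \boldsymbol f,\phi_{s,e}^{r_j}\rangle^2-\langle \boldsymbol f,\phi_{s,e}^{b}\rangle^2$ from below. Since each $\phi_{s,e}^{b}$ annihilates the constant and linear parts of $\boldsymbol f$ on $[s,e]$, these inner products are unchanged if $\boldsymbol f$ is replaced by $\Delta_j^f\,\tilde{\phi}_{s,e}^{r_j}$ (the slope change across $r_j$ being $\Delta_j^f$). Both quantities therefore factor as $(\Delta_j^f)^2$ times a purely geometric quantity depending only on the integer gaps $\eta_L=r_j-s$, $\eta_R=e-r_j$ and $\rho=|b-r_j|$. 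For part 1 one has $r_j\le b<e$, so automatically $\rho\le\eta_R$ and hence $\min(\rho,\eta_L)\le\min(\eta_L,\eta_R)$; this is the regime in which Lemma \ref{lemma_from_NOT2} supplies the companion scale $C_{s,e}^{r_j}(\boldsymbol f)\ge \tfrac{1}{\sqrt{24}}\min(\eta_L,\eta_R)^{3/2}\Delta_j^f$, a useful consistency check on the exponent $3$ and on the order of the constant.

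\textbf{The main obstacle.} The remaining step, and the genuinely laborious one, is to evaluate $\langle \tilde{\phi}_{s,e}^{r_j},\phi_{s,e}^{b}\rangle$ in closed form by substituting the piecewise expressions for $\phi_{s,e}^{b}$ together with $\alpha_{s,e}^{b}$ and $\beta_{s,e}^{b}$ from \eqref{contrast_vectorCPLM} and summing $\sum_t t$, $\sum_t t^2$ over the two index ranges. This produces a rational function of $s,e,b,r_j$ whose denominators carry the normalising products $(e-b)(b-s)$ and $n(n^2-1)$; clearing denominators reduces the desired inequality to a polynomial inequality in the nonnegative gaps $\rho,\eta_L$ (respectively $\rho,\eta_R$). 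I would organise the verification by splitting into the regimes $\rho\le\eta_L$ and $\rho>\eta_L$, in which the governing length in $\min(\rho,\eta_L)^3$ changes, and I would obtain part 2 (the branch $s<b\le r_j$) from part 1 by the reflection $t\mapsto s+e-t$, which swaps the roles of $\eta_L$ and $\eta_R$ and of the two pieces of $\phi_{s,e}^{b}$. The delicate point throughout is to track every constant so as to land on the stated, deliberately non-tight, value $1/63$; this is pure algebra, with no probabilistic input, and is where essentially all the effort lies.
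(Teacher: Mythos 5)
Your first display is argued correctly: the Gram--Schmidt construction in \eqref{contrast_vectorCPLM} makes every $\boldsymbol{\phi_{s,e}^b}$ a unit vector supported on $[s,e]$ and orthogonal to $1_{s,e}$ and $\gamma_{s,e}$, and since $[s,e]$ contains the single kink $r_j$, the restriction of $\boldsymbol{f}$ lies in ${\rm span}\left\{1_{s,e},\gamma_{s,e},\boldsymbol{\phi_{s,e}^{r_j}}\right\}$; hence $\langle\boldsymbol{f},\boldsymbol{\phi_{s,e}^b}\rangle=\langle\boldsymbol{f},\boldsymbol{\phi_{s,e}^{r_j}}\rangle\langle\boldsymbol{\phi_{s,e}^{r_j}},\boldsymbol{\phi_{s,e}^b}\rangle$, and expanding the squared norm with $\|\boldsymbol{\phi_{s,e}^b}\|_2=\|\boldsymbol{\phi_{s,e}^{r_j}}\|_2=1$ yields the identity. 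Your reduction of the inequalities to a purely geometric statement about $\Delta_j^f\,\tilde{\phi}_{s,e}^{r_j}$ is also sound (the sign ambiguity in the slope change is harmless since everything is squared), and the reflection $t\mapsto s+e-t$ does derive part 2 from part 1: modulo constant and linear terms it sends $\tilde{\phi}_{s,e}^{b}$ to $-\tilde{\phi}_{s,e}^{s+e-b}$, so it swaps $\eta_L$ and $\eta_R$ while preserving $\rho$.

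The genuine gap is that the two lower bounds --- the actual quantitative content of the lemma, with the exponent $3$ and the constant $1/63$ --- are never established. You reduce them to "a polynomial inequality in the nonnegative gaps" and then declare that verifying it "is where essentially all the effort lies"; that is, the proposal stops exactly at the step that constitutes the proof. This step is not routine bookkeeping: one must compute $\langle\tilde{\phi}_{s,e}^{r_j},\boldsymbol{\phi_{s,e}^b}\rangle$ as a rational function of $(s,e,b,r_j)$ via the sums $\sum_t t$ and $\sum_t t^2$ over the two branches, and then minorize $\langle\boldsymbol{f},\boldsymbol{\phi_{s,e}^{r_j}}\rangle^2-\langle\boldsymbol{f},\boldsymbol{\phi_{s,e}^b}\rangle^2$ uniformly over the regimes $\rho\le\eta_L$ and $\rho>\eta_L$ (respectively with $\eta_R$); it is only in that case analysis that the cube and a specific admissible constant emerge, and nothing in your text certifies that $1/63$ --- or any positive constant --- actually works. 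For comparison, the paper does not perform this algebra either: its entire proof of this lemma is the citation "See Lemma 7 from \cite{NOT_paper}", where precisely the computation you defer is carried out. So your route coincides with that of the cited source, but as written it is a correct proof outline with the decisive verification missing, not a proof.
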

\begin{proof}
See Lemma 7 from \cite{NOT_paper}, where the approach is similar as to the one for Lemma \ref{lemma_from_NOT}.
\end{proof}
The steps we follow for the proof of Theorem \ref{consistency_theorem_trend} are the same as those explained for the proof of Theorem \ref{consistency_theorem}.
\vspace{0.05in}
\\
\textbf{Proof of Theorem \ref{consistency_theorem_trend}}. We will prove the more specific result
\begin{equation}
\label{mainresult_theorem3}
\Prob\left(\hat{N} = N, \max_{j=1,2,\ldots,N}\left(\left|\hat{r}_j - r_j\right|\left(\Delta_j^f\right)^{\frac{2}{3}}\right) \leq C_3(\log T)^\frac{1}{3}\right) \leq 1 - \frac{1}{6\sqrt{\pi}T},
\end{equation}
which implies the result in \eqref{mainresult_theorem_trend}.\\
{\textbf{Steps 1 and 2}:} As in Theorem \ref{consistency_theorem}, let
\begin{align}
\nonumber & A^*_T = \left\lbrace \max_{s,b,e: 1\leq s \leq b < e \leq T}\left|C_{s,e}^b(\boldsymbol{X}) - C_{s,e}^b(\boldsymbol{f})\right|\leq \sqrt{8\log T} \right\rbrace.\\
\nonumber & B^*_T = \left\lbrace \max_{j=1,2,\ldots,N}\max_{\substack{r_{j-1}<s\leq r_j\\r_j < e \leq r_{j+1}\\s\leq b < e}}\frac{\left|\left\langle\boldsymbol{\phi_{s,e}^b}\langle\boldsymbol{f},\boldsymbol{\phi_{s,e}^b}\rangle - \boldsymbol{\phi_{s,e}^{r_j}}\langle\boldsymbol{f},\boldsymbol{\phi_{s,e}^{r_j}}\rangle,\boldsymbol{\epsilon}\right\rangle\right|}{\|\boldsymbol{\phi_{s,e}^b}\langle\boldsymbol{f},\boldsymbol{\phi_{s,e}^b}\rangle - \boldsymbol{\phi_{s,e}^{r_j}}\langle\boldsymbol{f},\boldsymbol{\phi_{s,e}^{r_j}}\rangle\|_{2}}\leq \sqrt{8 \log T}\right\rbrace.
\end{align}
The same reasoning as in the proof of Theorem \ref{consistency_theorem} leads to $\Prob\left(A^*_T\right) \geq 1-1/(12\sqrt{\pi}T)$ and $\Prob\left(B^*_T\right) \geq 1- 1/(12\sqrt{\pi}T)$. Therefore, Steps 1 and 2 lead to
\begin{equation}
\nonumber \Prob\left(A^*_T \cap B^*_T\right) \geq 1 - \frac{1}{6\sqrt{\pi}T}.
\end{equation}
{\textbf{Step 3:}} This is the main part of our proof, where we explain in detail how to get the result in \eqref{mainresult_theorem3}. From now on, we assume that $A_T^*$ and $B_T^*$ both hold. The constants we use are
\begin{equation}
\nonumber C_1 = \sqrt{\frac{2}{3}}C_3^{\frac{3}{2}} + \sqrt{8},\;\; C_2 = \frac{1}{3\sqrt{72}} - \frac{2\sqrt{2}}{C^*},\;\; C_3 = 63^{\frac{1}{3}}(2\sqrt{2}+4)^{\frac{2}{3}},
\end{equation}
where $C^*$ is as in assumption (A2).
\vspace{0.1in}
\\
{\textbf{Step 3.1:}} First, $\forall j \in \left\lbrace 1,2, \ldots, N\right\rbrace$, we define $I_{j}^R$ and $I_{j}^L$ as in \eqref{isolating_intervals}. At the beginning of our algorithm, $s=1$, $e=T$ and depending on whether $r_1 \leq T - r_N$ then $r_1$ or $r_N$ will get isolated first, respectively. W.l.o.g., assume that $r_1 \leq T - r_N$. Our aim is to first show that there will be at least an interval of the form $[1,c_{\tilde{k}}^r]$, for $\tilde{k} \in \left\lbrace 1,2,\ldots,K \right\rbrace$, which contains only $r_1$ and no other change-point, such that $\max_{1\leq b <c_{\tilde{k}}^r} C_{1, c_{\tilde{k}}^r}^b > \zeta_T$. Due to ID's nature, for $K = \left\lceil T/\lambda_T  \right\rceil$, then $\exists \tilde{k} \in \left\lbrace 1,2,\ldots, K\right\rbrace$ such that $c_{\tilde{k}}^r = \tilde{k}\lambda_T \in I_1^R$ and there is no other change-point in $[1,c_{\tilde{k}}^r]$ apart from $r_1$. We will now show that for $\tilde{b}_1 = {\rm argmax}_{1 < t < c_{\tilde{k}}^r}C_{1,c_{\tilde{k}}^r}^t(\boldsymbol{X})$, then $C_{1,c_{\tilde{k}}^r}^{\tilde{b}_1}(\boldsymbol{X}) > \zeta_T$. Firstly, we have that
\begin{equation}
\label{thresholdpassing_firststep_trend}
C_{1,c_{\tilde{k}}^r}^{\tilde{b}_1}(\boldsymbol{X}) \geq C_{1,c_{\tilde{k}}^r}^{r_1}(\boldsymbol{X}) \geq C_{1,c_{\tilde{k}}^r}^{r_1}(\boldsymbol{f}) - \sqrt{8 \log T}.
\end{equation} 
From Lemma \ref{lemma_from_NOT2}, we know that $C_{1,c_{\tilde{k}}^r}^{r_1}(\boldsymbol{f}) \geq 1/(\sqrt{24})\left(\min\left\lbrace r_1-1, c_{\tilde{k}}^r - r_1\right\rbrace \right)^{3/2}\Delta_1^f$.
Now, $r_1 -1 = r_1 - r_0 - 1 \geq \delta_T -1 > \delta_T/3$, because for continuous piecewise-linear signals we have that $\delta_T \geq 2$ for identifiability purposes. In addition, since $c_{\tilde{k}}^r \in I_1^R$, then $c_{\tilde{k}}^r - r_1 \geq \delta_T/3$, meaning that
\begin{equation}
\label{mindistance1_trend}
\min\left\lbrace c_{\tilde{k}}^r - r_1, r_1 -1 \right\rbrace \geq \frac{\delta_T}{3}.
\end{equation}
The result in \eqref{thresholdpassing_firststep_trend}, the assumption (A2) and \eqref{mindistance1_trend} yield
\begin{align}
\label{thresholdpassing_trend}
\nonumber C_{1,c_{\tilde{k}}^r}^{\tilde{b}_1}(\boldsymbol{X}) & \geq \frac{1}{\sqrt{24}}\left(\frac{\delta_T}{3}\right)^{3/2}\Delta_1^f - \sqrt{8\log T} \geq \frac{1}{\sqrt{24}}\left(\frac{\delta_T}{3}\right)^{3/2}\underline{f}_T - \sqrt{8\log T}\\
\nonumber & = \delta_T^{3/2}\underline{f}_T\left(\frac{1}{3\sqrt{72}} - \frac{2\sqrt{2\log T}}{\delta_T^{3/2}\underline{f}_T}\right) \geq \left(\frac{1}{3\sqrt{72}}- \frac{2\sqrt{2}}{C^*}\right)\delta_T^{3/2}\underline{f}_T\\
& = C_2\delta_T^{3/2}\underline{f}_T > \zeta_T.
\end{align}
Therefore, there will be an interval of the form $[1,c_{\tilde{k}}^r]$, with $c_{\tilde{k}}^r > r_1$, such that $[1,c_{\tilde{k}}^r]$ contains only $r_1$ and $\max_{1\leq b < c_{\tilde{k}}^r}C_{1,c_{\tilde{k}}^r}^{b}> \zeta_T$. Let us, for $k^* \in \left\lbrace 1,2,\ldots, K \right\rbrace$, denote by $c_{k^*}^r \leq c_{\tilde{k}}^r$ the first right-expanding point where this happens and let $b_1 = {\rm argmax}_{1\leq t < c_{k^*}^r}C_{1,c_{k^*}^r}^t$ with $C_{1,c_{k^*}^r}^{b_1} > \zeta_T$. Note that $b_1$ can not be an estimation of any other change-point as $[1,c_{k^*}^r]$ includes only $r_1$.

Our aim now is to find $\tilde{\gamma}_T > 0$ such that for any $b^* \in \left\lbrace 1,2,\ldots,c_{k^*}^r - 1 \right\rbrace$ with $\left|b^*-r_1\right|\left(\Delta_1^f\right)^{2/3} > \tilde{\gamma}_T$, we have
\begin{equation}
\label{relationship_contradiction_trend}
\left(C_{1,c_{k^*}^r}^{r_1}(\boldsymbol{X})\right)^2 > \left(C_{1,c_{k^*}^r}^{b^*}(\boldsymbol{X})\right)^2.
\end{equation}
Proving \eqref{relationship_contradiction_trend} and using the definition of $b_1$ we can conclude that $|b_1-r_1|\left(\Delta_1^f\right)^{2/3} \leq \tilde{\gamma}_T$. Since $X_t = f_t + \epsilon_t$, then \eqref{relationship_contradiction_trend} can be expressed as
\begin{align}
\label{relationship_contradiction2_trend}
\nonumber \left(C_{1,c_{k^*}^r}^{r_1}(\boldsymbol{f})\right)^2 - \left(C_{1,c_{k^*}^r}^{b^*}(\boldsymbol{f})\right)^2 > &\left(C_{1,c_{k^*}^r}^{b^*}(\boldsymbol{\epsilon})\right)^2 - \left(C_{1,c_{k^*}^r}^{r_1}(\boldsymbol{\epsilon})\right)^2\\
& + 2\left\langle\boldsymbol{\phi_{1,c_{k^*}^r}^{b^*}}\langle\boldsymbol{f},\boldsymbol{\phi_{1,c_{k^*}^r}^{b^*}}\rangle - \boldsymbol{\phi_{1,c_{k^*}^r}^{r_1}}\langle\boldsymbol{f},\boldsymbol{\phi_{1,c_{k^*}^r}^{r_1}}\rangle,\boldsymbol{\epsilon}\right\rangle.
\end{align}
W.l.o.g. assume that $b^* \geq r_1$ and a similar approach as below holds when $b^*<r_1$. We denote by
\begin{equation}
\nonumber \Lambda := \left(C_{1,c_{k^*}^r}^{r_1}(\boldsymbol{f})\right)^2 - \left(C_{1,c_{k^*}^r}^{b^*}(\boldsymbol{f})\right)^2
\end{equation}
and for the terms in the right-hand side of \eqref{relationship_contradiction2_trend}, we get that
\begin{align}
\nonumber \left(C_{1,c_{k^*}^r}^{b^*}(\boldsymbol{\epsilon})\right)^2 - \left(C_{1,c_{k^*}^r}^{r_1}(\boldsymbol{\epsilon})\right)^2 & \leq  \max_{s,e,b:s\leq b <e} \left(C_{s,e}^{b}(\boldsymbol{\epsilon})\right)^2 - \left(C_{1,c_k^r}^{r_1}(\boldsymbol{\epsilon})\right)^2 \leq 8\log T,
\end{align}
while from Lemma \ref{lemma_from_NOT3},
\begin{align}
\nonumber & 2\left\langle\boldsymbol{\phi_{1,c_{k^*}^r}^{b^*}}\langle\boldsymbol{f},\boldsymbol{\phi_{1,c_{k^*}^r}^{b^*}}\rangle - \boldsymbol{\phi_{1,c_{k^*}^r}^{r_1}}\langle\boldsymbol{f},\boldsymbol{\phi_{1,c_{k^*}^r}^{r_1}}\rangle,\boldsymbol{\epsilon}\right\rangle \\
\nonumber & \leq 2\|\boldsymbol{\phi_{1,c_{k^*}^r}^{b^*}}<\boldsymbol{f},\boldsymbol{\phi_{1,c_{k^*}^r}^{b^*}}> - \boldsymbol{\phi_{1,c_{k^*}^r}^{r_1}}<\boldsymbol{f},\boldsymbol{\phi_{1,c_{k^*}^r}^{r_1}}>\|_{2}\sqrt{8\log T}\\
\nonumber & = 2\sqrt{\Lambda}\sqrt{8\log T}.
\end{align}
Therefore \eqref{relationship_contradiction2_trend} is satisfied if the stronger inequality
\begin{equation}
\nonumber \Lambda > 8\log T + 2\sqrt{\Lambda}\sqrt{8\log T}
\end{equation}
is satisfied, which has solution
\begin{equation}
\label{step_middle_trend}
\Lambda > (2\sqrt{2}+4)^2\log T.
\end{equation}
Using Lemma \ref{lemma_from_NOT3}, we have that \eqref{step_middle_trend} is implied by
\begin{align}
\label{relationship_contradiction3_trend}
\nonumber & \frac{1}{63}\left(\min\left\lbrace \left|r_1 - b^*\right|,r_1 - 1 \right\rbrace\right)^3\left(\Delta_1^f\right)^2 > \left(2\sqrt{2} + 4\right)^2\log T\\
& \Leftrightarrow \min\left\lbrace \left|r_1 - b^*\right|,r_1 - 1 \right\rbrace >  \frac{\left(63\log T\right)^{1/3}(2\sqrt{2}+4)^{2/3}}{\left(\Delta_1^f\right)^{2/3}} = \frac{C_3\left(\log T\right)^{1/3}}{\left(\Delta_1^f\right)^{2/3}}.
\end{align}
However,
\begin{align}
\label{relationship_contradiction4_trend}
\min\left\lbrace r_1-1,c_{k^*}^r-r_1 \right\rbrace > 2^{1/3}C_3\frac{\left(\log T\right)^{1/3}}{\left(\Delta_1^f\right)^{2/3}} - 1
\end{align}
and this is because if we assume that $$\min \left\lbrace r_1-1,c_{k^*}^r-r_1 \right\rbrace \leq 2^{1/3}C_3\left(\log T\right)^{1/3}/\left(\Delta_1^f\right)^{2/3} - 1$$ yields
\begin{align}
\nonumber C_{1,c_{k^*}^r}^{b_1}(\boldsymbol{X}) & \leq C_{1,c_{k^*}^r}^{r_1}(\boldsymbol{f}) + \sqrt{8\log T} \leq \frac{1}{\sqrt{3}}\left(\min\left\lbrace r_1 -1, c_{k^*}^r - r_1 \right\rbrace + 1\right)^{3/2}\Delta_1^f + \sqrt{8\log T}\\
\nonumber & \leq \frac{1}{\sqrt{3}}\left(2^{1/3}C_3\frac{\left(\log T\right)^{1/3}}{\left(\Delta_1^f\right)^{2/3}}\right)^{3/2}\Delta_1^f + \sqrt{8\log T} = \sqrt{\frac{2}{3}}C_3^{3/2}\sqrt{\log T} + \sqrt{8\log T}\\
\nonumber & = \left(\sqrt{\frac{2}{3}}C_3^{3/2} + \sqrt{8}\right)\sqrt{\log T} = C_1\sqrt{\log T} \leq \zeta_T.
\end{align}
This comes to a contradiction to $C_{1,c_{k^*}^r}^{b_1}(\boldsymbol{X}) > \zeta_T$. Therefore, \eqref{relationship_contradiction4_trend} holds and for sufficiently large $T$,
\begin{equation}
\label{relationship_contradiction3_2_trend}
\min\left\lbrace r_1 -1,c_{k^*}^r - r_1 \right\rbrace > 2^{1/3}C_3\frac{\left(\log T\right)^{1/3}}{\left(\Delta_1^f\right)^{2/3}} - 1> C_3\frac{\left(\log T\right)^{1/3}}{\left(\Delta_1^f\right)^{2/3}}.
\end{equation}
From \eqref{relationship_contradiction3_2_trend} we deduce that \eqref{relationship_contradiction3_trend} is restricted to
\begin{equation}
\nonumber \left|r_1 - b^*\right| > C_3\frac{\left(\log T\right)^{1/3}}{\left(\Delta_1^f\right)^{2/3}}, 
\end{equation}
which implies \eqref{relationship_contradiction_trend}. Therefore, necessarily,
\begin{equation}
\label{distance1_trend}
\left|b_1 - r_1\right|\left(\Delta_1^f\right)^{2/3} \leq C_3 (\log T)^{1/3}.
\end{equation}
So far, for $\lambda_T \leq \delta_T/3$ we have proven that working under the sets $A_T^*$ and $B_T^*$, there will be an interval of the form $[1,c_{k^*}^r]$, with $C_{1,c_{k^*}^r}^{b_1} > \zeta_T$, where $b_1=\underset{1\leq t < c_{k^*}^r}{\rm argmax}\;C_{1,c_{k^*}^r}^{t}$ is an estimation of $r_1$ that satisfies \eqref{distance1_trend}.
\vspace{0.1in}
\\
{\textbf{Step 3.2:}}  After detecting the first change-point, ID follows the same process as in Step 3.1 in the set $[c_{k^*}^r,T]$, which contains $r_2, r_3, \ldots, r_N$. This means that we do not check for possible change-points in the interval $[b_1+1, c_{k^*}^r)$. Therefore, we need to prove that:
\begin{itemize}
\item[(S.1)] There is no other change-point in $[b_1 + 1,c_{k^*}^r)$, apart from possibly the already detected $r_1$;
\item[(S.2)] $c_{k^*}^r$ is at a location which allows for detection of $r_2$.
\end{itemize}
{\textbf{For (S.1):}} The approach is the same as the one in Step 3.2 in the proof of Theorem \ref{consistency_theorem} and will not be repeated here.

Similarly to the approach in Step 3.1, our method applied now to $[c_{k^*}^r,T]$,  will first detect $r_2$ or $r_N$ depending on whether $r_2 - c_{k^*}^r$ is smaller or larger than $T-r_N$. If $T-r_N < r_2 - c_{k^*}^r$ then $r_N$ will get isolated first and the procedure to show its detection is exactly the same as in Step 3.1 where we explained the detection of $r_1$. Therefore, w.l.o.g. and also for the sake of showing (S.2) let us assume that $r_2 - c_{k^*}^r \leq T-r_N$.
\vspace{0.1in}
\\
{\textbf{For (S.2):}}  With ${\rm{R}}_{s,e}$ as in \eqref{expanding_points_s_e}, there exists $c_{k_2}^r \in {\rm{R}}_{c_{k^*}^r,T}$ such that $c_{k_2}^r \in I_2^R$. We will show that $r_2$ gets detected in $[c_{k^*}^r,c_{k^*_2}^r]$, for $k_2^* \leq k_2$ and its detection is $b_2 = {\rm argmax}_{c_{k^*}^r \leq t < c_{k^*_2}^r}C_{c_{k^*}^r,c_{k^*_2}^r}^t(\boldsymbol{X})$, which satisfies $\left|b_2 - r_2\right|\left(\Delta_2^f\right)^{2/3} \leq C_3(\log T)^{1/3}$. Using again Lemma \ref{lemma_from_NOT3} and for $\tilde{b}_2 = {\rm argmax}_{c_{k^*}^r \leq t < c_{k_2}^r}C_{c_{k^*}^r,c_{k_2}^r}^t$, we have that
\begin{align}
\label{midstepforr2_trend}
\nonumber C_{c_{k^*}^r,c_{k_2}^r}^{\tilde{b}_2}(\boldsymbol{X}) & \geq C_{c_{k^*}^r,c_{k_2}^r}^{r_2}(\boldsymbol{X}) \geq C_{c_{k^*}^r,c_{k_2}^r}^{r_2}(\boldsymbol{f}) - \sqrt{8 \log T}\\
& \geq \frac{1}{\sqrt{24}}\left(\min \left\lbrace r_2 - c_{k^*}^r, c_{k_2}^r-r_2 \right\rbrace\right)^{3/2}\Delta_2^f - \sqrt{8\log T}.
\end{align}
By construction,
\begin{align}
\nonumber & c_{k_2}^r - r_2 \geq \frac{\delta_T}{3}\\
\nonumber & r_2 - c_{k^*}^r \geq r_2 - c_{k}^r = r_2 - r_1 - (c_k^r - r_1) \geq \delta_T - (c_k^r - r_1) > \delta_T - 2\frac{\delta_T}{3}  = \frac{\delta_T}{3},
\end{align}
which means that $\min\left\lbrace c_{k_2}^r-r_2,r_2-c_{k^*}^r \right\rbrace \geq \delta_T/3$. Therefore, continuing from \eqref{midstepforr2_trend} and using the exact same calculations as in \eqref{thresholdpassing_trend}, we have that
\begin{align}
\nonumber C_{c_{k^*}^r,c_{k_2}^r}^{\tilde{b}_2}(\boldsymbol{X}) \geq C_2\delta_T^{3/2}\underline{f}_T > \zeta_T.
\end{align}
Therefore, for a $c_{\tilde{k}_2}^r \in {\rm{R}}_{c_{k^*}^r,T}$ we have shown that there exists an interval of the form $[c_{k^*}^r,c_{\tilde{k}_2}^r]$, with $\max_{c_{k^*}^r\leq b <c_{\tilde{k}_2}^r}C_{c_{k^*}^r,c_{\tilde{k}_2}^r}^b > \zeta_T$. Let us denote by $c_{k^*_2}^r \in {\rm{R}}_{c_{k^*}^r,T}$ the first right-expanding point where this occurs  and let $b_2 = {\rm argmax}_{c_{k^*}^r\leq t < c_{k^*_2}^r}C_{c_{k^*}^r,c_{k^*_2}^r}^t$ with $C_{c_{k^*}^r,c_{k^*_2}^r}^{b_2} > \zeta_T$. We will now show that $\left|b_2 - r_2\right|\left(\Delta_2^f\right)^{2/3} \leq C_3\left(\log T\right)^{1/3}$. Following the same process as in Step 3.1 and assuming now that $b_2 < r_2$, we have that for $b^* \in \left\lbrace c_{k^*}^r, c_{k^*}^r+1,\ldots,c_{k^*_2}^r - 1\right\rbrace$, 
\begin{equation}
\label{r2_relationship_contradiction_trend}
\left(C_{c_{k^*}^r,c_{k^*_2}^r}^{r_2}(\boldsymbol{X})\right)^2 > \left(C_{c_{k^*}^r,c_{k^*_2}^r}^{b^*}(\boldsymbol{X})\right)^2
\end{equation}
is implied by $\min \left\lbrace \left|b^* - r_2\right|,c_{k_2}^r - r_2 \right\rbrace > C_3\left(\log T\right)^{1/3}/\left(\Delta_2^f\right)^{2/3}$. However, following the same procedure as in Step 3.1 we can show that for sufficiently large $T$, $$\min \left\lbrace c_{k_2}^r - r_2, r_2 - c_{k}^r \right\rbrace > C_3\frac{\left(\log T\right)^{1/3}}{\left(\Delta_2^f\right)^{2/3}}.$$ Thus, \eqref{r2_relationship_contradiction_trend} is implied by $\left|b^* - r_2\right|\left(\Delta_2^f\right)^{2/3} > C_3(\log T)^{1/3}$. Therefore, \linebreak $\left|b_2 - r_2\right|\left(\Delta_2^f\right)^{2/3} > C_3 (\log T)^{1/3}$ would necessarily mean that $C_{c_{k^*}^r,c_{k^*_2}^r}^{r_2}(\boldsymbol{X}) > C_{c_{k^*}^r,c_{k^*_2}^r}^{b_2}(\boldsymbol{X})$, which is not true by the definition of $b_2$. Having said this, we conclude that $\left|b_2 - r_2\right|\left(\Delta_2^f\right)^{2/3} \leq C_3\left(\log T\right)^{1/3}$.

Having detected $r_2$, then our algorithm will proceed in the interval $[s,e]=[c_{k^*_2}^r, T]$ and all the change-points will get detected one by one since Step 3.2 will be applicable as long as there are previously undetected change-points in $[s,e]$. Denoting by $\hat{r}_j$ the estimation of $r_j$ as we did in the statement of the theorem, then we conclude that all change-points will first get isolated and then detected one by one and
\begin{equation}
\nonumber \left|\hat{r}_j - r_j\right|\left(\Delta_j^f\right)^{2/3} \leq C_3\left(\log T\right)^{1/3}, \quad \forall j \in \left\lbrace 1,2,\ldots,N \right\rbrace.
\end{equation}
{\textbf{Step 4:}} The arguments given in Steps 1-3 hold in $A_T^* \cap B_T^*$.  At the beginning of the algorithm, $s=1, e=T$ and for $N\geq 1$, there exist $k_1\in \left\lbrace 1,2,\ldots, K \right\rbrace$ such that $s_{k_1} = s, e_{k_1} \in I_1^R$ and $k_2\in \left\lbrace 1,2,\ldots, K \right\rbrace$ such that $s_{k_2} \in I_N^L, e_{k_2} = e$. As in our previous steps, w.l.o.g. assume that $r_1 \leq T- r_N + 1$, meaning that $r_1$ gets isolated and detected first in an interval $[s,c_{k^*}^r]$, where $c_{k^*}^r \in {\rm{R}}_{1,T}$ and it is less than or equal to $e_{k_1}$. Then, $\hat{r}_1 = {\rm argmax}_{s \leq t < c_{k^*}^r}C_{s,c_{k^*}^r}^t(\boldsymbol{X})$ is the estimated location for $r_1$ and $\left|r_1-\hat{r}_1\right|\left(\Delta_1^f\right)^{2/3}\leq C_3\left(\log T\right)^{1/3}$. After this, the algorithm continues in $[c_{k^*}^r,T]$ and keeps detecting all the change-points as explained in Step 3. It is important to note that there will not be any double detection issues because naturally, at each step of the algorithm, the new interval $[s,e]$ does not include any previously detected change-points.

Once all the change-points have been detected one by one, then $[s,e]$ will have no other change-points in it. Our method will keep interchangeably checking for possible change-points in intervals of the form $\left[s,c_{\tilde{k}_1}^r\right]$ and $\left[c_{\tilde{k}_2}^l,e\right]$ for $c_{\tilde{k}_1}^r \in {\rm{R}}_{s,e}$ and $c_{\tilde{k}_2}^l \in {\rm{L}}_{s,e}$. Allow us to denote by $[s^*,e^*]$ any of these intervals. Our algorithm will not detect anything in $[s^*,e^*]$ since $\forall b \in [s^*,e^*)$,
\begin{equation}
\nonumber C_{s^*,e^*}^{b}(\boldsymbol{X}) \leq C_{s^*,e^*}^{b}(\boldsymbol{f}) + \sqrt{8\log T} = \sqrt{8\log T} < C_1\sqrt{\log T} \leq \zeta_T.
\end{equation}
After not detecting anything in all intervals of the above form, then the algorithm concludes that there are not any change-points in $[s,e]$ and stops.$\hfill\blacksquare$
\vspace{0.1in}
\\
{\raggedleft{\textbf{Brief discussion of the steps of the proof of Theorem \ref{consistency_theorem2}}}}\\
Before the thorough mathematical proof of Theorem \ref{consistency_theorem2}, we provide an informal explanation of the three main steps in our proof. The notation is as in the main paper with $\tilde{S}$ denoting the ordered set with the remaining and relabelled estimated change-points, $\tilde{r}_k$, after each estimation is removed. At the beginning of the change-point removal approach, $\tilde{S} = \left[\tilde{r}_1, \tilde{r}_2, \ldots, \tilde{r}_{J}\right]$. In Step 1 of the proof, we show that for each true change-point $r_j, j \in \left\lbrace 1,2,\ldots, N \right\rbrace$, there is at least one and at most four estimated change-points, $\tilde{r}_k, k \in \left\lbrace 1,2,\ldots, J\right\rbrace$ within a distance equal to $\tilde{C}(\log T)^{\alpha}/\left(\Delta_j^f\right)^2$, where $\tilde{C}>0$. In Step 2, we show that there are at most two estimated change-points between two consecutive true change-points. In Step 3, we prove that as the algorithm proceeds, then $\forall j \in \left\lbrace 1,2,\ldots, N\right\rbrace$, the only remaining change-point for $r_j$ is within a distance of $C_1(\log T)^{\alpha}/\left(\Delta_j^f\right)^2$ from $r_j$ and it cannot be removed whilst there are still more than $N$ estimated change-points in $\tilde{S}$. Step 4 shows that the sSIC penalty as defined in \eqref{sSIC}, proposes a solution with $\hat{N} = N$ estimated change-points.
\vspace{0.1in}
\\
\textbf{Proof of Theorem \ref{consistency_theorem2}}\\
Allow us first to denote by
\begin{equation}
\label{D_T}
D_T = \left\lbrace \max_{s,b,e: 1 \leq b \leq e \leq T}\left|\frac{1}{\sqrt{e-b+1}}\sum_{t=b}^e\epsilon_t\right|\leq \sqrt{6\log T} \right\rbrace.
\end{equation}
We will show that $\Prob\left(D_T\right) \geq 1-\sqrt{2}/(\sqrt{\pi}T)$. For $Z \sim \mathcal{N}(0,1)$, using the Bonferroni inequality we get that 
\begin{align}
\nonumber \Prob\left((D_T)^{c}\right) & = \Prob\left(\max_{s,b,e: 1 \leq b \leq e \leq T}\left|\frac{1}{\sqrt{e-b+1}}\sum_{t=b}^e\epsilon_t \right| > \sqrt{6\log T}\right)\\
\nonumber & \leq \sum_{1 \leq b \leq e \leq T}\Prob\left(\left|Z\right|>\sqrt{6\log T}\right) \leq T^2\Prob(|Z|>\sqrt{6\log T})\\
\nonumber & = 2T^2\Prob\left(Z>\sqrt{6 \log T}\right) \leq 2T^2\frac{\phi(\sqrt{6\log T})}{\sqrt{6\log T}} \leq \frac{\sqrt{2}}{\sqrt{\pi}T},
\end{align}
where $\phi(\cdot)$ is the probability density function of the standard normal distribution. Therefore, $\Prob (D_T) \geq 1 - \sqrt{2}/(\sqrt{\pi}T)$. With $A_T$ as in \eqref{A_T}, the work that follows is valid on the set $A_T \cap D_T$ with $\Prob\left(A_T \cap D_T\right) \geq 1 - 1/(12\sqrt{\pi}T) - \sqrt{2}/(\sqrt{\pi}T)$. We take $\tilde{\tilde{C}}$ from the main paper to be equal to $2\sqrt{2}$.\\
{\textbf{Step 1}:} 
When the algorithm moves from Part 1 to Part 2, as described in Subsection \ref{subsec:sSIC}, then we are under a structure described by the following three characteristics:\\
\textbf{(P1)} For $\tilde{C} > 0$, there is at least one estimation within a distance of $\tilde{C}(\log T)^{\alpha}/\left(\Delta_j^f\right)^2$ from $r_j$, $\forall j \in \left\lbrace 1,2,\ldots,N \right\rbrace$. We know that this is true at the beginning of Part 1 due to calling the ID algorithm with threshold $\zeta_T$. This continues to be the case when the algorithm proceeds to Part 2, because if $\tilde{r}_k$ is the last estimation within $\tilde{C}(\log T)^{\alpha}/\left(\Delta_j^f\right)^2$ from $r_j$, then $\tilde{r}_{k+1} - \tilde{r}_{k-1} > 2\tilde{C}(\log T)^{\alpha}/\left(\Delta_j^f\right)^2 = 2C^*(\log T)^{\alpha}$ and $\tilde{r}_k$ cannot be removed in Part 1.\\
\textbf{(P2)} For each $j=1,2,\ldots,N$, there are at most four estimated change-points within a distance of $\tilde{C}(\log T)^{\alpha}/\left(\Delta_j^f\right)^2$ from $r_j$. We can not have more than four estimations as if this was the case then at least three of them, let's denote them by $p_1, p_2, p_3$, would be either on the right or the left of $r_j$, which would then mean that both $CS(p_2) \leq 2\sqrt{2\log T}$ and $p_3 - p_1 \leq \tilde{C}(\log T)^{\alpha}/\left(\Delta_j^f\right)^2 = C^*(\log T)^{\alpha}$ are satisfied and therefore $p_2$ would have been removed in Part 1 of the algorithm as explained in Subsection \ref{subsec:sSIC} of the paper.\\
\textbf{(P3)} There is an unknown, possibly large, number of estimated change-points (which tends to infinity as $T$ goes to infinity) between any two true change-points, namely $r_j$ and $r_{j+1}$. This issue is solved in Part 2 of the algorithm.
\vspace{0.05in}
\\
{\textbf{Step 2:}} We are now in Part 2 of the algorithm as explained in Subsection \ref{subsec:sSIC}, which guarantees that the minimum distance between two estimated change-points is \linebreak $C^*(\log T)^{\alpha}$, and also that there exists $C \geq \tilde{C}$, such that there is at least one estimation within a distance of $C(\log T)^{\alpha}/\left(\Delta_j^f\right)^2$ from $r_j$. After that, in Part 2 we collect the triplets $(\tilde{r}_{j-1},\tilde{r}_j,\tilde{r}_{j+1})$ and we calculate ${\rm CS}(\tilde{r}_j)$ and for $m = {\rm argmin}_{j}\left\lbrace {\rm CS}(\tilde{r}_j) \right\rbrace$, if $CS(\tilde{r}_m) \leq 2\sqrt{2\log T}$, then $\tilde{r}_m$ is removed and the process is repeated for the remaining estimated change-points. By doing this it is easy to see that, first of all, between $r_j$ and $r_{j+1}$, $j=0,1,\ldots,N$ there will be at most two estimated change-points, since if there were more, then we would have triplets $(\tilde{r}_{j-1},\tilde{r}_j,\tilde{r}_{j+1})$ with ${\rm CS}(\tilde{r}_j) \leq 2\sqrt{2\log T}$ and $\tilde{r}_j$ would have been removed. Secondly, for each $j = 1,2,\ldots,N$ there is still at least an estimation within a distance of $C(\log T)^{\alpha}/\left(\Delta_j^f\right)^2$ from $r_j$. If there is exactly one estimated change-point in the area $r_j \pm C(\log T)^{\alpha}/\left(\Delta_j^f\right)^2$, namely $\tilde{r}_k$, then this cannot be removed in Part 2 of the algorithm because \linebreak $\min \left\lbrace \tilde{r}_{k} - \tilde{r}_{k-1}, \tilde{r}_{k+1} - \tilde{r}_k\right\rbrace > C^*(\log T)^{\alpha}$ and therefore for $C_4 > 0$,
\begin{align}
\nonumber \left|\tilde{X}_{\tilde{r}_{k-1},\tilde{r}_{k+1}}^{\tilde{r}_k}\right| & \geq \left|\tilde{f}_{\tilde{r}_{k-1},\tilde{r}_{k+1}}^{\tilde{r}_k}\right| - 2\sqrt{2\log T}\\
\nonumber & \geq C_4(\log T)^{\alpha/2} - 2\sqrt{2\log T},
\end{align}
which for sufficiently large $T$ is greater than $2\sqrt{2\log T}$. Therefore, $\tilde{r}_k$ will not be removed and there is at least one estimation within a distance of $C(\log T)^{\alpha}/\left(\Delta_j^f\right)^2$ from $r_j$, $\forall j = 1,2,\ldots, N$. This estimation will be in the set $\tilde{S}$ of estimated change-points that continue in Part 3 of the algorithm. 
\vspace{0.1in}
\\
{\textbf{Step 3:}} We are in Part 3 of the algorithm as explained in Subsection \ref{subsec:sSIC}. In this step, we will show that with $m = {\rm argmin}_{\tilde{r}_k \in \tilde{S}}\;CS(\tilde{r}_k)$, then once $CS(\tilde{r}_m) > 2\sqrt{2\log T}$ we will be at the stage where $\tilde{S}$ contains $N$ estimated change-points; one estimated change-point within a distance of $C_1(\log T)^{\alpha}/\left(\Delta_j^f\right)^2$ from each $r_j$, $\forall j \in \left\lbrace 1,2,\ldots,N \right\rbrace$, where $C_1 > 0$. W.l.o.g. let $\tilde{r}_m$ be between $r_j$ and $r_{j+1}$. From Step 2 we know that there is a finite number (no more than four) of estimated change-points in $[r_{j-1},r_{j+1}]$. It is straightforward that at the beginning of Part 3 we have at most $2N$ estimations and either of the following two cases is possible:\\
\textbf{Case 1:} $\tilde{r}_m$ is not the closest change-point to either the true change-point on its left ($r_j$ for a $j\in\left\lbrace 1,2,\ldots,N\right\rbrace$) or the true change-point on its right ($r_{j+1}$). Since, $\tilde{s}_m = \left\lfloor \left(\tilde{r}_{m-1} + \tilde{r}_m\right)/2 \right\rfloor + 1$ and $\tilde{e}_m = \left\lceil \left(\tilde{r}_{m} + \tilde{r}_{m+1}\right)/2 \right\rceil$, it is straightforward to see that necessarily $\tilde{s}_m$ is on the right of $r_j$ and $\tilde{e}_m$ is on the left of $r_{j+1}$. This means that there are not any true change-points in $\left[\tilde{s}_m, \tilde{e}_m\right]$ and because we are working in the set $A_T$, we have that
\begin{equation}
\nonumber CS\left(\tilde{r}_m\right) = \left|X_{\tilde{s}_m,\tilde{e}_m}^{\tilde{r}_m}\right| = \left|X_{\tilde{s}_m,\tilde{e}_m}^{\tilde{r}_m}  - f_{\tilde{s}_m,\tilde{e}_m}^{\tilde{r}_m}\right| \leq 2\sqrt{2\log T}.
\end{equation}
Therefore, $\tilde{r}_m$ will be removed from the set $\tilde{S}$.\\
\textbf{Case 2:} $\tilde{r}_m$ is the closest change-point to a true change-point, namely $r_j$ for a $j \in \left\lbrace 1,2,\ldots,N \right\rbrace$, and from what has been discussed in Steps 1 and 2, $\tilde{r}_m$ is within a distance of $C(\log T)^{\alpha}/\left(\Delta_j^f\right)^2$. If $CS(\tilde{r}_m) \leq 2\sqrt{2\log T}$, then there is at least another estimated change-point within a distance of $C_{m}(\log T)^{\alpha}/\left(\Delta_j^f\right)^2$ from $\tilde{r}_m$ (and therefore from $r_j$ too), where $C_{m}$ is a constant that does not depend on $T$. If this was not the case, then since we are working under $A_T$,
\begin{align}
\label{sic_proof1}
\nonumber \left|\tilde{X}_{\tilde{s}_m,\tilde{e}_m}^{\tilde{r}_m}\right| & \geq \left|\tilde{f}_{\tilde{s}_m,\tilde{e}_m}^{\tilde{r}_m}\right| - 2\sqrt{2 \log T}\\
\nonumber & \geq C^*_m(\log T)^{\alpha} - 2\sqrt{2 \log T} > 2\sqrt{2 \log T}
\end{align}
for a constant $C^*_m$ and for sufficiently large $T$. Therefore, $\tilde{r}_m$ would not get removed.
Since there are at most $2N$ estimations, then the constants $C_m$ are upper bounded by a general constant $C_1$ and therefore, when Part 3 terminates, each true change-point will have only one estimated change-point within the distance of $C_1(\log T)^{\alpha}/\left(\Delta_j^f\right)^2$. This is exactly the stage in our algorithm, where $J = N$, with $J$ denoting the number of estimated change-points in the set $\tilde{S}$. There are not any change-point identifiability issues, because $(\log T)^{\alpha} = o\left(\delta_T\underline{f}_T^2\right)$ and for $T$ large enough
\begin{equation}
\nonumber 2C_1(\log T)^{\alpha} < \delta_T\underline{f}_T^2.
\end{equation}
The algorithm will then proceed to Part 4 explained in Subsection \ref{subsec:sSIC} and each of the remaining estimated change-points will be removed one by one until $\tilde{S}$ is the empty set.  
\vspace{0.1in}
\\
{\textbf{Step 4:}} In this last step we will prove that the sSIC penalty indicates the solution obtained when Part 3 terminates, where the number of estimated change-points is equal to $N$. We have already explained in Section \ref{subsec:sSIC} of the paper that in the scenario of piecewise-constant mean signals,
\begin{equation}
\label{sSIC_proof}
{\rm sSIC}(j) = \frac{T}{2}\log \hat{\sigma}_j^2 + (j+1)(\log T)^{\alpha},
\end{equation} 
where for any candidate model $\mathcal{M}_{j}, j=0,1,\ldots,J$, we have \linebreak $\hat{f}_t^j = (\hat{r}_{j+1} - \hat{r}_j)^{-1}\sum_{k=r_{j}+1}^{\hat{r}_{j+1}}X_j$, for $\hat{r}_{j}+1 \leq t \leq \hat{r}_{j+1}$, and $\hat{\sigma}_j^2 = T^{-1}\sum_{t=1}^{T}(X_t - \hat{f}_t^j)^2$ is the maximum likelihood estimator of the residual variance associated with model $\mathcal{M}_j$. It has also been proven in \cite{Fryzlewicz_WBS} that in an interval $[s,e]$,
\begin{equation}
\nonumber \hat{\sigma}_{j-1}^2 - \hat{\sigma}_j^2 = \frac{\left(\tilde{X}_{s,e}^d\right)^2}{T},
\end{equation}
where $d \in \left\lbrace s,s+1,\ldots,e-1 \right\rbrace$. With $j=0,1,\ldots,J$ the number of estimated change-points related to $\mathcal{M}_j$, if $j > N$, it means that all the change-points have been detected (see explanation in Step 2) and therefore $\left|\tilde{X}_{s,e}^d\right| \leq 2\sqrt{2\log T}$ since we are working under $A_T$. Therefore, $\hat{\sigma}_{j-1}^2 - \hat{\sigma}_j^2 \leq 8\log T/T$. In addition, since we are working in the set $D_T$, we have that $\left|\hat{\sigma}_N^2 - \sigma^2\right|\leq C^*\log T/T$, for a positive constant $C^*$. Using these results, the definition of ${\rm sSIC}(j)$ in \eqref{sSIC_proof}, and a first order Taylor expansion, we conclude that
\begin{align}
\label{ineqsSIC1}
\nonumber {\rm sSIC}(j) - {\rm sSIC}(N) & = \frac{T}{2}\log\frac{\hat{\sigma}_j^2}{\hat{\sigma}_N^2} + (j-N)(\log T)^{\alpha} \\
\nonumber & = \frac{T}{2}\log\left(1 - \frac{\hat{\sigma}_N^2 - \hat{\sigma}_j^2}{\hat{\sigma}_N^2}\right) + (j-N)(\log T)^{\alpha}\\
\nonumber & \geq - \frac{T}{2}(1+w)\frac{\hat{\sigma}_N^2 - \hat{\sigma}_j^2}{\hat{\sigma}_N^2} + (j-N)(\log T)^{\alpha}\\
& \geq -K_1\log T + (j-N)(\log T)^{\alpha},
\end{align}
where $K_1$ and $w$ are positive constants. The lower bound in \eqref{ineqsSIC1} is positive for $T$ large enough. Now, if $j < N$, then from the proof of Theorem \ref{consistency_theorem}, we know that in the interval $[s,e]$, we have that $\left|\tilde{X}_{s,e}^d\right| \geq \tilde{C}_2\sqrt{\delta_T}\underline{f}_T$, leading to $\hat{\sigma}_{j-1}^2 - \hat{\sigma}_j^2 \geq \tilde{C}_2^2\delta_t\underline{f}_T^2/T$. Therefore, 
\begin{align}
\label{ineqsSIC2}
\nonumber {\rm sSIC}(j) - {\rm sSIC}(N) & = \frac{T}{2}\log\frac{\hat{\sigma}_j^2}{\hat{\sigma}_N^2} + (j-N)(\log T)^{\alpha} \\
\nonumber & = \frac{T}{2}\log\left(1 + \frac{\hat{\sigma}_j^2 - \hat{\sigma}_N^2}{\hat{\sigma}_N^2}\right) + (j-N)(\log T)^{\alpha}\\
\nonumber & \geq - \frac{T}{2}(1-w_2)\frac{\hat{\sigma}_j^2 - \hat{\sigma}_N^2}{\hat{\sigma}_N^2} + (j-N)(\log T)^{\alpha}\\
& \geq K_2\delta_T\underline{f}_T^2 + (j-N)(\log T)^{\alpha},
\end{align}
where $K_2$ and $w_2$ are positive constants. The lower bound in \eqref{ineqsSIC2} is positive for $T$ large enough because $(\log T)^{\alpha} = o\left(\delta_T\underline{f}_T^2\right)$. The results in \eqref{ineqsSIC1} and \eqref{ineqsSIC2} show that for $T$ large enough and on the set $A_T \cap D_T$, we have that ${\rm sSIC}(j) > {\rm sSIC}(N)$, for $j\neq N$. Therefore, ${\rm sSIC}(j)$ is minimized for $j=N$, showing that $\hat{N} = N$.$\hfill\blacksquare$
\vspace{0.1in}
\\
{\bf Proof of Corollary \ref{cor:bound}.} From now on, we denote by
\begin{equation}
\nonumber \tilde{\epsilon}_{s,e}^b = \sqrt{\frac{e-b}{n(b-s+1)}}\sum_{t=s}^{b}\epsilon_t - \sqrt{\frac{b-s+1}{n(e-b)}}\sum_{t=b+1}^{e}\epsilon_t,
\end{equation}
where $1\leq s \leq b < e\leq T$ and $n=e-s+1$. Using also the notation in \eqref{notation_Corollary}, we have
\begin{align}
\nonumber &\exists_{s,b,e}\,\, \left\lbrace\left(\tilde{\epsilon}_{s,e}^b\right)^2 > 3 \gamma\right\rbrace\\
\nonumber & \iff \exists_{s,b,e} \left\lbrace\vphantom{\frac{e-b}{n} \left(\tilde{\tilde{\epsilon}}_{s,b}\right)^2}\frac{e-b}{n} \left(\tilde{\tilde{\epsilon}}_{s,b}\right)^2 + \frac{b-s+1}{n} \left(\tilde{\tilde{\epsilon}}_{b+1,e}\right)^2\right.\\
\nonumber & \left. \qquad\qquad\qquad\qquad- 2 \frac{\{(e-b)(b-s+1)\}^{1/2}}{n}  \tilde{\tilde{\epsilon}}_{s,b} \tilde{\tilde{\epsilon}}_{b+1,e} > 3\gamma \vphantom{\frac{e-b}{n} \left(\tilde{\tilde{\epsilon}}_{s,b}\right)^2}\right\rbrace\\
\nonumber & \implies \exists_{s,b,e} \left\{\vphantom{\frac{e-b}{n} \left(\tilde{\tilde{\epsilon}}_{s,b}\right)^2}\frac{e-b}{n} \left(\tilde{\tilde{\epsilon}}_{s,b}\right)^2 + \frac{b-s+1}{n} \left(\tilde{\tilde{\epsilon}}_{b+1,e}\right)^2\right.\\
\nonumber & \left. \qquad\qquad\qquad\qquad + 2 \frac{\{(e-b)(b-s+1)\}^{1/2}}{n}  |\tilde{\tilde{\epsilon}}_{s,b}| |\tilde{\tilde{\epsilon}}_{b+1,e}| > 3\gamma \vphantom{\frac{e-b}{n} \left(\tilde{\tilde{\epsilon}}_{s,b}\right)^2}\right\}\\
\nonumber & \implies \exists_{s,b,e} \left\{ \frac{e-b}{n} \left(\tilde{\tilde{\epsilon}}_{s,b}\right)^2 > \frac{e-b}{n} 2\gamma \right\} \\
\nonumber & \hspace{20pt} \lor \,\, \exists_{s,b,e} \left\{ \frac{b-s+1}{n} \left(\tilde{\tilde{\epsilon}}_{b+1,e}\right)^2 > \frac{b-s+1}{n} 2\gamma \right\}\\
\nonumber & \hspace{20pt} \lor\,\, \exists_{s,b,e} \left\{ 2 \frac{\{(e-b)(b-s+1)\}^{1/2}}{n}  |\tilde{\tilde{\epsilon}}_{s,b}| |\tilde{\tilde{\epsilon}}_{b+1,e}| > \gamma\right\}\\
\nonumber & \implies \exists_{s,b} \left\{ \left(\tilde{\tilde{\epsilon}}_{s,b}\right)^2 > 2\gamma \right\} \lor  \exists_{b,e} \left\{ \left(\tilde{\tilde{\epsilon}}_{b+1,e}\right)^2 > 2\gamma  \right\}\\
\nonumber & \hspace{20pt} \lor\,\, \exists_{s,b,e} \left\{ 2 \frac{\{(e-b)(b-s+1)\}^{1/2}}{n}  \tilde{\tilde{\epsilon}}_{s,b} \tilde{\tilde{\epsilon}}_{b+1,e} > \gamma\right\}\\
\label{eq:main}
& \hspace{20pt} \lor\,\, \exists_{s,b,e} \left\{ - 2 \frac{\{(e-b)(b-s+1)\}^{1/2}}{n}  \tilde{\tilde{\epsilon}}_{s,b} \tilde{\tilde{\epsilon}}_{b+1,e} > \gamma\right\}.
\end{align}
Define
\begin{equation*}
\gamma' = \frac{\gamma n}{2 \{(e-b)(b-s+1)\}^{1/2}},
\end{equation*}
and consider any straight line in the $(\tilde{\tilde{\epsilon}}_{s,b}, \tilde{\tilde{\epsilon}}_{b+1,e})$-plane, defined by the equation $\alpha \tilde{\tilde{\epsilon}}_{s,b} + \beta
\tilde{\tilde{\epsilon}}_{b+1,e} = \lambda$, that is tangent to the curve $\tilde{\tilde{\epsilon}}_{s,b} \tilde{\tilde{\epsilon}}_{b+1,e} = \gamma'$ in the quadrant $(\tilde{\tilde{\epsilon}}_{s,b}, \tilde{\tilde{\epsilon}}_{b+1,e})
> (0,0)$. By elementary calculus, which we do not repeat here, the tangency is attained when $\lambda = 2\{  \gamma' \alpha \beta   \}^{1/2}$. By elementary geometry,
\begin{equation}
\label{eq:tangent}
\tilde{\tilde{\epsilon}}_{s,b} \tilde{\tilde{\epsilon}}_{b+1,e} > \gamma' \implies |\alpha \tilde{\tilde{\epsilon}}_{s,b} + \beta \tilde{\tilde{\epsilon}}_{b+1,e}| > 2\{  \gamma' \alpha \beta   \}^{1/2}.
\end{equation}
Consider the particular values of $(\alpha, \beta)$ given by
\[
\alpha = \left\{ \frac{b-s+1}{n}  \right\}^{1/2} \qquad \beta = \left\{  \frac{e-b}{n}  \right\}^{1/2},
\]
and note that by the definition of $\tilde{\tilde{\epsilon}}_{s,e}$ we have
\begin{equation}
\label{eq:ident}
\alpha \tilde{\tilde{\epsilon}}_{s,b} + \beta \tilde{\tilde{\epsilon}}_{b+1,e} = \left\{ \frac{b-s+1}{n}  \right\}^{1/2} \tilde{\tilde{\epsilon}}_{s,b} + \left\{  \frac{e-b}{n}  \right\}^{1/2} \tilde{\tilde{\epsilon}}_{b+1,e} = \tilde{\tilde{\epsilon}}_{s,e}.
\end{equation}
Therefore, from the definitions of $\gamma'$, $\alpha$ and $\beta$, the implication in (\ref{eq:tangent}) and the identity in (\ref{eq:ident}), we have
\begin{align}
\nonumber & 2 \frac{\{(e-b)(b-s+1)\}^{1/2}}{n}  \tilde{\tilde{\epsilon}}_{s,b} \tilde{\tilde{\epsilon}}_{b+1,e} > \gamma  \iff  \tilde{\tilde{\epsilon}}_{s,b} \tilde{\tilde{\epsilon}}_{b+1,e} > \gamma'\\
\label{eq:impl2}
& \implies |\tilde{\tilde{\epsilon}}_{s,e}| > 2\{  \gamma' \alpha \beta   \}^{1/2} \iff |\tilde{\tilde{\epsilon}}_{s,e}| > \{   2\gamma  \}^{1/2} \iff \left(\tilde{\tilde{\epsilon}}_{s,e}\right)^2 > 2\gamma.
\end{align}
Considering again (\ref{eq:main}), this implies 
\begin{align}
\nonumber & \exists_{s,b,e}\,\, \left\{\left(\tilde{\epsilon}_{s,e}^b\right)^2 > 3 \gamma\right\}\\
\label{eq:main:simpl}
& \implies \exists_{s,b} \left\{ \left(\tilde{\tilde{\epsilon}}_{s,b}\right)^2 > 2\gamma \right\}\,\,\lor\,\, \exists_{s,b,e} \left\{ - 2 \frac{\{(e-b)(b-s+1)\}^{1/2}}{n}  \tilde{\tilde{\epsilon}}_{s,b} \tilde{\tilde{\epsilon}}_{b+1,e} > \gamma\right\}.
\end{align}
By \eqref{assumption_corollary}, we have
\begin{align}
\nonumber & P\left(  \exists_{s,b,e} \left\{ - 2 \frac{\{(e-b)(b-s+1)\}^{1/2}}{n}  \tilde{\tilde{\epsilon}}_{s,b} \tilde{\tilde{\epsilon}}_{b+1,e} > \gamma\right\}    \right)\\
\label{eq:rotsym}
& \le P\left(  \exists_{s,b,e} \left\{ 2 \frac{\{(e-b)(b-s+1)\}^{1/2}}{n}  \tilde{\tilde{\epsilon}}_{s,b} \tilde{\tilde{\epsilon}}_{b+1,e} > \gamma\right\}    \right).
\end{align}
From (\ref{eq:impl2}), we have
\begin{equation}
\label{eq:bound2}
P\left(  \exists_{s,b,e} \left\{ 2 \frac{\{(e-b)(b-s+1)\}^{1/2}}{n}  \tilde{\tilde{\epsilon}}_{s,b} \tilde{\tilde{\epsilon}}_{b+1,e} > \gamma\right\}    \right) \le
P\left( \exists_{s,e}  \left(\tilde{\tilde{\epsilon}}_{s,e}\right)^2 > 2\gamma         \right)
\end{equation}
Therefore, from (\ref{eq:main:simpl}), and using (\ref{eq:rotsym}) and (\ref{eq:bound2}) in turn, we have
\begin{align}
\nonumber P(\exists_{s,b,e}\,\, \left(\tilde{\epsilon}_{s,e}^b\right)^2  > 3 \gamma) & \le P(\exists_{s,e} \left(\tilde{\tilde{\epsilon}}_{s,e}\right)^2 > 2\gamma)\\
\nonumber & \quad +  P\left(  \exists_{s,b,e} \left\{ 2 \frac{\{(e-b)(b-s+1)\}^{1/2}}{n}  \tilde{\tilde{\epsilon}}_{s,b} \tilde{\tilde{\epsilon}}_{b+1,e} > \gamma\right\}    \right)\\
& \le 2 P(\exists_{s,e} \left(\tilde{\tilde{\epsilon}}_{s,e}\right)^2 > 2\gamma),
\end{align}
Now, for any $\delta > 0$, taking $\gamma = \sigma^2(1+\delta)\log\,T$, we have that
\begin{equation}
\nonumber P\left( \exists_{s,b,e}\,\, \left(\tilde{\epsilon}_{s,e}^b\right)^2 > 3 \sigma^2 (1 + \delta) \log\,T   \right) \le
2 P\left(\exists_{s,e}\,\, \left(\tilde{\tilde{\epsilon}}_{s,e}\right)^2 > 2 \sigma^2 (1 + \delta) \log\,T   \right),
\end{equation}
and the statement is a consequence of Lemma 1 in \cite{Yao}. \hfill$\blacksquare$


\end{document}